\documentclass[11pt,onecolumn,draftcls]{IEEEtran}

\usepackage{amssymb, amsmath, amsthm, bbm, color}
\usepackage{mathrsfs}
\usepackage{subfig}
\usepackage{tikz}

\newcommand{\GF}{\mathrm{GF}}
\newcommand{\GL}{\mathrm{GL}}
\newcommand{\inn}{\mathrm{in}}
\newcommand{\ind}{\mathrm{ind}}
\newcommand{\outn}{\mathrm{out}}

\newcommand{\cp}{\mathrm{cp}}

\def\red{\textsc{red}}
\def\dim{\textsc{dim}}
\def\mindim{\textsc{mindim}}
\newcommand{\Fix}{\mathrm{Fix}}

\theoremstyle{plain}

\newtheorem{corollary}{Corollary}
\newtheorem{lemma}{Lemma}
\newtheorem{proposition}{Proposition}
\newtheorem{theorem}{Theorem}
\newtheorem*{claim*}{Claim}
\newtheorem{claim}{Claim}

\theoremstyle{definition}
\newtheorem{definition}{Definition}

\newtheorem{example}{Example}

\title{Reduction and Fixed Points of Boolean Networks and Linear Network Coding Solvability}
\author{Maximilien Gadouleau,~\IEEEmembership{Member, IEEE},\thanks{M. Gadouleau is with School of Engineering and Computing Sciences, Durham University, UK. \texttt{m.r.gadouleau@durham.ac.uk}}
 Adrien Richard,\thanks{A. Richard is with Laboratoire I3S, CNRS \& Universit\'e de Nice-Sophia Antipolis, France. \texttt{richard@unice.fr}}
and Eric Fanchon\thanks{E. Fanchon is with Universit\'e de Grenoble - CNRS, TIMC-IMAG UMR 5525, Grenoble, France. \texttt{eric.fanchon@imag.fr}}
\thanks{This work is partially supported by CNRS and The Royal Society through the International Exchanges Scheme grant {\em Boolean networks, network coding and memoryless computation.}}}


\begin{document}

\maketitle

\begin{abstract}
Linear network coding transmits data through networks by letting the intermediate nodes combine the messages they receive and forward the combinations towards their destinations. The solvability problem asks whether the demands of all the destinations can be simultaneously satisfied by using linear network coding. The guessing number approach converts this problem to determining the number of fixed points of coding functions $f:A^n\to A^n$ over a finite alphabet $A$ (usually referred to as Boolean networks if $A = \{0,1\}$)  with a given interaction graph, that describes which local functions depend on which variables. In this paper, we generalise the so-called reduction of coding functions in order to eliminate variables. We then determine the maximum number of fixed points of a fully reduced coding function, whose interaction graph has a loop on every vertex. Since the reduction preserves the number of fixed points, we then apply these ideas and results to obtain four main results on the linear network coding solvability problem. First, we prove that non-decreasing coding functions cannot solve any more instances than routing already does. Second, we show that triangle-free undirected graphs are linearly solvable if and only if they are solvable by routing. This is the first classification result for the linear network coding solvability problem. Third, we exhibit a new class of non-linearly solvable graphs. Fourth, we determine large classes of strictly linearly solvable graphs. 
\end{abstract}

\section{Introduction} \label{sec:intro}

\subsection{Background: network coding solvability and coding functions}

Network coding is a technique to transmit information through networks, which can significantly improve upon routing in theory \cite{ACLY00, YLCZ06}. At each intermediate node $v$, the received messages $x_{u_1}, \ldots, x_{u_k}$ are combined, and the combined message $f_v(x_{u_1},\ldots,x_{u_k})$ is then forwarded towards its destinations. The main problem is to determine which functions $f_v$ can transmit the most information. In particular, the \textbf{network coding solvability problem} tries to determine whether a certain network situation, with a given set of sources, destinations, and messages, is solvable, i.e. whether all messages can be transmitted to their destinations. This problem being very difficult, different techniques have been used to tackle it, including matroids \cite{DFZ06}, Shannon and non-Shannon inequalities for the entropy function \cite{DFZ07, Rii07}, error-correcting codes \cite{GR11}, and closure operators \cite{Gad13, Gad14}.  As shown in \cite{Rii07, Rii07a}, the solvability problem can be recast in terms of fixed points of (non-necessarily Boolean) networks.

\textbf{Boolean networks} have been used to represent a network of interacting entities as follows. A network of $n$ automata has a state $x= (x_1,\dots, x_n) \in \{0,1\}^n$, represented by a Boolean variable $x_i$ on each automaton $i$, which evolves according to a deterministic function $f = (f_1,\dots,f_n) : \{0,1\}^n \to \{0,1\}^n$, where $f_i : \{0,1\}^n \to \{0,1\}$ represents the update of the local state $x_i$. Boolean networks have been used to model gene networks \cite{Kau69, Tho73, TK01, Jon02}, neural networks \cite{MP43, Hop82, Gol85}, social interactions \cite{PS83, GT83} and more (see \cite{TD90, GM90}). Their natural generalisation where each variable $x_i$ can take more than two values in some finite alphabet $A$ has been investigated since this can be a more accurate representation of the phenomenon we are modelling \cite{TK01,KS08}. In order to avoid confusion, and despite the popularity of the term ``Boolean network,'' we shall refer to any function $f: A^n \to A^n$ as a \textbf{coding function}.

The structure of a coding function $f: A^n \to A^n$ can be represented via its \textbf{interaction graph} $G(f)$, which indicates which update functions depend on which variables. More formally, $G(f)$ has $\{1,\dots,n\}$ as vertex set and there is an arc from $j$ to $i$ if $f_i(x)$ depends essentially on $x_j$. In different contexts, the interaction graph is known--or at least well approximated--, while the actual update functions are not. One main problem of research on (non-necessarily Boolean) coding functions is then to predict their dynamics according to their interaction graphs. 

Among the many dynamical properties that can be studied, \textbf{fixed points} are crucial because they represent stable states; for instance, in the context of gene networks, they correspond to stable patterns of gene expression at the basis of particular biological processes. As such, they are arguably the property which has been the most thoroughly studied. The study of the number of fixed points and its maximisation in particular is the subject of a stream of work, e.g. in \cite{Rob86,ADG04a,RRT08,Ara08,Ric09,ARS14,GRR14}. In particular, a lot of literature is devoted to determining when a Boolean coding function admits multiple fixed points (see \cite{Ric13} for a survey).

The network coding solvability problem can be recast in terms of fixed points of coding functions as follows \cite{Rii07, Rii07a}. The so-called \textbf{guessing number} \cite{Rii07} of a digraph $G$ is the logarithm of the maximum number of fixed points over all coding functions $f$ whose interaction graph is a subgraph of $G$: $G(f) \subseteq G$. The guessing number is always upper bounded by the size of a minimum feedback vertex set of $G$; if equality holds, we say that $G$ is solvable and the coding function $f$ reaching this bound is called a solution. Then, a network coding instance $N$ is solvable if and only if some digraph $G_N$ (to be defined later) related to the instance $N$ is solvable.

\textbf{Linear network coding} is the most popular kind of network coding, where the intermediate nodes can only perform linear combinations of the packets they receive \cite{LYC03}. The network coding instance $N$ is then linearly solvable if and only if $G_N$ admits a linear solution. Many interesting classes of linearly solvable digraphs have been given in the literature (see \cite{Rii06, GR11}). However, as we shall explain in Section \ref{sec:NC_LS},  all the linearly solvable undirected graphs $G$ known so far are ``easily'' solved, because they are all vertex-full: the vertex set can be partitioned into $\alpha(G)$ cliques, where $\alpha(G)$ is the independence number of $G$ \cite{CM11}.

\subsection{Our approach and contribution}

Fixed points of coding functions and network coding are very closely linked; for instance \cite{GRR14} uses techniques from network coding and coding theory to derive bounds on the number of fixed points of specific coding functions. As such, in this paper we will derive \textbf{results of interest for both communities}. More precisely, we expand a new technique to study the number of fixed points of coding functions and we apply it to the solvability problem. Recently, \cite{NRTC11} introduced the reduction of coding functions in order to reduce the number of interacting automata while preserving some key dynamical properties. More precisely, for any loopless vertex $v$ of $G(f)$ the $v$-reduction of $f$ is obtained by evaluating $f_v$ and then replacing its expression instead of $x_v$ into all the other local functions $f_i$. The $v$-reduction notably preserves the number of fixed points \cite{NRTC11}. A very similar reduction procedure was proposed in the context of systems of differential equations \cite{KCA03}; this procedure is also based on variable elimination and preserves the number of fixed points.

In this paper, we generalise the concept of reduction of a coding function by a vertex in two fashions. We consider successive reductions vertex per vertex, and we prove in Theorem \ref{th:reduction_f} that this is equivalent to reducing all these vertices at once, provided that they induce an acyclic subgraph of the interaction graph. Since the reduction of a coding function has the same number of fixed points as the original coding function, we can then study the number of fixed points of fully reduced coding functions. We also introduce the concept of reduction of digraphs; again this can be done one vertex at a time or all at once, according to Theorem \ref{th:graph_reduction}. The interaction graph of a reduced coding function is then a subgraph of the reduction of its interaction graph. In particular, reducing an entire maximal acyclic set of a digraph yields a digraph with a loop on each vertex. Similarly, we can always successively reduce a coding function to one whose interaction graph has a loop on each vertex.  We then fully determine the maximum number of fixed points of coding functions for a given interaction graph with a loop on each vertex in Theorem~\ref{th:h_loops}.

We then apply this reduction approach to network coding solvability and derive four main results.
\begin{enumerate}
	\item We consider solvability by non-decreasing coding functions, which naturally extend routing. We show in Theorem \ref{th:monotone} that a digraph is solvable by a non-decreasing coding function if and only if it is solvable by routing. 
	
	\item We derive some important classification results for undirected graphs. We exhibit in Theorem \ref{th:solvable_not_vf} the first example of a non-vertex-full linearly solvable graph. We obtain in Theorem~\ref{th:SLS_weakly} a necessary condition for a graph $G$ to be strictly linearly solvable, i.e. to have a linear solution $f$ with $G(f) = G$. Using this condition, we then prove in Theorem \ref{th:triangle_free} that a triangle-free undirected graph is linearly solvable if and only if it is vertex-full; we also prove that all strictly linearly solvable complements of triangle-free graphs are vertex-full in Theorem \ref{th:alpha2}. For triangle-free graphs, our results indicate that the instance is linearly solvable if and only if it is solvable by routing; in other words, linear network coding does not help to solve these graphs. 
	
	\item Using Theorem \ref{th:SLS_weakly}, we exhibit in Theorem \ref{th:non_solvable_digraph} a new class of digraphs which are not linearly solvable. This is significant because few non-linearly solvable classes of digraphs are known so far, and proving non-linear solvability usually requires different techniques, such as graph entropy \cite{Rii06, CM11} or digraph closure \cite{Gad14}.
	
	\item We show in Theorem \ref{th:SLS_strongly} that a large class of digraphs are strictly linearly solvable. Strictly linearly solvable digraphs are not only interesting for some applications of coding functions (see Section \ref{sec:SLS}), but they also represent network coding instances where no arc is detrimental to the transmission of information.
\end{enumerate}

The rest of the paper is organised as follows. Section \ref{sec:network_reduction} studies the reduction of coding functions and the reduction of graphs and relates these two notions. Reductions of coding functions are then related to their fixed points in Section \ref{sec:fixed_points}. Finally, we apply the theory of coding function and graph reductions to the problem of linear network coding solvability in Section \ref{sec:NC_LS}.

\section{Reduction of coding functions} \label{sec:network_reduction}

\subsection{Definitions} \label{sec:network_reduction_definition}

We first review some concepts relating to coding functions. Let $V$ be a finite set, possibly empty, of cardinality $n$. Let $A$ be a finite set, referred to as the {\bf alphabet}, of cardinality $q \ge 2$; depending on the context, we will consider $A = \GF(q)$ or $A = \mathbb{Z}_q$ or $A = [q] := \{0,\dots,q-1\}$. Let $f : A^V \to A^V$
be a {\bf coding function} of dimension $\dim(f)=n$. We shall usually simplify notation and identify $A^V$ with $A^n$. We can then view $f: A^n \to A^n$ as $f = (f_1,\dots,f_n)$ where $f_v : A^n \to A$. For any $x \in A^n$ and any $I \subseteq V$, we also denote $x_{V \setminus I}$ as $x_{-I}$; we will usually identify a vertex $v$ with its corresponding singleton $\{v\}$.

A digraph with vertex set $V$ is a pair $G = (V,E)$ where $E \subseteq V^2$; we set $\dim(G)=|V| = n$. If $E$ is a symmetric set, we say that $G$ is undirected (i.e. we identify undirected and bidirected graphs). We associate with $f$ the digraph $G(f)$, referred to as the {\bf interaction graph} of $f$, defined by: the vertex
set is $V$; and for all $u,v\in V$, there exists an arc $(u,v)$ if and
only if $f_v$ depends essentially on $x_u$, i.e. there exist $x,y\in A^n$ that
only differ by $x_u\neq y_u$ such that $f_v(x)\neq f_v(y)$. We denote the set of all coding functions $f: A^n \to A^n$ for some $A$ of size $q$ with interaction graph $G$ as $F(G,q)$. 

We now review some basic concepts and introduce some notation for digraphs $G = (V,E)$ \cite{BG09a}. An {\bf induced subgraph} of $G$ is obtained by removing vertices of $G$; a {\bf spanning subgraph} of $G$ is obtained by removing arcs. If $I\subseteq V$, we denote by $G[I]$ the subgraph of $G$ induced by $I$, and we set $G\setminus I=G[V\setminus I]$. If $G[I]$ has no cycle, then we say that $I$ is an {\bf acyclic set}. An acyclic set $I = \{i_1,\dots, i_m\}$ can be sorted in {\bf topological order}, where $(i_k, i_l) \in G$ only if $k < l$. Thus if $I$ is an acyclic set of $G(f)$, then $f_{i_k}$ does not depend on the variables $i_l$ with $l>k$ and we can write $f_{i_k}(x)=f_{i_k}(x_{-I},x_{i_1},\dots,x_{i_{k-1}})$. The complement of an acyclic set is a {\bf feedback vertex set}. We denote the size of a minimum feedback vertex set of $G$ as $k(G)$ and the size of a maximum acyclic set of $G$ as $\alpha(G)$; we then have $\alpha(G) = n - k(G)$.

The {\bf in-neighbourhood} of a vertex $i$ in $G$ is denoted as $\inn_G(i) := \{u \in V : (u,i) \in G\}$; its {\bf in-degree} is $\ind_G(i) = |\inn_G(i)|$; when there is no ambiguity, we shall remove the dependence in $G$. The {\bf out-neighbourhood} and {\bf out-degree} are defined similarly. Paths and cycles are always supposed to be directed. If $s=(s_1,\dots, s_k)$ is a sequence of
distinct vertices of $G$, then $\{s\}=\{s_1,\dots,s_k\}$ denotes the {\bf support} of $s$. 

\begin{definition}[\cite{NRTC11}] \label{def:network_reduction}
For any $v \in V$ without a loop in $G(f)$, the {\bf $v$-reduction of $f$} is the coding function $f^{-v}:A^{V\setminus v}\to A^{V\setminus v}$, where for all $i \ne v$ and $x\in A^V$
$$
	f_i^{-v}(x_{-v}) := f_i(x_{-v}, f_v(x_{-v})).
$$
If $G(f)$ has a loop on $v$ then $f^{-v}=f$ by convention. 
\end{definition}

Thus $\dim(f^{-v})=\dim(f)-1$ if and only if $G(f)$ has no loop on $v$. Let $s=(s_1,s_2,\dots ,s_k)$ be a sequence of distinct vertices of $V$ of
length $|s|=k>0$. We write
\[
	f^{-s}=f^{-s_1s_2\dots s_k}=(f^{-s_1})^{-s_2})^{\dots})^{-s_{k}}.
\]
The sequence $s$ is a {\bf{reduction sequence of
\boldmath$f$\unboldmath}} if: $G(f)$ has no loop on $s_1$, and $G(f^{-s_1\dots
s_{r-1}})$ has no loop on $s_r$ 
for each $1<r\leq k$. So $s$ is a reduction sequence if and only if
\[
\dim(f^{-s})=\dim(f)-|s|.
\]
By convention the empty sequence $\epsilon$ is a reduction sequence, and $f^{-\epsilon}=f$.

\begin{definition} \label{def:cumulative}
Let $I = \{i_1,\dots,i_m\}$ be an acyclic set of $G(f)$ in topological order. We denote the {\bf cumulative $f$-coding function on $I$} as $F^I : A^{V\setminus I} \to A^{I}$ defined as
\begin{align*}
	F^I_{i_1} (x_{-I}) &:= f_{i_1}(x_{-I})\\
	F^I_{i_2} (x_{-I}) &:= f_{i_2}(x_{-I}, F^I_{i_1}(x_{-I}))\\
	&\vdots\\
	F^I_{i_m}(x_{-I}) &:= f_{i_m}( x_{-I}, F^I_{I \setminus i_m} (x_{-I}) ).
\end{align*}

The {\bf $I$-reduction of $f: A^{V} \to A^{V}$} is defined as the coding function $f^{-I}:A^{V\setminus I}\to A^{V\setminus I}$ such that
$$
	f_i^{-I}(x_{-I}) := f_i(x_{-I}, F^I(x_{-I})).
$$
\end{definition}

\begin{theorem} \label{th:reduction_f}
If $I$ is an acyclic set of $G(f)$, then any enumeration $s$ of $I$ is a reduction sequence of $f$ such that $f^{-s} = f^{-I}$.
\end{theorem}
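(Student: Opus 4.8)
The plan is to induct on $m=|I|$, with the heart of the argument being a self-consistency characterisation of the cumulative coding function $F^I$ that makes it manifestly independent of the chosen enumeration. Before the induction I would record two facts about $F^I$ that follow directly from its triangular definition together with the topological order of $I$. First, the fixed-point relation $F^I_v(x_{-I}) = f_v(x_{-I}, F^I(x_{-I}))$ holds for every $v \in I$, because $f_v$ depends only on $x_{-I}$ and on the variables of $I$ preceding $v$, so the ``later'' coordinates of $F^I$ are irrelevant. Second, $F^I$ is the \emph{unique} map $y : A^{V\setminus I} \to A^{I}$ satisfying $y_v = f_v(x_{-I}, y)$ for all $v \in I$, again by resolving the triangular system in topological order. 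Both facts are order-independent and will let me identify any candidate for the reduced function without tracking the precise substitution order.

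For the base case $m=0$ both sides equal $f$. For the inductive step, fix an enumeration $s=(s_1,\dots,s_m)$ of $I$. I would first note that $s_1$ carries no loop in $G(f)$, since a loop is a cycle and $I$ is acyclic, so $f^{-s_1}$ is defined and strictly lowers the dimension. The key structural lemma is that $I' := I \setminus s_1$ remains acyclic in $G(f^{-s_1})$: every arc of $G(f^{-s_1})$ is either an arc of $G(f)$ avoiding $s_1$ or is produced by a path $u \to s_1 \to i$ of $G(f)$, so any cycle inside $G(f^{-s_1})[I']$ would lift, by reinserting $s_1$ along the composed arcs, to a closed walk inside $G(f)[I]$, which is impossible since $I$ is acyclic. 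Granted this, the induction hypothesis applies to $f^{-s_1}$ and $I'$, showing that $(s_2,\dots,s_m)$ is a reduction sequence of $f^{-s_1}$ with $(f^{-s_1})^{-(s_2\cdots s_m)} = (f^{-s_1})^{-I'}$; together with the loop-free property of $s_1$ this proves $s$ is a reduction sequence of $f$ and that $f^{-s} = (f^{-s_1})^{-I'}$.

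It remains to identify $(f^{-s_1})^{-I'}$ with $f^{-I}$. Unwinding the definitions, $(f^{-s_1})^{-I'}_i(x_{-I}) = f_i(x_{-I}, \tilde y)$, where $\tilde y \in A^{I}$ is given by $\tilde y_{I'} = F'(x_{-I})$, the cumulative coding function of $f^{-s_1}$ on $I'$, and $\tilde y_{s_1} = f_{s_1}(x_{-I}, \tilde y_{I'})$. I would then verify that $\tilde y$ satisfies the self-consistency relation $\tilde y_v = f_v(x_{-I}, \tilde y)$ for every $v \in I$: for $v = s_1$ this is immediate since $f_{s_1}$ ignores its own loopless coordinate, and for $v \in I'$ it follows from the fixed-point relation for $F'$ combined with the defining identity $f^{-s_1}_v(z) = f_v(z, f_{s_1}(z))$. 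By the uniqueness recorded above, $\tilde y = F^I$, hence $(f^{-s_1})^{-I'} = f^{-I}$, completing the induction.

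The step I expect to be the main obstacle is the acyclicity-preservation lemma, i.e.\ controlling $G(f^{-s_1})$ well enough to guarantee that $I'$ stays acyclic; this is the only place where genuinely graph-theoretic rather than algebraic reasoning enters, and it is what licenses the inductive call. The remaining work is careful but routine substitution accounting, tamed throughout by phrasing everything through the order-independent self-consistency equation instead of comparing substitution orders directly.
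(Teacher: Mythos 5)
Your proof is correct, but it takes a genuinely different route from the paper's. The paper's argument is algebraic-combinatorial: it first establishes the commutation identity $f^{-uv} = f^{-vu}$ for a pair $\{u,v\}$ inducing an acyclic subgraph (a direct two-vertex substitution computation), then invokes the fact that adjacent transpositions generate the symmetric group to conclude that every enumeration of $I$ yields the same reduced function, and finally observes that the topological enumeration reproduces $f^{-I}$. Your argument instead inducts on $|I|$ by peeling off the first vertex of the given enumeration, and its two pillars are different: (i) a characterisation of the cumulative function $F^I$ as the \emph{unique} solution of the self-consistency system $y_v = f_v(x_{-I}, y)$, $v \in I$, which makes order-independence automatic rather than something to be checked transposition by transposition; and (ii) the acyclicity-preservation lemma that $I \setminus s_1$ remains acyclic in $G(f^{-s_1})$, which is what licenses the inductive call (this is essentially the single-vertex case of the paper's Proposition \ref{pro:subgraph1}, proved later and independently of Theorem \ref{th:reduction_f}, so there is no circularity). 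Each approach has its advantages: the paper's keeps all computation local to two vertices and yields Corollary \ref{cor:support_network} as an immediate byproduct, but it is terse about why the intermediate functions $h = f^{-s_1 \dots s_{k-2}}$ still satisfy the hypotheses of the commutation claim (no loops and at most one arc between $s_{k-1}$ and $s_k$ in $G(h)$); your route handles exactly this point explicitly via the interaction-graph containment and the lifting of cycles to closed walks, and it makes the "any enumeration is a reduction sequence" part of the statement a visible step of the induction rather than an implicit one. The fixed-point characterisation you isolate is also conceptually aligned with why reduction preserves fixed points (Proposition \ref{prop:fixed_points_reduction}), which is a pleasant bonus.
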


\begin{proof}
We prove that if there is no arc from $v$ to $u$ and no loop on either vertex, then $f^{-uv} = f^{-vu}$. 
By direct application of the reduction rule, we have for all $i \notin \{u,v\}$,
\begin{align*}
	f^{-uv}_i(x_{-uv})
	&=f^{-u}_i(x_{-uv}, f^{-u}_v(x_{-uv}))\\
	&=f_i(x_{-uv}, f^{-u}_v(x_{-uv}),f_u(x_{-uv}, f^{-u}_v(x_{-uv})))\\
	&=f_i(x_{-uv}, f_v(x_{-uv},f_u(x_{-uv})),f_u(x_{-uv}, f^{-u}_v(x_{-uv})))
\end{align*}
and since there is no arc from $v$ to $u$ we get
\begin{align*}
	f^{-uv}_i(x_{-uv})
	&=f_i(x_{-uv}, f_v(x_{-uv},f_u(x_{-uv})),f_u(x_{-uv})).
\end{align*}
Again by direct application of the reduction rule, we have for all $i \notin \{u,v\}$,
\begin{align*}
	f^{-vu}_i(x_{-uv})
	&=f^{-v}_i(x_{-uv}, f^{-v}_u(x_{-uv}))\\
	&=f_i(x_{-uv}, f^{-v}_u(x_{-uv}),f_v(x_{-uv}, f^{-v}_u(x_{-uv})))
\end{align*}
and since there is no arc from $v$ to $u$ we have $f^{-v}_u(x_{-uv})=f_u(x_{-uv})$ thus
\begin{align*}
	f^{-vu}_i(x_{-uv})
	&=f_i(x_{-uv}, f_u(x_{-uv}),f_v(x_{-uv}, f_u(x_{-uv}))).
\end{align*}
Thus $f^{-uv}_i=f^{-vu}_i$ and the claim is proved.

Let $I = \{i_1,\dots,i_m\}$ in topological order and let $s$ and $t$ be enumerations of $I$. Firstly, suppose that $s$ and $t$ only differ by  a transposition of adjacent vertices, say $s = (s_1,\dots, s_m)$ and $t =(s_1,\dots,s_{k-2},\allowbreak s_k,s_{k-1},s_{k+1},\dots, s_m)$. We then have 
$$
	f^{-s_1 \dots s_k} = h^{-s_{k-1} s_k} = h^{-s_k s_{k-1}} = f^{-t_1 \dots t_k},
$$
where $h = f^{-s_1 \dots s_{k-2}}$, and hence $f^{-s} = f^{-t}$. Secondly, in the general case, it is well known that $t$ can be obtained from $s$ by transposing adjacent vertices: indeed the Coxeter generators of $I$ generate the symmetric group on $I$. Thus $f^{-s} = f^{-t}$; in particular, if $s$ is a topological order of $I$, then we obtain $f^{-I}$ described above.
\end{proof}

\begin{corollary}\label{cor:support_network}
If $s$ and $t$ are two reduction sequences of $f$ with the same acyclic support then $f^{-s} = f^{-t}$.
\end{corollary}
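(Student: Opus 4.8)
The plan is to reduce the statement directly to Theorem~\ref{th:reduction_f}, which already isolates the order-independence at the heart of the matter. First I would set $I = \{s\} = \{t\}$, the common support of the two sequences. By hypothesis this support is acyclic, so $I$ is an acyclic set of $G(f)$ and Theorem~\ref{th:reduction_f} applies to it.

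Next I would observe that, since a reduction sequence is by definition a sequence of \emph{distinct} vertices, each of $s$ and $t$ is simply an ordering of the elements of $I$, i.e.\ an enumeration of $I$ in the sense of Theorem~\ref{th:reduction_f}. (In fact the hypothesis that $s$ and $t$ are reduction sequences is somewhat redundant here, since Theorem~\ref{th:reduction_f} already guarantees that every enumeration of an acyclic set is automatically a reduction sequence.) Applying the theorem to the enumeration $s$ gives $f^{-s} = f^{-I}$, and applying it to $t$ gives $f^{-t} = f^{-I}$; chaining these two equalities yields $f^{-s} = f^{-t}$, as required.

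There is no genuine obstacle in this argument: all the combinatorial work---establishing that any two orderings of an acyclic set produce the same reduced coding function---has already been carried out in the proof of Theorem~\ref{th:reduction_f}, through the adjacent-transposition identity $f^{-uv}=f^{-vu}$ together with the fact that the Coxeter generators generate the symmetric group on $I$. The corollary merely repackages that order-independence in terms of supports rather than topological orders, so the one thing to verify is the trivial remark that two reduction sequences sharing the same support are precisely two enumerations of one and the same acyclic set.
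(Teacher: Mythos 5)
Your proof is correct and matches the paper's intended argument: the corollary is stated without a separate proof precisely because it follows immediately from Theorem~\ref{th:reduction_f} applied to the common acyclic support $I$, giving $f^{-s} = f^{-I} = f^{-t}$ exactly as you do. Your side remark is also accurate: once the support is assumed acyclic, the theorem guarantees every enumeration of it is a reduction sequence, so that hypothesis on $s$ and $t$ carries no extra content.
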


A coding function $h$ is a {\bf reduced form of
{\boldmath$f$\unboldmath}} if there exists a reduction sequence $s$
such that $f^{-s}=h$. A {\bf minimal reduced form of
{\boldmath$f$\unboldmath}} is a reduced form $h$ such that every
vertex of $G(h)$ has a loop. The set of reduced forms of $f$ is
denoted $\red(f)$. We are particularly interested in finding, according to $G(f)$, reduced forms of dimension as small as possible. In the ideal case, we
would like to obtain reduced forms of dimension 
\[
	\mindim(f) := \min_{h\in\red(f)}\dim(h).
\]

\subsection{Graph reduction} \label{sec:graph_reduction}

\begin{definition} \label{def:G-v}
If $G$ has no loop on $v$, we call
{\bf\boldmath$v$-reduction of $G$\unboldmath}, and we denote by
$G^{-v}$, the graph obtained from $G\setminus v$ by adding an arc $(u,w)$ (not already present) whenever $(u,v)$ and $(v,w)$ are arcs of $G$. By convention, if $G$ has a loop on $v$, then $G^{-v}=G$. 
\end{definition}

We shall use similar notation to that of the reduction of coding functions. A sequence $s=(s_1,\dots,s_k)$ of vertices of $G$ is a {\bf{reduction sequence of
\boldmath$G$\unboldmath}} if: $G$ has no loop on $s_1$, and $G^{-s_1\dots
s_{r-1}}$ has no loop on $s_r$ for each $1<r\leq k$. So $s$ is a reduction sequence if and only if $G^{-s}$ has $|V|-k$ vertices. 


\begin{definition} \label{def:G-I}
For any acyclic set $I$ of $G$, the {\bf $I$-reduction of $G$} is the digraph $G^{-I} := (V \setminus I, E')$, where $(u,w) \in E'$ if and only if either $(u,w) \in E$ or there is a path in $G$ from $u$ to $w$ through $I$ (that is, a path from $u$ to $w$ whose internal vertices are all in $I$).
\end{definition}

\begin{theorem} \label{th:graph_reduction}
If $I$ is an acyclic set of $G$, then any enumeration $s$ of $I$ is a reduction sequence of $G$ such that $G^{-s} = G^{-I}$.
\end{theorem}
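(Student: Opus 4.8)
The plan is to prove the statement by induction on $m=|I|$, establishing simultaneously that every enumeration $s$ of $I$ is a reduction sequence of $G$ and that the resulting digraph $G^{-s}$ equals the digraph $G^{-I}$ of Definition \ref{def:G-I}; since the latter does not depend on the enumeration, order-independence will come for free. The base case $m=0$ is immediate, as $G^{-\epsilon}=G=G^{-\emptyset}$. For the inductive step I would peel off the first vertex $v=s_1$ and write $G^{-s}=(G^{-v})^{-(s_2,\dots,s_m)}$. Note first that $v$ has no loop in $G$: a loop is a cycle of length one, so $v\in I$ together with $G[I]$ acyclic forbids it, and $G^{-v}$ is well defined.

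Two facts then make the induction go through. The first is that $I\setminus v$ is an acyclic set of $G^{-v}$. By Definition \ref{def:G-v}, every arc $(a,b)$ of $G^{-v}$ with $a,b\in I\setminus v$ is either an arc of $G$ or comes from a pair $(a,v),(v,b)\in G$; in both cases it can be realised by a walk from $a$ to $b$ inside $G[I]$. Hence any cycle of $G^{-v}[I\setminus v]$ would lift to a nontrivial closed walk, and therefore a cycle, of $G[I]$, which is impossible; in particular no loop appears on any vertex of $I\setminus v$. This lets me apply the induction hypothesis to $G^{-v}$ and the acyclic set $I\setminus v$: every enumeration of $I\setminus v$, and in particular $(s_2,\dots,s_m)$, is a reduction sequence of $G^{-v}$, so $s$ is a reduction sequence of $G$, and moreover $(G^{-v})^{-(s_2,\dots,s_m)}=(G^{-v})^{-(I\setminus v)}$.

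It then remains to identify $(G^{-v})^{-(I\setminus v)}$ with $G^{-I}$, and this path-lifting step is where I expect the real work to be. Applying Definition \ref{def:G-I} inside $G^{-v}$, the former has an arc $(u,w)$ with $u,w\notin I$ exactly when $G^{-v}$ has a path from $u$ to $w$ whose internal vertices lie in $I\setminus v$, whereas $G^{-I}$ has $(u,w)$ exactly when $G$ has such a path with internal vertices in $I$. In the easy direction, a path of $G$ through $I$ either avoids $v$ and survives verbatim in $G^{-v}$, or passes through $v$ via consecutive arcs $(a,v),(v,b)$, which I replace by the arc $(a,b)$ of $G^{-v}$ to obtain a path through $I\setminus v$. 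For the converse I lift each arc of a $G^{-v}$-path back to $G$, replacing every arc that originated from a length-two path through $v$ by $a\to v\to b$. The one danger is that $v$ could be inserted twice, breaking simplicity; but if two (consecutive, among the $v$-arcs) arcs of the $G^{-v}$-path came from $v$, the intervening segment together with the two detours through $v$ would form a nontrivial closed walk inside $G[I]$, hence a cycle of $G[I]$, contradicting acyclicity. Thus $v$ is inserted at most once and the lift is a genuine path of $G$ through $I$. This equivalence of arc sets gives $G^{-s}=G^{-I}$ and closes the induction.

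As an alternative mirroring the proof of Theorem \ref{th:reduction_f}, I could instead establish the local commutation $G^{-uv}=G^{-vu}$ whenever $u,v$ carry no loop and one of the two arcs between them is absent (always the case within an acyclic set), and then pass between any two enumerations of $I$ by adjacent transpositions, using acyclicity to ensure the no-loop hypotheses persist along the way. I find the path-characterisation induction cleaner, since the order-independence recorded in the analogue of Corollary \ref{cor:support_network} then drops out immediately.
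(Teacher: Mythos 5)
Your proof is correct, but it takes a genuinely different route from the paper's. The paper mirrors its proof of Theorem \ref{th:reduction_f}: it first establishes the local commutation $G^{-uv}=G^{-vu}$ for a loopless pair $u,v$ with no arc from $v$ to $u$ (by a propositional computation on arc memberships), then invokes the fact that adjacent transpositions generate the symmetric group to conclude that all enumerations of $I$ yield the same graph, and finally runs a \emph{separate} induction on $|I|$ to identify the topological-order reduction with the path characterisation of Definition \ref{def:G-I}. You instead run a single induction on $|I|$ that peels off the first vertex $v$ of an \emph{arbitrary} enumeration, using $G^{-I}$ itself as the order-free invariant; the two workhorses are that $I\setminus v$ remains acyclic in $G^{-v}$ and that $(G^{-v})^{-(I\setminus v)}=G^{-I}$, the latter by a path-lifting argument whose key point — that at most one arc of a $G^{-v}$-path can arise from a detour through $v$, else one closes a walk, hence a cycle, inside $G[I]$ — you identify and justify correctly. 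The trade-off: the paper's route isolates the swap lemma (reusable, and structurally parallel to the coding-function case) but must implicitly track that loop-freeness and acyclicity persist through the intermediate reductions when transposing; your route gets the order-independence of Corollary \ref{cor:support_network}'s graph analogue for free and never needs the commutation lemma, at the cost of the more delicate lifting argument. Both are sound; yours is arguably the cleaner induction for this particular statement.
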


\begin{proof}
The structure of the proof is similar to that of Theorem \ref{th:reduction_f}. We first prove that if $u,v \in V$ induce an acyclic subgraph, then $G^{-uv} = G^{-vu}$. Say that there is no arc from $v$ to $u$ and that there is no loop on either $u$ or $v$. Let us simplify notation and denote the proposition $(x,y) \in G$ as $xy$ and the proposition $(x,y) \in G^{-z}$ as $xy^{-z}$ for any vertices $x$, $y$, and $z$. Then for any $a,b \notin \{u,v\}$,
\begin{align*}
	ab^{-u} &\iff ab \vee \left( au \wedge ub \right),\\
	(a,b) \in G^{-uv} &\iff ab^{-u} \vee \left( av^{-u} \wedge vb^{-u} \right)\\
	&\iff ab \vee \left( au \wedge ub \right) \vee \left\{ \left[ av \vee \left( au \wedge uv \right) \right] \wedge  vb \right\}\\
	&\iff ab \vee \left( au \wedge ub \right) \vee \left( av \wedge vb \right) \vee \left( au \wedge uv \wedge vb \right).
\end{align*}
Similarly,
\begin{align*}
	ab^{-v} &\iff ab \vee \left( av \wedge vb \right),\\
	(a,b) \in G^{-vu} &\iff ab^{-v} \vee \left( au^{-v} \wedge ub^{-v} \right)\\
	&\iff ab \vee \left( av \wedge vb \right) \vee \left\{ au \wedge  \left[ub \vee \left( uv \wedge vb \right) \right] \right\}\\
	&\iff ab \vee \left( av \wedge vb \right) \vee \left( au \wedge ub \right) \vee \left( au \wedge uv \wedge vb \right).
\end{align*}

Now let $I = \{i_1,\dots,i_m\}$ in topological order and let $s$ and $t$ be enumerations of $I$. Firstly, suppose that $s$ and $t$ only differ by  a transposition of adjacent vertices, say $s = (s_1,\dots, s_m)$ and $t =(s_1,\dots,s_{k-2},\allowbreak s_k,s_{k-1},s_{k+1},\dots, s_m)$. We then have 
$$
	G^{-s_1 \dots s_k} = H^{-s_{k-1} s_k} = H^{-s_k s_{k-1}} = G^{-t_1 \dots t_k},
$$
where $H = G^{-s_1 \dots s_{k-2}}$, and hence $G^{-s} = G^{-t}$. Secondly, in the general case, it is well known that $t$ can be obtained from $s$ by transposing adjacent vertices: indeed the Coxeter generators of $I$ generate the symmetric group on $I$; thus $G^{-s} = G^{-t}$. 

In particular, we prove that if $s$ is the topological order, then we obtain $G^{-I}$ described above. The proof is by induction on $|I|$. For $|I| = 1$, the result is obvious. Suppose the result holds for all induced acyclic subgraphs of size $m-1$. Let $s = i_1,\dots,i_m$ be a topological order of $I$. By definition, we have $(u,v) \in G^{-s}$ if and only if $(u,v) \in G^{-i_1,\dots,i_{m-1}}$ or $(u,i_m), (i_m,v) \in G^{-i_1,\dots,i_{m-1}}$. Thus, by induction hypothesis, $(u,v) \in G^{-s}$ if and only if $(u,v) \in G^{-(I\setminus i_m)}$ or $(u,i_m), (i_m,v) \in G^{-(I\setminus i_m)}$. This is equivalent to
\begin{itemize}
	\item either, $(u,v) \in G$;
	
	\item or $G$ has a path from $u$ to $v$ through $I \setminus i_m$;
	
	\item or $(u,i_m), (i_m,v) \in G$;
	
	\item or $G$ has a path from $u$ to $i_m$ through $I \setminus i_m$ and $(i_m,v) \in G$;
	
	\item or $(u,i_m)\in G$ and $G$ has a path from $i_m$ to $v$ through $I \setminus i_m$ (impossible, since $\outn(i_m) \cap I = \emptyset$);
	
	\item or $G$ has a path from $u$ to $i_m$ through $I\setminus i_m$ and a path from $i_m$ to $v$ through $I \setminus i_m$ (also impossible).
	
\end{itemize}
This is clearly equivalent to either $(u,v) \in G$ or there is a path in $G$ from $u$ to $v$ through $I$. Thus $(u,v)\in G^{-s}$ if and only if $(u,v)\in G^{-I}$.
\end{proof}

We make three remarks on the reduction of digraphs.
\begin{enumerate}
	\item If $s$ and $t$ are two reduction sequences of $G$ with the same acyclic support then $G^{-s} = G^{-t}$.
	
	\item $G^{-I}$ has a loop on each vertex if and only if $I$ is a maximal acyclic set. Therefore, there is a bijection between the set of minimal reduced forms of $G$ and the set of its minimal feedback vertex sets. Since it is well known that
finding a minimum feedback vertex set is an NP-Complete problem, finding a
minimum reduced form is also NP-Complete.
	
	\item For any $G$ and any acyclic set $I$, we have $G \setminus I \subseteq G^{-I}$.  We prove below a converse to this result: depending on the initial digraph $G$, the reduced form $G^{-I}$ of $G$ may add any possible arc to $G \setminus I$.
\end{enumerate}



\begin{proposition} \label{prop:D_H}
For any digraph $D$ with vertex set $J$ and any spanning subgraph $H$ of $D$, there exists a set $I$ and a digraph $G$ with vertex set $I \cup  J$  such that $G \setminus I = G[J] = H$ and $G^{-I} = D$.
\end{proposition}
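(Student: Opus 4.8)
The plan is to realise each arc of $D$ that is missing from $H$ by routing it through its own dedicated auxiliary vertex, so that the reduction reinserts exactly those arcs and nothing more. Since $H$ is a spanning subgraph of $D$, its arc set is contained in that of $D$; list the arcs of $D$ that are not arcs of $H$ as $(u_1,w_1),\dots,(u_t,w_t)$. For each such arc I would introduce a fresh vertex $m_k$, set $I=\{m_1,\dots,m_t\}$, and define $G$ on the vertex set $I\cup J$ by taking all the arcs of $H$ together with, for every $k$, the two arcs $(u_k,m_k)$ and $(m_k,w_k)$. No arc joins two vertices of $I$, and no $m_k$ carries a loop.

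First I would dispatch the two easy requirements. The only arcs of $G$ with both endpoints in $J$ are those of $H$, so $G[J]=H$, and $G\setminus I=G[J]$ by definition; hence $G\setminus I=G[J]=H$. Moreover $G[I]$ contains no arc at all, so $I$ is an acyclic set and $G^{-I}$ is well defined with vertex set $J$.

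The heart of the argument is to show $G^{-I}=D$. By Definition~\ref{def:G-I}, for $a,b\in J$ we have $(a,b)\in G^{-I}$ if and only if $(a,b)\in G$ or $G$ has a path from $a$ to $b$ whose internal vertices all lie in $I$. The key structural observation is that each $m_k$ has in-neighbourhood $\{u_k\}$ and out-neighbourhood $\{w_k\}$, both contained in $J$; consequently any path whose internal vertices lie in $I$ can pass through at most one auxiliary vertex and must have the form $u_k\to m_k\to w_k$. Thus the paths through $I$ contribute exactly the arcs $(u_k,w_k)$, that is, precisely the arcs of $D$ absent from $H$, while the direct arcs of $G$ among $J$ contribute exactly $H$. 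Therefore the arc set of $G^{-I}$ is the union of $H$ with the missing arcs, which is $D$, as required. The same reasoning covers loops: a missing loop $(v,v)$ of $D$ is restored by the two-arc detour $v\to m_k\to v$.

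The only point requiring care, and the step I would treat as the main obstacle, is verifying that the reduction creates no spurious arc beyond the intended ones. This reduces to the claim that there are no longer paths through $I$, which follows immediately from the in/out-degree-one structure of the auxiliary vertices together with the absence of arcs inside $I$. Once this is in place the identity $G^{-I}=D$ is automatic, and the proposition follows.
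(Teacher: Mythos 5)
Your proposal is correct and is essentially the paper's own proof: the paper also takes $I$ to be the set of arcs of $D$ missing from $H$ (using each such arc $e=(u,v)$ itself as the auxiliary vertex, with arcs $(u,e)$ and $(e,v)$), and concludes $G^{-I}=D$. The only difference is that you spell out the verification (in/out-degree-one structure of the auxiliary vertices, no arcs inside $I$, hence no spurious paths) that the paper dismisses as ``clear.''
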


\begin{proof}
Let $I$ be the set of arcs in $D$ but not in $H$ and let $G$ be the graph on $I \cup J$ such that $G[J] = H$ and for any arc $e = (u,v) \in I$, $G$ contains the arcs $(u,e)$ and $(e,v)$. Then it is clear that $G^{-I} = D$. 
\end{proof}

\subsection{Interaction graph of the reduced coding function} \label{sec:G_reduction}

The reduction of digraphs yields an estimate on the interaction graph of the reduction of coding functions.

\begin{proposition}\label{pro:subgraph1}
If $I$ is an acyclic set of $G(f)$ then $G(f^{-I})$ is a subgraph of $G(f)^{-I}$.
\end{proposition}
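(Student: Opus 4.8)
The plan is to reduce the statement to the single-vertex case and then bootstrap using the two reduction theorems already established. First I would prove the base case: if $v$ has no loop in $G(f)$, then $G(f^{-v}) \subseteq G(f)^{-v}$. This is immediate from the defining formula $f_i^{-v}(x_{-v}) = f_i(x_{-v}, f_v(x_{-v}))$: the reduced local function $f_i^{-v}$ can depend essentially on a variable $x_u$ (with $u \neq v$) only because $f_i$ already depended on $x_u$ directly --- giving an arc $(u,i) \in G(f) \subseteq G(f)^{-v}$ --- or because $f_i$ depended on the substituted variable $x_v$ while $f_v$ depended on $x_u$ --- giving arcs $(u,v),(v,i) \in G(f)$ and hence, by Definition \ref{def:G-v}, the arc $(u,i) \in G(f)^{-v}$.

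Second, I would establish a monotonicity lemma for single-vertex graph reduction: if $H$ and $H'$ are digraphs on the same vertex set with $H \subseteq H'$ and $v$ carries no loop in $H'$, then $H^{-v} \subseteq H'^{-v}$. This again follows directly from Definition \ref{def:G-v}, since every arc of $H$ and every length-two path through $v$ in $H$ is also present in $H'$. The hypothesis that $v$ is loopless in the larger graph $H'$ is what keeps both reductions on the same (reduced) vertex set, so that ``subgraph'' remains meaningful.

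Third, I would enumerate $I$ in topological order $s = (s_1,\dots,s_m)$ and argue by induction on $k$ that $G(f^{-s_1\cdots s_k}) \subseteq G(f)^{-s_1\cdots s_k}$. The inductive step applies the base case to $f^{-s_1\cdots s_{k-1}}$ at vertex $s_k$ (legitimate because $s$ is a reduction sequence of $f$ by Theorem \ref{th:reduction_f}, so $s_k$ is loopless in $G(f^{-s_1\cdots s_{k-1}})$), and then the monotonicity lemma to the inclusion furnished by the induction hypothesis. Finally, Theorem \ref{th:graph_reduction} lets me rewrite $G(f)^{-s_1\cdots s_m} = G(f)^{-I}$ and Theorem \ref{th:reduction_f} gives $f^{-s} = f^{-I}$, yielding $G(f^{-I}) \subseteq G(f)^{-I}$.

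The main obstacle is the looplessness bookkeeping in the inductive step: to invoke monotonicity I must know that $s_k$ has no loop in the \emph{larger} digraph $G(f)^{-s_1\cdots s_{k-1}}$, not merely in $G(f^{-s_1\cdots s_{k-1}})$. I would dispatch this using Theorem \ref{th:graph_reduction}: a loop on $s_k$ in $G(f)^{-\{s_1,\dots,s_{k-1}\}}$ would correspond either to a loop on $s_k$ in $G(f)$ or to a path from $s_k$ to itself with internal vertices in $\{s_1,\dots,s_{k-1}\}$, i.e. a cycle in $G(f)[I]$ --- both impossible since $I$ is acyclic. A fully self-contained alternative avoids the iteration entirely: expand $f_w^{-I}(x_{-I}) = f_w(x_{-I}, F^I(x_{-I}))$ and prove by induction along the topological order that each cumulative component $F^I_i$ depends essentially on $x_u$ only when $G(f)$ contains a path from $u$ to $i$ whose internal vertices lie in $I$; tracing a dependence of $f_w^{-I}$ on $x_u$ back either through the direct input $u$ or through an in-neighbour $i \in I$ of $w$ then exhibits exactly the arc or path through $I$ required by Definition \ref{def:G-I}.
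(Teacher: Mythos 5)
Your proof is correct and takes essentially the same route as the paper: the paper's own proof consists exactly of your base case, showing $G(f^{-v}) \subseteq G(f)^{-v}$ from the substitution formula $f_w^{-v}(x) = f_w(x_{-v}, f_v(x_{-v}))$, and then declares the general statement ``an easy consequence.'' Your monotonicity lemma, the looplessness bookkeeping via Theorem \ref{th:graph_reduction}, and the final appeal to Theorems \ref{th:reduction_f} and \ref{th:graph_reduction} simply make that easy consequence explicit, so the two arguments coincide in substance.
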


\begin{proof}
We only prove that if $G(f)$ has no loop on $v$, then $G(f^{-v})$ is a subgraph of
$G(f)^{-v}$; the result is an easy
consequence. We have
$$
	f_w^{-v}(x) = f_w(x_{-v}, f_v(x_{-v})),
$$
hence $(u,w)$ is an arc in $G(f^{-v})$ only if either $(u,w)$ is already in $G(f)$ or $(u,v), (v,w) \in G(f)$, i.e. $(u,w) \in G(f)^{-v}$.
\end{proof}

\begin{corollary}\label{pro:subgraph2}
Every reduction sequence of $G(f)$ is a reduction sequence of $f$.
\end{corollary}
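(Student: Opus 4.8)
The plan is to prove, by induction on $r$, the combined statement that for each prefix $s_1\dots s_r$ of the given reduction sequence $s = (s_1,\dots,s_k)$ of $G(f)$, (i) $s_1\dots s_r$ is a reduction sequence of $f$, and (ii) $G(f^{-s_1\dots s_r})$ is a subgraph of $G(f)^{-s_1\dots s_r}$. Strengthening the hypothesis with the subgraph relation (ii) is the crucial device: the bare reduction-sequence property (i) is not self-propagating, since whether $s_r$ may be reduced in $f^{-s_1\dots s_{r-1}}$ is a statement about the loops of $G(f^{-s_1\dots s_{r-1}})$, an interaction graph we cannot read off directly and can only control through its comparison with the purely combinatorial graph reduction.

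The base case $r = 0$ holds trivially for the empty sequence. For the inductive step, I would assume (i) and (ii) at $r-1$. Because $s$ is a reduction sequence of $G(f)$, the graph $G(f)^{-s_1\dots s_{r-1}}$ has no loop on $s_r$; by (ii) its subgraph $G(f^{-s_1\dots s_{r-1}})$ has no loop on $s_r$ either, so $s_r$ may legitimately be reduced in $f^{-s_1\dots s_{r-1}}$. This extends the reduction sequence of $f$ to $s_1\dots s_r$, establishing (i) at $r$. To recover (ii), I would first invoke the single-vertex case of Proposition \ref{pro:subgraph1} at the loopless vertex $s_r$ of $f^{-s_1\dots s_{r-1}}$, obtaining $G(f^{-s_1\dots s_r}) \subseteq G(f^{-s_1\dots s_{r-1}})^{-s_r}$.

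The remaining ingredient, and the only point needing a short verification beyond what the excerpt supplies, is the monotonicity of the single-vertex graph reduction: if $A \subseteq B$ are spanning subgraphs on a common vertex set and $s_r$ is loopless in $B$ (hence in $A$), then $A^{-s_r} \subseteq B^{-s_r}$. This is immediate from Definition \ref{def:G-v}, since any arc of $A^{-s_r}$ comes either from an arc of $A \subseteq B$ or from a length-two path through $s_r$ whose two arcs lie in $A \subseteq B$, and in either case the same arc lies in $B^{-s_r}$. Applying this with $A = G(f^{-s_1\dots s_{r-1}})$ and $B = G(f)^{-s_1\dots s_{r-1}}$, which are comparable by (ii), gives $G(f^{-s_1\dots s_{r-1}})^{-s_r} \subseteq G(f)^{-s_1\dots s_r}$. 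Chaining this with the inclusion from Proposition \ref{pro:subgraph1} yields (ii) at $r$, and taking $r = k$ proves the corollary.

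The main obstacle is conceptual rather than computational. A direct induction on (i) alone stalls, because the reducibility of $s_r$ depends on $G(f^{-s_1\dots s_{r-1}})$, which the coding-function reduction does not expose; the resolution is precisely to transport this loop information across from the graph side via the subgraph bound (ii). Everything else reduces to the elementary monotonicity of reduction and a single appeal to Proposition \ref{pro:subgraph1}, so once the inductive invariant is set up correctly the argument is routine.
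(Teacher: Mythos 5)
Your proof is correct and is essentially the paper's intended argument: the corollary is exactly the ``easy consequence'' of Proposition \ref{pro:subgraph1}, obtained by transferring the no-loop condition at each step from $G(f)^{-s_1\dots s_{r-1}}$ to its subgraph $G(f^{-s_1\dots s_{r-1}})$. Your inductive invariant combining (i) with the subgraph relation (ii), together with the elementary monotonicity of single-vertex graph reduction, is precisely the bookkeeping the paper leaves implicit.
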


According to Proposition~\ref{pro:subgraph1}, we have $\mindim(f)\leq k(G(f))$. We show that this bound on $\mindim(f)$ is the best possible as a
function of $G(f)$. The {\bf min-net} of a digraph $G$ over $[q] = \{0,\dots,q-1\}$ ($q \ge 2$) is the coding function $f := \min(G,q)$ defined as
$$
	f_i(x) :=	\min\{ x_j : j \in \inn_G(i) \}
$$ 
with the convention that $\min(\emptyset)=q-1$.

\begin{proposition} \label{prop:reduction_min}
For any digraph $G$ and any acyclic set $I$ of $G$, $\min(G,q)^{-I} = \min(G^{-I}, q)$. Therefore, $\mindim(\min(G,q)) = k(G)$.
\end{proposition}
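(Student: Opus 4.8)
The plan is to first prove the identity $\min(G,q)^{-I}=\min(G^{-I},q)$, and then read off the dimension formula from it. For the identity I would reduce to the one-vertex case and induct, exactly as in the proofs of Theorems~\ref{th:reduction_f} and \ref{th:graph_reduction}.

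First I would prove the single-vertex statement: if $G$ has no loop on $v$, then $\min(G,q)^{-v}=\min(G^{-v},q)$. Write $f=\min(G,q)$ and fix $i\neq v$. If $v\notin\inn_G(i)$, then $f_i$ does not involve $x_v$, so the reduction leaves $f_i$ unchanged; on the graph side no arc is added into $i$ (adding $(u,i)$ would require $(v,i)\in G$), so $\inn_{G^{-v}}(i)=\inn_G(i)$ and the two functions agree at $i$. If $v\in\inn_G(i)$, then substituting $x_v=f_v(x_{-v})=\min\{x_j:j\in\inn_G(v)\}$ into $f_i$ and using idempotence of $\min$ gives $f_i^{-v}(x_{-v})=\min\{x_j:j\in(\inn_G(i)\setminus v)\cup\inn_G(v)\}$; on the graph side the reduction deletes $v$ and joins every $u\in\inn_G(v)$ to $i$, so $\inn_{G^{-v}}(i)=(\inn_G(i)\setminus v)\cup\inn_G(v)$, and again the two functions agree. (A $2$-cycle between $i$ and $v$ is harmless: it produces the loop on $i$ symmetrically on both sides.)

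To pass to an acyclic set $I$ I would take a topological enumeration $s$ of $I$. By Theorem~\ref{th:reduction_f} we have $f^{-I}=f^{-s}$ and by Theorem~\ref{th:graph_reduction} we have $G^{-I}=G^{-s}$, so a straightforward induction on $|I|$ that peels off the last vertex of $s$ and invokes the single-vertex identity (the required absence of a loop holding because $s$ is a reduction sequence of $G$) yields $\min(G,q)^{-I}=\min(G^{-I},q)$.

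For the dimension formula, note first that $G(\min(G,q))=G$, since each $f_i$ depends essentially on exactly the variables indexed by $\inn_G(i)$ (set the other in-neighbours to $q-1$ and vary $x_j$ between $0$ and $1$). The general bound of Proposition~\ref{pro:subgraph1} then gives $\mindim(f)\le k(G(f))=k(G)$. The reverse inequality is the crux, and amounts to showing that every reduction sequence $s$ of $f$ has length at most $\alpha(G)$. Iterating the single-vertex identity together with the fact that $G(\min(H,q))=H$ for every digraph $H$, I would show that at each step the interaction graph of the partially reduced function equals the correspondingly reduced graph; hence $s$ is also a reduction sequence of $G$. The main obstacle is then to show that the support of any reduction sequence of $G$ is acyclic: if it were not, let $r$ be minimal with $\{s_1,\dots,s_r\}$ not acyclic, so that a cycle through $s_r$ uses, besides $s_r$, only vertices of the acyclic set $I'=\{s_1,\dots,s_{r-1}\}$; this is precisely a path from $s_r$ to $s_r$ through $I'$, which by Definition~\ref{def:G-I} forces a loop $(s_r,s_r)\in G^{-I'}=G^{-s_1\dots s_{r-1}}$, contradicting that $s$ is a reduction sequence. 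Consequently $|s|=|\{s\}|\le\alpha(G)$, so $\dim(f^{-s})=n-|s|\ge n-\alpha(G)=k(G)$, and combining the two inequalities gives $\mindim(\min(G,q))=k(G)$.
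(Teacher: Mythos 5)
Your proposal is correct and takes essentially the same route as the paper: the paper's proof likewise consists of the single-vertex computation $\min(G,q)^{-v}=\min(G^{-v},q)$ (identical to yours) and lets the extension to acyclic sets and the dimension formula follow from the reduction machinery of Theorems~\ref{th:reduction_f} and~\ref{th:graph_reduction}. The only difference is thoroughness: the paper's ``Therefore'' leaves the lower bound $\mindim(\min(G,q))\ge k(G)$ implicit, whereas you correctly identify and prove the key point it rests on, namely that the support of any reduction sequence of $G$ is acyclic (so that reduction sequences of the min-net, which mirror those of $G$, have length at most $\alpha(G)$).
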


\begin{proof}
Again, we only prove the case where $I$ is only one vertex $v$. If $f = \min(G,q)$, we have for all $i \ne v$
$$
	f_i^{-v}(x) = \min\{x_j : j \in \inn_G(i) \setminus v, \min\{x_k : k \in \inn_G(v)\} \} = \min \{ x_k : k \in \inn_{G^{-v}}(i) \}.
$$
Thus $\min(G,q)^{-v} = \min(G^{-v},q)$.
\end{proof}

Note that according to the preceding, finding a minimum reduced form
of a min-net is equivalent to finding a minimum vertex set in a
digraph. So finding a minimum form of a coding function is NP-Hard.

Although there exists a coding function (the min-net) whose reductions follow the reductions of its interaction graph, we prove in the following two propositions that in general we cannot say much about the interaction graph of the reduced coding function. First, we derive the analogue of Proposition \ref{prop:D_H} for the interaction graphs of coding functions.

\begin{proposition} \label{prop:any_graph_reduction}
Let $D$ and $H$ be any digraphs with vertex set $J$. Then for any $q \ge 2$ there exists a set $I$ and a coding function $f: A^{I \cup J} \to A^{I \cup J}$ such that $G(f)[J] = D$ and $G(f)^{-I} = H$.
\end{proposition}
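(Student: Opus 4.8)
The plan is to realise $f$ over the alphabet $A = \mathbb{Z}_q$, exploiting additive arithmetic, so that after reduction each local function becomes a prescribed linear form. Concretely, I would aim to construct $f$ so that for every $w \in J$,
$$
	f_w^{-I}(x) = \sum_{u \in \inn_H(w)} x_u ,
$$
which depends essentially on exactly the vertices of $\inn_H(w)$ and hence forces $G(f^{-I}) = H$, while at the same time the direct $J$-to-$J$ dependencies of $f$ are exactly those of $D$, giving $G(f)[J] = D$. The set $I$ will consist of auxiliary ``gadget'' vertices, one per arc on which $D$ and $H$ disagree, whose sole role is to add or cancel a dependency during the substitution step of the reduction.

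Next I would fix, for each $w \in J$, the following recipe for $f_w$ and for the gadgets pointing at $w$. For an arc $(u,w) \in D \cap H$ I keep the term $x_u$ directly in $f_w$: it is wanted in both graphs, so it should survive reduction untouched. For an arc $(u,w) \in D \setminus H$ (present directly but to be removed) I introduce a gadget $i_{u,w} \in I$ with $f_{i_{u,w}} := -x_u$ and put the pair of terms $x_u + x_{i_{u,w}}$ into $f_w$; since the gadget has no loop and depends only on $u$, substituting it during reduction replaces $x_{i_{u,w}}$ by $-x_u$ and the contribution collapses to $x_u - x_u = 0$. For an arc $(u,w) \in H \setminus D$ (to be added, but absent directly) I introduce a gadget $j_{u,w} \in I$ with $f_{j_{u,w}} := x_u$ and put the single term $x_{j_{u,w}}$ into $f_w$; reduction replaces it by $x_u$. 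This is exactly the ``route through an intermediate vertex'' idea underlying Proposition \ref{prop:D_H}, now combined with a cancelling variant that lets the reduction delete arcs as well as create them, which is what frees $D$ and $H$ from any containment relation.

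With $f$ defined this way I would then verify the two claims by direct bookkeeping. First, the gadgets depend only on vertices of $J$ and carry no loops, so the subgraph of $G(f)$ induced by $I$ is edgeless; in particular $I$ is an acyclic set of $G(f)$, so Theorem \ref{th:reduction_f} guarantees that $f^{-I}$ is well defined and that the cumulative functions reduce to the $f_i$ themselves. Second, among the variables indexed by $J$, the function $f_w$ contains $x_u$ precisely when $u \in \inn_D(w)$ (the ``keep'' and ``cancel'' terms), and never contains $x_u$ directly for $u \in \inn_H(w) \setminus \inn_D(w)$ (those enter only through a gadget variable), so $G(f)[J] = D$. Finally, collecting the coefficient of each $x_u$ in $f_w^{-I}$ after substitution gives $1$ on $\inn_H(w)$ and $0$ elsewhere, yielding $f_w^{-I} = \sum_{u \in \inn_H(w)} x_u$ and hence $G(f^{-I}) = H$.

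The step I expect to need the most care is the essential-dependence accounting rather than any deep idea: I must check that before reduction no two terms of $f_w$ silently cancel (so that $G(f)[J]$ is exactly $D$ and not a proper subgraph), and that after reduction the cancellations are complete and no spurious dependency on a removed vertex survives. Using a fixed additive form over $\mathbb{Z}_q$ makes this transparent, since the coefficient of each variable is an explicit element of $\mathbb{Z}_q$; the only point worth flagging is that the construction must work for every $q \ge 2$, and indeed it does, the case $q = 2$ being covered because $-x_u = x_u$ in $\mathbb{Z}_2$ still forces $x_u + x_u = 0$.
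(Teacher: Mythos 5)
Your construction is correct, and its combinatorial core is the same as the paper's: $I$ consists of one gadget vertex per arc of the symmetric difference of $D$ and $H$, wired $u \to \text{gadget} \to w$, so that $I$ is edgeless in $G(f)$ (hence acyclic, and the $I$-reduction is just simultaneous substitution); arcs of $D\cap H$ appear as direct terms, arcs of $H\setminus D$ are routed through a gadget, and arcs of $D\setminus H$ are paired with a gadget term that cancels under substitution. What differs is the algebraic realization. The paper implements all local functions with $\min$ over $[q]$: its deletion gadget contributes the term $y_{(u,v)} = x_u + q - 1 - x_{(u,v)}$, which upon substituting $f_{(u,v)}(x) = x_u$ collapses to $q-1$, the neutral element for $\min$ on $[q]$; this keeps the proof in the same family as the min-net of Proposition \ref{prop:reduction_min}. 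You instead use linear forms over $\mathbb{Z}_q$ and cancel additively via $x_u + x_{i_{u,w}} \mapsto x_u - x_u = 0$ (with the $q=2$ degeneracy correctly dispatched). Your realization buys a more transparent essential-dependence audit---a linear form depends essentially on $x_u$ exactly when its coefficient is nonzero, and all your coefficients are $0$ or $\pm 1$---and it proves slightly more than the paper does: since $\pm 1$ is invertible in $\mathbb{Z}_q$, your $f$ is a linear coding function in the paper's sense, so the proposition is witnessed even within $L(G,q)$. One last remark: what you establish (and what the paper's own proof establishes) is $G(f^{-I}) = H$; this is the only sensible reading of the statement's ``$G(f)^{-I} = H$'', which cannot hold literally when $D \not\subseteq H$, because graph reduction never deletes arcs: $G(f)[J] \subseteq G(f)^{-I}$ always.
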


\begin{proof}
Let $I = I_D \cup I_H$, where $I_D$  is the set of all arcs in $D$ but not in $H$ and $I_H$ is the set of all arcs in $H$ but not in $D$. Then let $G$ be the graph with vertex set $I \cup J$ such that $G[J] = D$ and for any $(u,v) \in I$, $(u,(u,v)), ((u,v),v) \in G$. For any $(u,v) \in I_D$ and any $x \in [q]^n$ (with $n = |I \cup J|$), we denote 
$$
	y_{(u,v)} = x_u + q - 1 - x_{(u,v)}
$$
and for any $j \in J$, $y^j$ as the state with coordinates $y_{(u,j)}$ for all $u \in \inn_D(j) \setminus \inn_H(j)$.

Finally, let $f: [q]^n \to [q]^n$  be defined as
$$
	f_j(x_{\inn_D(j) \setminus \inn_H(j)}, x_{I_D}, x_{\inn_D(j) \cap \inn_H(j)}, x_{I_H}) = \min (y^j, x_{\inn_D(j) \cap \inn_H(j)}, x_{I_H \cap \inn_G(j)})
$$
for all $j \in J$ and
$$
	f_{(u,v)}(x) = x_u
$$
for all $(u,v) \in I$. It is clear that $f \in F(G,q)$, hence $G(f)[J] = D$. Moreover, reducing $I_D$ yields
\begin{align*}
	f_j^{-I_D}(x_{\inn_D(j)}, x_{I_H}) &= \min (q-1,\dots,q-1, x_{\inn_D(j) \cap \inn_H(j)}, x_{I_H \cap \inn_G(j)}) = \min (x_{\inn_D(j) \cap \inn_H(j)}, x_{I_H \cap \inn_G(j)}),
\end{align*}
and then reducing $I_H$ yields
$$
	f_j^{-I}(x_J) = \min (x_{\inn_D(j) \cap \inn_H(j)}, x_{\inn_H(j) \setminus \inn_D(j) }) = \min(x_{\inn_H(j)}),
$$
thus $G(f^{-I}) = H$.
\end{proof}

Second, we prove that even reducing a single vertex may in fact remove any set of arcs from the original interaction graph.

\begin{proposition} \label{prop:vanish_graph_reduction}
Let $G$ be a digraph with a vertex $v$ such that $\inn(v) = \outn(v) = V \setminus v$, and let $G$ have minimum in-degree at least 2. Then for any $q \ge 2$ and any spanning subgraph $H$ of $G \setminus v$, there exists a coding function $f \in F(G,q)$ such that $G(f) = G$  and $G(f^{-v}) = H$.
\end{proposition}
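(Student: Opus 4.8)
The plan is to construct $f$ explicitly over $A = [q]$, exploiting the same saturation phenomenon as in Proposition~\ref{prop:any_graph_reduction}: I will arrange that, after substituting $f_v$, every term of $f_w$ carrying an unwanted dependence either saturates to $q-1$ (and is absorbed by a $\min$) or collapses onto a wanted variable. The engine behind both effects is to set $f_v(x_{-v}) = \min\{x_u : u \in V \setminus v\}$, so that $f_v$ depends essentially on every $x_u$ with $u \ne v$ (giving $\inn_{G(f)}(v) = V \setminus v$) while guaranteeing the single inequality $f_v(x_{-v}) \le x_u$ for all $u \ne v$. That inequality is what makes the substituted terms predictable.

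For each $w \ne v$ write $S_w = \inn_{G \setminus v}(w)$, $T_w = \inn_H(w) \subseteq S_w$, and $R_w = S_w \setminus T_w$ (the arcs into $w$ that must be deleted). Since $G$ has minimum in-degree at least $2$ and $(v,w)$ is an arc, $S_w$ is non-empty; this is the only place the degree hypothesis enters, and it is exactly what lets $f_w$ depend on $x_v$ in a way that reduction can still destroy. I take $f_w$ to be a minimum over four kinds of terms: the bare variables $x_u$ for $u \in T_w$ (which must survive); the saturating terms $x_u + (q-1) - x_v$ for $u \in R_w$; a constant term $q-1$ (to keep $f_w$ valued in $[q]$ and to supply an absorbing element); and, only when $R_w = \emptyset$, a domination term $\max(x_v, x_p)$ for a fixed $p \in T_w$. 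After substituting $x_v = f_v(x_{-v})$, each saturating term becomes $x_u + (q-1) - f_v \ge q-1$ (using $f_v \le x_u$), hence is absorbed and removes the dependence on both $x_u$ and $x_v$; the domination term becomes $\max(f_v, x_p) = x_p$ (using $f_v \le x_p$), collapsing onto a wanted variable while removing $x_v$. In either case $f_w^{-v}(x_{-v}) = f_w(x_{-v}, f_v(x_{-v})) = \min\{x_u : u \in T_w\}$, which depends essentially on exactly $T_w = \inn_H(w)$; hence $G(f^{-v}) = H$.

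It then remains to check the interaction graph of $f$ itself, namely that each $f_w$ depends essentially on $x_v$ and on every $x_u$ with $u \in S_w$, and on nothing else, so that $G(f) = G$. The ``nothing else'' part is immediate since $f_w$ mentions only $x_v$ and the variables in $S_w$. For the essential dependences I would exhibit, for each target variable, a single test configuration: to see the dependence on $u \in T_w$, set all other $T_w$-variables to $q-1$ and force the $R_w$-terms above $q-1$ by taking $x_v=0$, so that $f_w = x_u$; to see the dependence on $u \in R_w$, take $x_v = q-1$, which turns $x_u + (q-1) - x_v$ into $x_u$ while the remaining terms stay at $q-1$; and to see the dependence on $x_v$, use a saturating term with its $x_u$ set to $0$ (when $R_w \ne \emptyset$), reading off $f_w = (q-1)-x_v$, or the domination term with $x_p = 0$ (when $R_w = \emptyset$), reading off $f_w = x_v$.

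The main obstacle is not the algebra but making removal and survival coexist, and in particular surviving the degenerate cases. Reducing $v$, which is adjacent to everything, would by Definition~\ref{def:G-v} create every possible arc among $V \setminus v$; the whole point is to cancel all of these while adding none, and this is delicate because the shared function $f_v$ depends on all variables at once, so a careless $f_w$ would inherit spurious dependences on variables outside $S_w$. Keeping $f_w$ supported on $S_w \cup \{v\}$ forces the cancellation to be local, and the worst case is $q = 2$ together with $R_w = \emptyset$: there the additive saturation trick alone cannot make $f_w$ depend on $x_v$ (the $-x_v$ term can never dip below a kept variable), which is precisely why the domination term $\max(x_v, x_p)$ is needed. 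Verifying that this two-mechanism $f_w$ simultaneously realises $\inn_{G(f)}(w) = \{v\} \cup S_w$ and $\inn_{G(f^{-v})}(w) = T_w$ for every $w$, uniformly in $q \ge 2$, is the crux of the argument.
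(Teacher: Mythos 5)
Your construction has a genuine error in the $R_w = \emptyset$ case, and it is exactly the case you identified as the crux. You put \emph{both} the bare term $x_p$ (since $p \in T_w$) and the domination term $\max(x_v, x_p)$ inside the min. But $\max(x_v, x_p) \ge x_p$ pointwise, so the domination term is absorbed by the bare term and never influences the value: $f_w = \min\left(\{x_u : u \in T_w\} \cup \{q-1\}\right)$, which has no dependence on $x_v$ whatsoever. Your own verification step betrays this: you claim that setting $x_p = 0$ lets you ``read off $f_w = x_v$'', but with $x_p = 0$ the bare term forces $f_w = \min(\dots, 0, \dots, \max(x_v,0)) = 0$ identically. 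Consequently the arc $(v,w)$ is missing from $G(f)$ for every $w$ with $R_w = \emptyset$, so $G(f) \ne G$. The failure is not marginal: taking $H = G \setminus v$ (a legitimate choice of spanning subgraph, with all $R_w = \emptyset$) your $f$ has no arcs out of $v$ at all.

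The fix is small but necessary: when $R_w = \emptyset$, the term $\max(x_v, x_p)$ must \emph{replace} the bare term $x_p$, not sit alongside it. Then before reduction, setting $x_p = 0$ and all other $T_w$-variables to $q-1$ gives $f_w = \min(q-1, \max(x_v,0)) = x_v$, establishing the dependence on $x_v$; and after reduction $\max(f_v, x_p) = x_p$ (since $f_v \le x_p$) restores the bare term, so $f_w^{-v} = \min\{x_u : u \in T_w\}$ as desired. With that correction your argument goes through; the $R_w \ne \emptyset$ case and the reduction computation are sound. For comparison, the paper avoids this pitfall by working with binarized variables $y_i = \min\{x_i,1\}$ and Boolean formulas: $f_v = \bigvee_{i \ne v} y_i$ and $f_w = \left(\bigwedge_{p \in P} y_p\right) \land \left(y_v \lor \bigvee_{q \in Q} \neg y_q\right)$, so when $Q = \emptyset$ the factor containing $y_v$ is conjoined with (not dominated by) the wanted variables, and after substitution it becomes implied by the conjunction rather than needing to be absorbed term by term.
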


\begin{proof}
Say $v = n$ and for all $i$, let $y_i = \min\{x_i,1\} \in \{0,1\}$. We define the function for the vertex $n$:
$$
	f_n(x) = \bigvee_{i =1}^{n-1} y_i.
$$
That way, we can focus on each vertex of $H$ separately; without loss we only consider the vertex $1$. Let $N = \inn_G(1) \setminus n$, $P = \inn_H(1)$ and $Q = N \setminus P$; then
\begin{align*}
	f_1(x) &= \left( \bigwedge_{p \in P} y_p \right) \land \left( y_n \lor \bigvee_{q \in Q} \neg y_q \right),\\
	f_1^{-n}(x) &= \left( \bigwedge_{p \in P} y_p \right) \land \left( \bigvee_{i =1}^{n-1} y_i \lor \bigvee_{q \in Q} \neg y_q \right)
	= \bigwedge_{p \in P} y_p,
\end{align*}
with the convention that an empty conjunction is equal to $1$ and an empty disjunction is equal to $0$.
\end{proof}

%
%
%

We finish this section with an example illustrating the reduction of graphs and coding functions.

\begin{example} \label{ex:reduction}
Consider the following coding function $f: \{0,1\}^4 \to \{0,1\}^4$ given by
\begin{align*}
	f_1(x) &=x_3\land (x_2\lor x_4)\\
	f_2(x) &=x_1 \lor x_4\\
	f_3(x) &=x_2\\
	f_4(x) &=x_3.
\end{align*}
Then $f^{-4}$ is given by:
\begin{align*}
	f^{-4}_1(x_{-4}) &=x_3\land (x_2\lor x_3)=x_3\\
	f^{-4}_2(x_{-4}) &=x_1 \lor x_3\\
	f^{-4}_3(x_{-4}) &=x_2.
\end{align*}
Thus $G(f^{-4})$ is a strict subgraph of $G(f)^{-4}$, as seen on Figure \ref{fig:G(f)}. However, $f^{-34} = f^{-43}$ is given by
\begin{align*}
	f^{-34}_1(x_1,x_2) &= x_2\\
	f^{-34}_2(x_1,x_2) &= x_1 \lor x_2,
\end{align*}
and hence $G(f^{-34}) = G(f)^{-34}$. Finally, $f^{-134} = f^{-431}$ is given by
$$
	f^{-134}_2(x_2) = x_2 \lor x_2 = x_2
$$
and $G(f^{-134})$ only has one vertex with a loop.

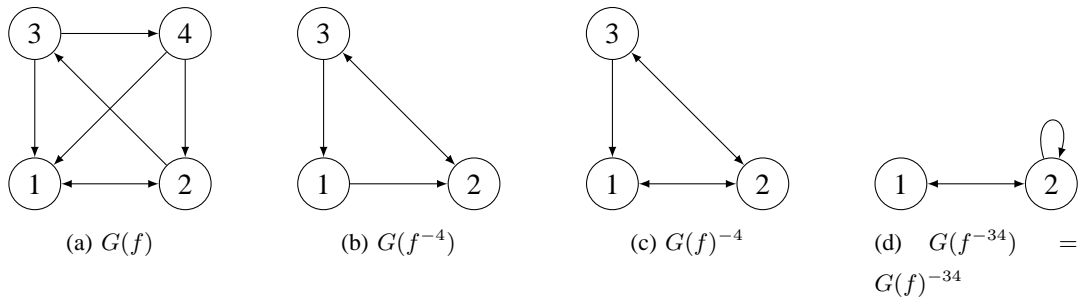
\begin{figure}[Ht]
\centering
	\subfloat[$G(f)$]
{\begin{tikzpicture}
 \tikzstyle{every node}=[draw,shape=circle];

	\node (1) at (0,0) {1};
	\node (2) at (2,0) {2};
	\node (3) at (0,2) {3};
	\node (4) at (2,2) {4};

	\draw[latex-latex] (2) -- (1);
	\draw[-latex] (3) -- (1);
	\draw[-latex] (4) -- (1);
	
	\draw[-latex] (4) -- (2);
	
	\draw[-latex] (2) -- (3);
	
	\draw[-latex] (3) -- (4);
\end{tikzpicture}} \hspace{1cm} \subfloat[$G(f^{-4})$]
{\begin{tikzpicture}
 \tikzstyle{every node}=[draw,shape=circle];

	\node (1) at (0,0) {1};
	\node (2) at (2,0) {2};
	\node (3) at (0,2) {3};

	\draw[-latex] (3) -- (1);
	
	\draw[-latex] (1) -- (2);
	\draw[latex-latex] (3) -- (2);
\end{tikzpicture}} \hspace{1cm} \subfloat[$G(f)^{-4}$]
{\begin{tikzpicture}
 \tikzstyle{every node}=[draw,shape=circle];

	\node (1) at (0,0) {1};
	\node (2) at (2,0) {2};
	\node (3) at (0,2) {3};

	\draw[latex-latex] (2) -- (1);
	\draw[-latex] (3) -- (1);
	
	\draw[latex-latex] (3) -- (2);
\end{tikzpicture}} \hspace{1cm} \subfloat[$G(f^{-34}) = G(f)^{-34}$]
{\begin{tikzpicture}
 \tikzstyle{every node}=[draw,shape=circle];

	\node (1) at (0,0) {1};
	\node (2) at (2,0) {2};

	\draw[latex-latex] (2) -- (1);

	\draw[-latex] (2) .. controls (1.7,1) and (2.3,1) .. (2);
\end{tikzpicture}}
\caption{Example of coding function and graph reduction.} \label{fig:G(f)}
\end{figure}

\end{example}

\section{Fixed points of coding functions} \label{sec:fixed_points}

\subsection{Maximum number of fixed points} \label{sec:max_fixed_points}

The $q$-{\bf guessing number} \cite{Rii07} and $q$-{\bf strict guessing number} of $G$ are respectively defined as
\begin{align*}
	g(G,q) &:= \log_q \max\{|\Fix(f)| : f \in F(G',q), G' \subseteq G\},\\
	h(G,q) &:= \log_q \max\{|\Fix(f)| : f \in F(G,q)\}.\\
\end{align*}
The guessing number of loopless digraphs was thoroughly investigated in \cite{Rii07, WCR09, GR11, CM11, BCDRV13, CDR14}; the strict guessing number is new. We first relate $h(G,q)$ to $g(G,q)$.

\begin{lemma} \label{lem:g>1}
If $g(G,q) \ge 1$, then $h(G,q) \ge 1$.
\end{lemma}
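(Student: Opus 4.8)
The plan is to produce, directly from the hypothesis, an explicit coding function $\tilde f \in F(G,q)$ with at least $q$ fixed points; then $h(G,q) = \log_q \max\{|\Fix(f)| : f \in F(G,q)\} \ge \log_q q = 1$. Since the number of fixed points is unchanged under relabelling the alphabet, I may assume $A = [q] = \{0,\dots,q-1\}$ with its usual order. The only consequence of the hypothesis I will use is that $G$ contains a (directed) cycle. Indeed, if $G$ were acyclic then so would be the interaction graph of every $f$ counted by $g(G,q)$, and a coding function whose interaction graph is acyclic has a unique fixed point (reading the vertices in topological order, each coordinate of a fixed point is forced by the preceding ones); this would give $g(G,q) \le \log_q 1 = 0$, contradicting $g(G,q)\ge 1$. (Equivalently, one may invoke the bound $g(G,q) \le k(G)$ recalled in the introduction.)

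Call a vertex $v$ \emph{free} if there is a path in $G$ (possibly of length $0$) to $v$ from some vertex lying on a cycle; every vertex of a cycle is then free, so free vertices exist. Two elementary observations drive the construction: (i) every free vertex has a free in-neighbour (its predecessor along a witnessing path, or its predecessor on a cycle when $v$ itself lies on one); and (ii) if $v$ is not free then none of its in-neighbours is free, since a free in-neighbour would make $v$ free. I then define $\tilde f$ by
$$
	\tilde f_v(x) := \max\{ x_u : u \in \inn_G(v) \}
$$
for every $v$ with $\inn_G(v) \ne \emptyset$, and $\tilde f_v := 0$ for each source $v$. Because $\max$ over $[q]$ depends essentially on each of its arguments (for $q \ge 2$, fix the other inputs to $0$ and vary one between $0$ and $q-1$), we obtain $G(\tilde f) = G$ \emph{exactly}, as is required for $h$ rather than $g$.

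Finally I exhibit the fixed points. For each $a \in [q]$ set $\xi^a_v := a$ if $v$ is free and $\xi^a_v := 0$ otherwise. For a source we have $\tilde f_v(\xi^a) = 0 = \xi^a_v$. For a free $v$, observation (i) provides an in-neighbour carrying the value $a$ while every other in-neighbour carries $0$ or $a$, all of which are $\le a$, so $\tilde f_v(\xi^a) = a = \xi^a_v$. For a non-free $v$ with in-neighbours, observation (ii) forces every in-neighbour to carry $0$, so $\tilde f_v(\xi^a) = 0 = \xi^a_v$. Hence each $\xi^a$ is a fixed point of $\tilde f$, and since at least one free vertex $v_0$ exists, the $q$ points $\xi^a$ are pairwise distinct (they differ in coordinate $v_0$); thus $|\Fix(\tilde f)| \ge q$ and $h(G,q) \ge 1$. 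The step I expect to require the most care is guaranteeing that $\tilde f$ realises the exact interaction graph $G$ while still admitting $q$ fixed points: the naive idea of keeping the fixed points of a witness $f$ for $g(G,q)$ and merely grafting on the missing arcs can over-constrain the local functions (the retained fixed points may pin a local function so that it cannot depend on a new in-neighbour), and the free/non-free dichotomy together with the choice $\xi^a_v \in \{0,a\}$ and the use of $\max$ is precisely what sidesteps this obstruction.
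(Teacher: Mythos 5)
Your proof is correct. It follows the same overall strategy as the paper's: deduce from $g(G,q)\ge 1$ that $G$ contains a cycle, then write down an explicit coding function whose interaction graph is \emph{exactly} $G$ and which has $q$ fixed points indexed by $a\in[q]$, each fixed point equal to $a$ on the vertices reachable from a cycle (your ``free'' vertices) and constant on the rest. The difference is in the execution, and yours is simpler. The paper uses a hybrid construction: a special ``copy the predecessor, with a $+1$ correction'' function on the vertices of a chordless cycle, the function $\min\{x_k : k\in \inn(j)\}$ on all other vertices, and the constant $q-1$ on the non-free part (which it describes inductively as layers $I_0, I_1,\dots$ of vertices fed only by earlier layers, rather than via reachability from cycles); it also splits the verification into two cases according to whether $G$ has minimum in-degree at least one or has sources. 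Your uniform use of $\max$, with sources mapped to the constant $0$ and the baseline value $0$ on non-free vertices, eliminates both the special gadget on the cycle and the case distinction: one short verification covers sources, free vertices, and non-free vertices alike, and essential dependence of $\max$ on each argument gives $G(\tilde f)=G$ directly. Nothing in the paper's extra machinery is needed for the statement, so your argument can fairly be viewed as a streamlining of the published proof (the two are dual in spirit, $\max$ with baseline $0$ versus $\min$ with baseline $q-1$, but yours dispenses with the chordless-cycle function entirely).
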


\begin{proof}
If $g(G,q) \ge 1$, then $G$ has a cycle. Say that the vertices $0$ up to $l-1$ form a chordless cycle and consider the coding function $f \in F(G,q)$ defined by
$$
	f_i(x) = \begin{cases}
		x_{i-1 \mod l} &\text{if } x_j\geq x_{i-1 \mod l}\text{ for all }j\in\inn(i)\\
		x_{i-1 \mod l} + 1 &\text{otherwise}
	\end{cases}
$$
if $0 \le i \le l-1$ and
$$
	f_j(x) = \min\{x_k : k \in \inn(j)\}
$$
otherwise. Then it is clear that if $G$ is of minimal in-degree at least one then for any $a \in [q]$, $(a, \dots, a)$ is fixed by $f$. Thus $h(G,q) \ge \log_q |\Fix(f)| \ge 1$. Otherwise, let $I_0$ be the set of vertices of in-degree $0$, and for $0<k<n$, let $I_k$ be the set of vertices $i$ such that $\inn(i)\subseteq I_0\cup\dots\cup I_{k-1}$. Then, for any $a\in [q]$, the point  $x^a\in[q]^n$ such that $x^a_i=q-1$ if $i\in I_k$ for some $k$ and $x^a_i=a$ otherwise is a fixed point of $f$, thus $h(G,q) \ge \log_q |\Fix(f)| \ge 1$.
\end{proof}

%

\begin{proposition} \label{prop:g<h<g}
For all $q \ge 2$ and any digraph $G$,
$$
	g(G,q) \ge h(G,q) \ge  g(G,q-1) \log_q(q-1) \ge g(G,q) - n \log_q \left( 1+ \frac{1}{q-1} \right).
$$
\end{proposition}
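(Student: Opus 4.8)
The statement is a chain of three inequalities, which I would prove separately. Write $M_q := \max\{|\Fix(f)| : f \in F(G',q),\ G' \subseteq G\}$ and $m_q := \max\{|\Fix(f)| : f \in F(G,q)\}$, so that $g(G,q) = \log_q M_q$ and $h(G,q) = \log_q m_q$. By the change-of-base identity the middle term equals $g(G,q-1)\log_q(q-1) = \log_{q-1} M_{q-1}\cdot \log_q(q-1) = \log_q M_{q-1}$ (this reading is what I would adopt, which also makes the term meaningful at $q=2$). The three inequalities then reduce respectively to $M_q \ge m_q$, to $m_q \ge M_{q-1}$, and to $M_{q-1}\, q^n \ge M_q\,(q-1)^n$. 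The first is immediate, since $F(G,q)$ is one of the families (namely $G' = G$) over which the maximum defining $M_q$ is taken.

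For the second inequality I would lift an optimal $(q-1)$-ary coding function to a $q$-ary one with \emph{exact} interaction graph $G$. Take $g \in F(G',q-1)$ with $G' \subseteq G$ and $|\Fix(g)| = M_{q-1}$, over the alphabet $[q-1] = \{0,\dots,q-2\}$. For each vertex $w$ set $P = \inn_{G'}(w)$ and $Q = \inn_G(w) \setminus P$, and define $f_w(x) = g_w(x_P)$ whenever $x_u \ne q-1$ for all $u \in Q$, and $f_w(x) = q-1$ otherwise. On $[q-1]^n$ we have $f = g$, so $\Fix(g) \subseteq \Fix(f)$ and hence $m_q \ge |\Fix(f)| \ge M_{q-1}$. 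It then remains to verify $G(f) = G$: fixing the $Q$-coordinates away from $q-1$ recovers $f_w = g_w(x_P)$, so $f_w$ inherits the essential dependence of $g_w$ on each $p \in P$; and $f_w$ depends essentially on each $u \in Q$ because switching $x_u$ to $q-1$ changes $f_w$ from a value in $[q-1]$ to $q-1$. As $f_w$ reads no coordinate outside $P \cup Q = \inn_G(w)$, we get $\inn_{G(f)}(w) = \inn_G(w)$.

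For the third inequality I would use an averaging argument over translated sub-cubes followed by a projection. Fix an optimal $f \in F(G',q)$, $G' \subseteq G$, with $|\Fix(f)| = M_q$. For each $b \in [q]^n$ let $C_b = \prod_i \bigl([q]\setminus\{b_i\}\bigr)$, a copy of $[q-1]^n$. Each $x \in [q]^n$ lies in exactly $(q-1)^n$ of the $q^n$ sub-cubes, so $\sum_b |\Fix(f)\cap C_b| = M_q (q-1)^n$, and some $C_{b^*}$ satisfies $|\Fix(f)\cap C_{b^*}| \ge M_q (q-1)^n / q^n$. Relabelling each coordinate by a bijection $\phi_i : [q]\setminus\{b^*_i\} \to [q-1]$ and writing $\Phi = (\phi_i)_i$, I would set $g_i(y) = \phi_i\bigl(f_i(\Phi^{-1}(y))\bigr)$ when $f_i(\Phi^{-1}(y)) \ne b^*_i$ and $g_i(y) = 0$ otherwise. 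A fixed point $x \in C_{b^*}$ has $f_i(x) = x_i \ne b^*_i$ for all $i$, so its image $\Phi(x)$ is a fixed point of $g$; hence $M_{q-1} \ge |\Fix(g)| \ge |\Fix(f)\cap C_{b^*}| \ge M_q (q-1)^n/q^n$. Since $g_i$ reads $y_j$ only where $f_i$ reads $x_j$, we have $G(g) \subseteq G' \subseteq G$, so $g$ is admissible for $M_{q-1}$.

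The routine parts are the change-of-base bookkeeping and the essential-dependence check in the second step. The main obstacle is the projection in the third inequality: one must arrange that the restriction of $f$ to the chosen sub-cube descends to a genuine $(q-1)$-ary coding function whose interaction graph is still contained in $G$, and that every fixed point of $f$ inside the sub-cube survives the relabelling. The observation making this work is that a fixed point never takes the forbidden value $b^*_i$ in coordinate $i$, so the exceptional branch $g_i = 0$ is invoked only at non-fixed points and therefore costs nothing.
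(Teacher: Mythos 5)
Your proposal is correct, and for the first two inequalities it is essentially the paper's own argument: the lift of an optimal $(q-1)$-ary function to a $q$-ary one with interaction graph exactly $G$, by outputting the new symbol whenever an in-neighbour carries it, is precisely the paper's construction. (One cosmetic slip: as written, $f_w(x)=g_w(x_P)$ is undefined when every $Q$-coordinate avoids $q-1$ but some $P$-coordinate equals $q-1$, since $g_w$ only lives on $[q-1]^P$; the exceptional branch $f_w(x)=q-1$ must fire as soon as \emph{any} coordinate of $\inn_G(w)=P\cup Q$ equals $q-1$, exactly as in the paper's version, and with that adjustment your dependence checks go through verbatim.) Where you genuinely depart from the paper is the third inequality. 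The paper argues greedily and iteratively: vertex by vertex, it picks the value $a$ minimizing $|\{x\in\Fix(f):x_v=a\}|$, conjugates the $v$-th coordinate by a permutation sending $a$ to $q-1$, retains at least a $(1-q^{-1})$ fraction of the fixed points at each of the $n$ steps, and only then restricts to $[q-1]^n$. You replace this with a one-shot double count over the $q^n$ punctured subcubes $C_b=\prod_i\bigl([q]\setminus\{b_i\}\bigr)$: each fixed point lies in exactly $(q-1)^n$ of them, so some $C_{b^*}$ captures at least $M_q(q-1)^n/q^n$ fixed points, and a coordinate relabelling turns the restriction into a $(q-1)$-ary function with $G(g)\subseteq G' \subseteq G$, which suffices because the target is the non-strict quantity $g(G,q-1)$. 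The two routes give the identical bound, since the greedy per-coordinate minimum is at most the average that your global count exploits; your version buys a cleaner, non-iterative argument with no bookkeeping of the intermediate functions $f^{v,\sigma}$ and how their fixed-point sets transform under successive relabellings, while the paper's version makes explicit, vertex by vertex, which symbol gets deleted --- which is the same data as your $b^*$.
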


\begin{proof}
The first inequality is trivial. We now prove the second. Let $f: [q-1]^n \to [q-1]^n$ with interaction graph $G' \subseteq G$ and with $(q-1)^{g(G,q-1)}$ fixed points. Let $f'\in F(G,q)$ such that 
$$
	f'_v(x) = \begin{cases} 
	f_v(x) & \mbox{if } x_{\inn(v)} \in [q-1]^{\ind(v)}\\
	q & \mbox{otherwise}.
	\end{cases}
$$
Then $\Fix(f) \subseteq \Fix(f')$ and hence $(q-1)^{g(G,q-1)} = |\Fix(f)| \le |\Fix(f')| \le q^{h(G,q)}$.

Let us now prove the third inequality. Let $f: [q]^n \to [q]^n$ with interaction graph $G' \subseteq G$ and with $q^{g(G,q)}$ fixed points. Then for any vertex $v$ and any permutation $\pi$ of $[q]$, consider the coding function $f^{v,\pi}$ defined as 
$$
	f^{v,\pi}_u(x) = \begin{cases}
	\pi( f_v( \pi^{-1}(x_v), x_{-v})) & \mbox{if } u=v\\
	f_u( \pi^{-1}(x_v), x_{-v}) & \mbox{otherwise}.
	\end{cases}
$$
Then $x \in \Fix(f)$ if and only if $(\pi(x_v),x_{-v}) \in \Fix(f^{v,\pi})$ and hence $|\Fix(f^{v,\pi})| = q^{g(G,q)}$. Denote
\begin{align*}
	R(v,a) 	&:= |\{ x \in \Fix(f) : x_v = a \}| = |\{ x \in \Fix(f^{v,\pi}) : x_v = \pi(a) \}|,\\
	r(v)	&:= \min_{a \in [q]} R(v,a) \le q^{-1} q^{G(G,q)}.
\end{align*}
Consider a permutation $\sigma$ of $[q]$ such that $r(v) = R(v,\sigma^{-1}(q-1))$;  we then obtain 
\begin{align*}
	|\{ x \in \Fix(f^{v,\sigma}) : x_v \in [q-1]\}| &= |\{ x \in \Fix(f) : x_v \in \sigma^{-1}([q-1])\}| \\
	&=  \sum_{a \ne \sigma^{-1}(q-1)} R(v,a) \\
	&=  |\Fix(f)| - R(v,\sigma^{-1}(q-1))\\
	&\ge (1 - q^{-1}) q^{g(G,q)}.
\end{align*}
Thus, $f^{v,\sigma}$ has at least $(1 - q^{-1}) q^{G(G,q)}$ fixed points with $x_v \in [q-1]$. Applying this strategy recursively for all $n$ vertices, we find that there exists a coding function with at least $(1-q^{-1})^n q^{g(G,q)}$ fixed points in $[q-1]^n$. By considering the restriction of this coding function to $[q-1]^n$, we obtain $(q-1)^{g(G,q-1)} \ge (1-q^{-1})^n q^{g(G,q)}$.
\end{proof}

\begin{corollary}
We have $\lim_{q \to \infty} h(G,q) = \lim_{q \to \infty} g(G,q) = H(G)$ for all $G$, where $H(G)$ is the entropy of $G$ \cite{Rii06}.
\end{corollary}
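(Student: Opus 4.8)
The plan is to read off a squeeze from Proposition~\ref{prop:g<h<g} and combine it with the known convergence of the guessing number. The substantive inequalities have already been proved in Proposition~\ref{prop:g<h<g}, so the argument is essentially a limit computation: for a fixed digraph $G$ the dimension $n$ is a constant, and the only quantity that needs to be controlled is the gap between $g(G,q)$ and $h(G,q)$.

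First I would retain only the two outer bounds from the chain in Proposition~\ref{prop:g<h<g}, namely
\begin{equation*}
	g(G,q) \ge h(G,q) \ge g(G,q) - n \log_q\left( 1 + \frac{1}{q-1} \right),
\end{equation*}
which immediately gives $0 \le g(G,q) - h(G,q) \le n \log_q\left( 1 + \frac{1}{q-1} \right)$. Next I would show that the error term vanishes: writing $\log_q\left( 1 + \frac{1}{q-1} \right) = \ln\left( 1 + \frac{1}{q-1} \right) / \ln q$, the numerator is asymptotic to $1/(q-1) \to 0$ while the denominator tends to infinity, so the whole expression tends to $0$ as $q \to \infty$, and multiplying by the fixed constant $n$ preserves this. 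Hence $\lim_{q \to \infty} \left( g(G,q) - h(G,q) \right) = 0$.

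Finally, invoking the fact from \cite{Rii06} that $\lim_{q \to \infty} g(G,q) = H(G)$, the vanishing of the gap forces $\lim_{q \to \infty} h(G,q)$ to exist and to equal the same value $H(G)$, which is the claim. I do not expect a genuine obstacle in this corollary, since all the difficulty is absorbed into Proposition~\ref{prop:g<h<g}; the only point deserving a word of care is that the squeeze closes precisely because $n$ is held fixed as $q$ grows, so that the $n \log_q\left( 1 + \frac{1}{q-1} \right)$ term is a vanishing error rather than a competing quantity.
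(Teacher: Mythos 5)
Your proof is correct and is exactly the argument the paper intends: the corollary is stated without proof precisely because it follows by squeezing $h(G,q)$ between $g(G,q)$ and $g(G,q) - n\log_q\left(1+\frac{1}{q-1}\right)$ via Proposition~\ref{prop:g<h<g}, noting the error term vanishes for fixed $n$, and invoking $\lim_{q\to\infty} g(G,q) = H(G)$ from \cite{Rii06}. Nothing is missing.
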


\subsection{Fixed points and reduction} \label{sec:fixed_points_reduction}

\begin{proposition}[See \cite{NRTC11}] \label{prop:fixed_points_reduction}
Let $f$ be a coding function and $h$ be a reduced form of $f$. With the
convention that $f$ has a unique fixed point if $\dim(f)=0$, $f$ and $h$ have the same number of fixed points.
\end{proposition}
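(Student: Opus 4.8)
The plan is to reduce the whole statement to a single reduction step and then iterate. Since $h$ is a reduced form of $f$, we have $h = f^{-s}$ for some reduction sequence $s = (s_1,\dots,s_k)$, and by definition $f^{-s} = f^{-s_1\cdots s_k}$ is obtained by applying the $v$-reduction one vertex at a time. It therefore suffices, by induction on $k$, to prove the claim for a single loopless vertex: namely that $|\Fix(f)| = |\Fix(f^{-v})|$ whenever $G(f)$ has no loop on $v$. The base case $k=0$ is immediate (and the convention $\dim(f)=0 \Rightarrow |\Fix(f)|=1$ covers the situation where the reduction collapses the dimension all the way to $0$), so the proposition follows once the single-step case is settled.

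For the single step I would exhibit an explicit bijection $\phi : \Fix(f) \to \Fix(f^{-v})$ given by $\phi(x) = x_{-v}$, i.e.\ by simply forgetting the $v$-coordinate. The structural fact used throughout is that, since $v$ carries no loop, $f_v$ does not depend on $x_v$, so we may unambiguously write $f_v(x_{-v})$. First I would check that $\phi$ lands in $\Fix(f^{-v})$: if $x \in \Fix(f)$ then in particular $x_v = f_v(x) = f_v(x_{-v})$, and substituting this into the reduction rule gives, for every $i \neq v$,
\[
	f_i^{-v}(x_{-v}) = f_i(x_{-v}, f_v(x_{-v})) = f_i(x_{-v}, x_v) = f_i(x) = x_i,
\]
so that $x_{-v} \in \Fix(f^{-v})$. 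Conversely I would define the candidate inverse $\psi : \Fix(f^{-v}) \to \Fix(f)$ by letting $\psi(y)$ be the point of $A^V$ whose restriction to $V \setminus v$ is $y$ and whose $v$-coordinate equals $f_v(y)$. Then $\psi(y) \in \Fix(f)$: the $v$-coordinate is fixed because $f_v$ ignores $x_v$, and for $i \neq v$ one has $f_i(\psi(y)) = f_i(y, f_v(y)) = f_i^{-v}(y) = y_i$. Since $\phi$ and $\psi$ are visibly mutually inverse, $\phi$ is a bijection and $|\Fix(f)| = |\Fix(f^{-v})|$.

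I expect no serious obstacle here; the single point that must not be overlooked is that the entire argument rests on $v$ being loopless, which is exactly what guarantees $f_v(x) = f_v(x_{-v})$ and hence that $\psi$ is well defined (an interaction-preserving reduction past a looped vertex would not admit this clean extension). Once the single-step bijection is in place, iterating it along the reduction sequence $s$ yields $|\Fix(f)| = |\Fix(f^{-s_1})| = \cdots = |\Fix(f^{-s})| = |\Fix(h)|$, which completes the proof.
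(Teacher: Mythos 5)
Your proposal is correct and follows essentially the same route as the paper: reduce to a single $v$-reduction step and observe that $x \mapsto x_{-v}$ bijects $\Fix(f)$ with $\Fix(f^{-v})$, using that a loopless $v$ means $f_v$ ignores $x_v$. The paper states this equivalence in one line where you spell out the bijection and its inverse explicitly, but the underlying argument is identical.
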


\begin{proof}
Again, we can assume that $h = f^{-v}$ for some vertex $v$ without a loop in $G(f)$. We then have
$f_i(x) = x_i$ for all $i$ if and only if $f_v(x) = x_v$ and $f_i^{-v}(x) = f_i(x_{-v}, f_v(x)) = f_i(x) = x_i$ for all $i \ne v$.
\end{proof}

\begin{example}
Let $f$ be the coding function in Example \ref{ex:reduction}. The fixed points of $f$ and its successive reductions are respectively given by
\begin{align*}
	\Fix(f) &= \{(0,0,0,0), (1,1,1,1)\},\\
	\Fix(f^{-4}) &= \{(0,0,0), (1,1,1)\},\\
	\Fix(f^{-34}) &= \{(0,0), (1,1)\},\\
	\Fix(f^{-134}) &= \{0,1\},
\end{align*}
and successive reductions preserve the number of fixed points. In particular, since $k(G(f)) = 1$ and $f^{-134}$ is the identity of $A^{k(G(f))}$, Proposition \ref{prop:fixed_points_reduction} indicates that $f$ is indeed a solution for $G(f)$.
\end{example}

Let $S = V \setminus I$ be a feedback vertex set of $G(f)$. Then
according to Proposition~\ref{pro:subgraph1}, $f$
has a reduced form $f^{-I}$ with dimension $|S|$. So $f^{-I}$ has obviously at
most $q^{|S|}$ fixed points, and since $f$ and $f^{-I}$ have the same number
of fixed points, $f$ has at most $q^{|S|}$ fixed points. This provides
an alternative proof of (a modified form of) a theorem of
Aracena \cite{Ara08} (see Riis \cite{Rii06}): If $S$ is a feedback vertex set of $G(f)$, then $f$
has at most $q^{|S|}$ fixed points. In other words, $h(G,q) \le k(G)$; by obvious extension, we obtain $g(G,q) \le k(G)$ as well.

In particular, if $G(f)$ has no cycle, then $f$ has a reduced
form $f^{-V}$ of dimension zero. Then $f^{-V}$ has a unique fixed point and
we deduce that $f$ has a unique fixed point. This provides an
alternative proof of a theorem of Robert \cite{Rob95}: If $G(f)$ is acyclic, then $f$ has a unique fixed point.

As seen below, we cannot say anything interesting about the guessing number of reduced digraphs in general.

\begin{proposition} \label{prop:guessing_reduction}
The guessing number of $G$ and that of its reduction $G^{-v}$ are related as follows.
\begin{enumerate}
	\item Let $G$ be a digraph and $v$ a vertex of $G$, then $g(G,q) \le g(G^{-v},q)$ for all $q \ge 2$. 
	
	\item If $G$ is acyclic, then $h(G,q) = g(G,q) = g(G^{-v},q) = h(G^{-v},q) = 0$. 
	
	\item For any $n \ge 3$ there exists $G$ on $n$ vertices such that $g(G,q) = h(G,q) = 1$, $h(G^{-v},q) = \log_q(q^{n-1}-1)$ and $g(G^{-v},q) = n-1$ for all $q$.
\end{enumerate}
\end{proposition}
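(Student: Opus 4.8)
The three statements rest on the reduction machinery already established, so I would treat them in turn. For part~(1), if $G$ has a loop on $v$ then $G^{-v}=G$ by convention and the inequality is trivial, so I assume $v$ is loopless. I would take a coding function $f\in F(G',q)$ with $G'\subseteq G$ realising $|\Fix(f)|=q^{g(G,q)}$. Since $G'\subseteq G$ has no loop on $v$, the reduction $f^{-v}$ is defined, and Proposition~\ref{prop:fixed_points_reduction} gives $|\Fix(f^{-v})|=|\Fix(f)|$. By Proposition~\ref{pro:subgraph1}, $G(f^{-v})\subseteq (G')^{-v}$, and a direct check from Definition~\ref{def:G-v} shows that $v$-reduction is monotone (if $G'\subseteq G$ then $(G')^{-v}\subseteq G^{-v}$). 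Hence $G(f^{-v})\subseteq G^{-v}$, so $f^{-v}$ is admissible in the definition of $g(G^{-v},q)$, giving $g(G^{-v},q)\ge \log_q|\Fix(f^{-v})|=g(G,q)$.

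For part~(2), the key point is that $v$-reduction preserves acyclicity. Using Definition~\ref{def:G-I} with $I=\{v\}$, every arc of $G^{-v}$ is either an arc of $G$ or a shortcut for a path $u\to v\to w$ in $G$; hence any cycle or loop in $G^{-v}$ would lift to a closed walk of positive length in $G$, impossible when $G$ is acyclic. Thus $G^{-v}$ is acyclic, and Robert's theorem (quoted just before the proposition) shows that any coding function whose interaction graph is contained in an acyclic digraph has a unique fixed point. Therefore $h(G,q)=g(G,q)=h(G^{-v},q)=g(G^{-v},q)=\log_q 1=0$.

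For part~(3), I would exhibit one simple family. Let $G$ be the \emph{bidirected star} on $\{1,\dots,n\}$ whose centre $v=n$ is joined to each leaf $i\in\{1,\dots,n-1\}$ by the two opposite arcs $(i,n)$ and $(n,i)$, with no arcs among leaves. Removing the centre destroys every cycle, so $k(G)=1$ and $g(G,q)\le 1$; the digon $1\leftrightarrow n$ (take $f_1=x_n$, $f_n=x_1$, the rest constant) already produces $q$ fixed points, so $g(G,q)=1$. Since $h\le g$ by Proposition~\ref{prop:g<h<g}, while $f_i=x_n$ for leaves together with $f_n=\min(x_1,\dots,x_{n-1})$ realises $G(f)=G$ exactly and has the $q$ constant fixed points, we also get $h(G,q)=1$. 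Reducing the centre, Definition~\ref{def:G-I} turns each digon $i\leftrightarrow n$ into a loop on $i$ and each pair $i\to n\to j$ into an arc $(i,j)$, so $G^{-n}$ is the complete digraph with all loops on the $m:=n-1$ leaves. This graph has $\alpha=0$, hence $k=m$, and the identity (interaction graph = loops only) gives $q^m$ fixed points, so $g(G^{-n},q)=m=n-1$.

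It remains to compute $h(G^{-n},q)$, which I expect to be the only delicate point. The upper bound is immediate: the sole map with $q^m$ fixed points is the identity, whose interaction graph has no cross-arcs and so is not the complete digraph (as $m\ge 2$); hence every $f$ with $G(f)=G^{-n}$ has $|\Fix(f)|\le q^m-1$. For the matching lower bound I would take the ``almost-identity'' $f_i(x)=x_i+\mathbbm{1}[x=\mathbf{0}]$ over $A=\mathbb{Z}_q$, whose fixed points are exactly the $q^m-1$ points other than $\mathbf{0}$; a short verification (using $m\ge 2$, guaranteed by $n\ge 3$) shows each $f_i$ depends essentially on every coordinate, so $G(f)=G^{-n}$ and $h(G^{-n},q)=\log_q(q^{n-1}-1)$. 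The genuine content of the proposition is this construction: the bidirected star is simultaneously as cheap as possible (its centre is a feedback vertex set of size one, forcing $g=h=1$) and, upon reduction, manufactures every loop and arc among the leaves, forcing $g=n-1$; pinning down the strict value $q^{n-1}-1$ is the one place where a full-support perturbation of the identity losing a single fixed point must be produced.
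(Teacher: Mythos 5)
Your proof is correct, and for the decisive third part you use exactly the same example as the paper: the star $K_{n-1,1}$ with centre $n$, whose $n$-reduction is the looped clique $\mathring{K}_{n-1}$. The difference is only in what is made explicit versus delegated. The paper dismisses parts (1) and (2) as clear, obtains $h(G,q)=1$ from Lemma \ref{lem:g>1}, and reads off $h(\mathring{K}_{n-1},q)=\log_q(q^{n-1}-1)$ and $g(\mathring{K}_{n-1},q)=n-1$ from Theorem \ref{th:h_loops} and Example \ref{ex:h_loop}. You instead prove everything from scratch: part (1) via Propositions \ref{prop:fixed_points_reduction} and \ref{pro:subgraph1} together with the (easy, and correctly noted) monotonicity of $v$-reduction; part (2) via preservation of acyclicity under reduction and Robert's theorem; $h(G,q)=1$ via an explicit function realising $G$ exactly rather than via Lemma \ref{lem:g>1}; and the value $h(\mathring{K}_{n-1},q)$ by a direct two-sided argument, where the upper bound follows because the identity is the only map with $q^{n-1}$ fixed points and its interaction graph (loops only, $n-1\ge 2$) is not the looped clique, and the lower bound uses $f_i(x)=x_i+\mathbbm{1}\{x=\mathbf{0}\}$, which is precisely the construction in the proof of Theorem \ref{th:h_loops} specialised to the clique. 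Your route buys self-containedness, avoiding any appeal to the general and harder Theorem \ref{th:h_loops}; the paper's route buys brevity by reusing machinery it has already built. Both arguments are sound.
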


\begin{proof}
The first two statements are clear. Now consider the complete bipartite graph $G = K_{n-1,1}$, also called the star on $n$ vertices, where $n$ is the centre of the star. Then this vertex forms a feedback vertex set and hence $g(G,q) = 1$, and by Lemma \ref{lem:g>1}, $h(G,q) = 1$. Then $G^{-n}$ is the complete graph on $n-1$ vertices with a loop on each vertex, and we have $g(G^{-n},q)$ and $h(G^{-n},q)$ from Theorem \ref{th:h_loops} and Example \ref{ex:h_loop} below.
\end{proof}

\subsection{Fixed points of fully reduced coding functions} \label{sec:fixed_points_fully_reduced}

We are then interested in studying the number of fixed points of coding functions which are fully reduced, i.e. whose interaction graphs have a loop on each vertex. For any loopless digraph $G$, we denote the graph obtained from $G$ by adding a loop on each vertex as $\mathring{G}$. Clearly, $g(\mathring{G},q) = n$; moreover, $h(\mathring{G},q) = n$ if and only if $G$ is empty (this is the interaction graph of the identity function).

For any loopless $G$, an {\bf in-dominating set} (IDS) is a set of vertices $X \subseteq V$ such that for all $v \in V$ with positive in-degree, either $v \in X$ or $\inn(v) \cap X \ne \emptyset$. Denote the number of IDSs of $G$ of size $k$ as $I_k(G)$; clearly, $I_n(G) = 1$.

\begin{theorem} \label{th:h_loops}
For any loopless graph $G$, 
$$
	h(\mathring{G},q) = \log_q \sum_{k=0}^n (q-1)^k I_k(G).
$$
\end{theorem}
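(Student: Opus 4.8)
The plan is to interpret the right-hand side combinatorially and then prove the two matching inequalities $h(\mathring G,q)\ge\log_q T(G)$ and $h(\mathring G,q)\le\log_q T(G)$, where I abbreviate $T(G):=\sum_{k=0}^n(q-1)^kI_k(G)$. Writing $[q]=\{0,\dots,q-1\}$ and $\mathrm{supp}(y)=\{v:y_v\ne 0\}$ for $y\in[q]^n$, the first observation is that $T(G)$ counts exactly the vectors whose support is an IDS: for a fixed IDS $X$ there are $(q-1)^{|X|}$ vectors $y$ with $\mathrm{supp}(y)=X$, so $T(G)=\#\{y\in[q]^n:\mathrm{supp}(y)\text{ is an IDS}\}$. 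Thus the theorem is equivalent to $\max_{f\in F(\mathring G,q)}|\Fix(f)|=\#\{y:\mathrm{supp}(y)\text{ is an IDS}\}$.

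For the lower bound I would exhibit an $f$ attaining this set as its fixed points. Define, for each $v$, $f_v(x)=x_v$ unless $\ind_G(v)>0$ and $x_v=0$ and $x_u=0$ for all $u\in\inn_G(v)$, in which case set $f_v(x)=1$. Each $f_v$ reads only $x_v$ and $x_{\inn_G(v)}$, and toggling a single coordinate away from the all-zero closed in-neighbourhood shows that $f_v$ depends essentially on $x_v$ and (when $\ind_G(v)>0$) on each of its in-neighbours; hence $f\in F(\mathring G,q)$. A point $x$ is fixed iff the first branch holds at every $v$, i.e. iff no positive-in-degree $v$ has $x_v=0$ together with all $x_u=0$ ($u\in\inn_G(v)$), i.e. iff $\mathrm{supp}(x)$ is an IDS. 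This realises the target set exactly, giving $h(\mathring G,q)\ge\log_q T(G)$.

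The upper bound is the hard direction, and I would attack it by induction on $n$ after removing a sink $v$ of $G$ (a vertex of out-degree $0$). Since no $f_u$ with $u\ne v$ reads $x_v$, the restriction $\tilde f:=(f_u)_{u\ne v}$ lies in $F(\mathring{G'},q)$ with $G'=G\setminus v$, and grouping fixed points by the in-neighbour pattern $w=x_{\inn_G(v)}$ gives $|\Fix(f)|=\sum_{w}P(w)\,c(w)$, where $P(w)=\#\{y\in\Fix(\tilde f):y_{\inn(v)}=w\}$ and $c(w)=\#\{a:f_v(w,a)=a\}\le q$. On the combinatorial side the same split according to whether $v$ belongs to the IDS yields the recursion $T(G)=q\,T(G')-M$ with $M=\sum_{X\text{ IDS of }G',\,X\cap\inn(v)=\emptyset}(q-1)^{|X|}$; since the IDSs of $G'$ form an up-set, the injection $X\mapsto X\cup\{u_0\}$ (any $u_0\in\inn(v)$) keeps $M$ small relative to $T(G')$ (indeed $M\le T(G')/q$). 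So it remains to bound $\sum_w P(w)c(w)$ by $qT(G')-M$.

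The key point, and the main obstacle, is that $f_v$ must depend essentially on every in-neighbour, which forces $c(w)<q$ for at least one $w$; a single such deficient configuration already witnesses all of these dependences, so that the largest value of $\sum_w P(w)c(w)$ compatible with the interaction graph equals $q|\Fix(\tilde f)|-\min_w P(w)$. The desired inequality therefore reduces to $q\bigl(T(G')-|\Fix(\tilde f)|\bigr)+\min_w P(w)\ge M$ for every $\tilde f$. The bare induction hypothesis $|\Fix(\tilde f)|\le T(G')$ is insufficient, because for a (near-)optimal $\tilde f$ the first term vanishes and one must bound $\min_w P(w)$ from below by $M$. I would resolve this by strengthening the induction hypothesis to a joint statement controlling fibre minima, namely that for every coordinate set $S$ and every $g\in F(\mathring{G'},q)$ one has $q|\Fix(g)|-\min_{w\in[q]^S}\#\{x\in\Fix(g):x_S=w\}\le qT(G')-M_S(G')$ with $M_S(G')=\sum_{X\text{ IDS},\,X\cap S=\emptyset}(q-1)^{|X|}$; proving this reinforced statement by induction (tracking $S$ through successive sink removals) is where the real work lies. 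A second difficulty is that a sink need not exist when $G$ has cycles; for a terminal strongly connected component I would instead condition on a single coordinate $x_v=a$ and exploit the extra equation $f_v(\cdot,a)=a$ that conditioning leaves behind, which is precisely what must compensate for the loops that conditioning can destroy.
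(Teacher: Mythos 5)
Your lower bound is correct and is essentially the paper's: your function (reset to $1$ when the closed in-neighbourhood is all-zero) has as fixed points exactly the states whose support is an IDS, just like the paper's canonical function $g_i(x)=x_i+{\mathbbm 1}\{x_{\inn(i)}=(0,\dots,0)\}\bmod q$, and the identification of $\sum_k(q-1)^kI_k(G)$ with the number of such states is right. The upper bound, however, is not a proof but a plan with two admitted holes, and each hole is essentially as hard as the theorem itself. First, your sink-removal induction only closes if the strengthened hypothesis on fibre minima holds — that $q|\Fix(g)|-\min_{w\in[q]^S}\#\{x\in\Fix(g):x_S=w\}\le qT(G')-M_S(G')$ for every $S$ and every strict $g$ — and you explicitly defer its proof (``where the real work lies''). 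The bare bound $|\Fix(\tilde f)|\le T(G')$ genuinely does not suffice (as you note, the optimal $\tilde f$ makes the slack term vanish), so without this lemma nothing is established. Second, and more fundamentally, a sink need not exist at all: any $G$ whose every vertex lies on a cycle (e.g.\ $\mathring{K}_n$, the very first example the paper computes) is out of reach of the induction. Your fallback of conditioning on $x_v=a$ destroys exactly the hypothesis the theorem lives on: after conditioning, the restricted local functions need not depend essentially on their own variable or on each remaining in-neighbour, so their interaction graph is only a \emph{subgraph} of $\mathring{G}\setminus v$. Strictness is not a technicality here — without it the statement is false, since $g(\mathring{G},q)=n$ for every $G$ (witnessed by the identity map). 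So the claim that the leftover equation $f_v(\cdot,a)=a$ ``compensates'' is precisely the content that would need an argument, and none is given.

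For contrast, the paper's upper bound avoids induction entirely and is a two-step normalization of an arbitrary strict optimum $f$. Writing $f_i(x)=x_i+e_i(x_{\inn(i)})\bmod q$ for vertices of positive $G$-in-degree, it first replaces each $e_i$ by an indicator supported at a single point $y^i$ of the support of $e_i$; this can only enlarge the fixed-point set, because any fixed point has $e_i(x_{\inn(i)})=0$ and so avoids $y^i$. It then shows, one coordinate $k$ at a time, that resetting the $k$-th entry of every $y^i$ ($i\in\outn(k)$) to $0$ does not decrease the number of fixed points, via an explicit injection from $\Fix(f')\setminus\Fix(f'')$ into $\Fix(f'')\setminus\Fix(f')$. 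The end state of this process is exactly the lower-bound function, whose fixed points are counted by $\sum_{k}(q-1)^kI_k(G)$. If you want to salvage your route, you must both prove the reinforced fibre-minimum lemma and find a genuine replacement for sink removal on strongly connected pieces; the paper's exchange argument sidesteps both difficulties.
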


\begin{proof}
For any property $\mathcal{P}$, we denote the function which returns $1$ if $\mathcal{P}$ is satisfied and $0$ otherwise as ${\mathbbm 1}\{\mathcal{P}\}$. Also, we write $\inn(i)$ and $\outn(i)$ for $\inn_{\mathring{G}}(i)$ and $\outn_{\mathring{G}}(i)$ so that $i\in\inn(i)\cap \outn(i)$. We define the coding function $g \in F(\mathring{G},q)$ as
$$
	g_i(x) := \begin{cases}
	x_i & \mbox{if } \ind(i) = 1\\
	x_i + {\mathbbm 1}\{ x_{\inn(i)} = (0,\ldots,0) \}\mod q & \mbox{otherwise}.
	\end{cases}
$$
For any $x$, $x = g(x)$ if and only if $\{v \in V : x_v \ne 0\}$ is an in-dominating set. This proves the lower bound.

Now let $f$ with $G(f) = \mathring{G}$ and $q^{h(\mathring{G},q)}$ fixed points. Any local function of $f$ is expressed as
$$
	f_i(x) = \begin{cases}
	a(x_i) & \mbox{if } \ind(i) = 1\\
	x_i + e_i(x_{\inn(i)}) \mod q& \mbox{otherwise},
	\end{cases}
$$
where $e_i(x_{\inn(i)}) = f_i(x) - x_i \mod q$. It is clear that the optimal choice for the function $a$ is simply $a(x_i) = x_i$. Therefore, we only focus on the case where $\ind(i) \ge 2$ henceforth.

We now show that we can always assume that $e_i$ takes a non-zero only once. Let $Y = \{y \in A^{\ind(i)} : e_i(y) \ne 0\}$ and let $y^i \in Y$. Now, let $f'$ such that $f'_j = f_j$ for all $j \ne i$ and
$$
	f'_i(x) = x_i + {\mathbbm 1}\{x_{\inn(i)} = y^i\}  \mod q.
$$
Suppose $f(x) = x$, then $f'_j(x) = f_j(x) = x_j$ for all $j \ne i$; moreover, $x_{\inn(i)} \notin Y$ hence $x_{\inn(i)} \ne y^i$ and $f'_i(x) = f_i(x) = x_i$. Therefore, $|\Fix(f')| \ge |\Fix(f)|$.

Hence we can consider $f'$ instead. We now show that choosing $y^i = (0,\ldots,0)$ for any vertex $i$ maximises the number of fixed points. Consider a vertex $k$ and define a new function $f''$ as
\begin{align*}
	f''_j &= f'_j \quad \forall\, j \notin \outn(k),\\
	f''_i(x) &= x_i + {\mathbbm 1}\{x_{\inn(i)} = z^i\} \mod q \quad \forall\, i \in \outn(k),
\end{align*}
where $z^i_j = y^i_j$ if $j \ne k$ and $z^i_k = 0$. 
Let $x'\in\Fix(f')\setminus \Fix(f'')$. Then there exists $i\in\outn(k)$ such that $z^i=x'_{\inn(i)}\neq y^i$. Defining $x''$ by only changing the $k$-coordinate of $x'$ to $x''_k := y^i_k$ we obtain $z^i\neq x''_{\inn(i)}= y^i$ and $z^j\neq x''_{\inn(j)}$ for all $j\in\outn(k)\setminus i$ (because $x''_k>0=z^j_k$). Thus $x''\in\Fix(f'')\setminus \Fix(f')$. Hence, there is an injection from $\Fix(f')\setminus \Fix(f'')$ to $\Fix(f'')\setminus \Fix(f')$, thus $f''$ has at least as many fixed points as $f'$.

Thus, we can always choose $z^i_k = 0$ for all $i$ and all $k$, which yields the coding function $g$. 
\end{proof}


\begin{corollary} \label{cor:h_loop}
For any loopless $G$, we have 
$$
	h(\mathring{G},q) \ge n \log_q (q-1) + \log_q \left(1 + \frac{n}{q-1} \right),
$$ 
and hence $\lim_{q \to \infty} h(\mathring{G},q) = n$.
\end{corollary}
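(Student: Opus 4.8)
The plan is to read off $h(\mathring{G},q)$ from Theorem \ref{th:h_loops} and then lower-bound the sum $\sum_{k=0}^n (q-1)^k I_k(G)$ by keeping only its two top-order terms, those indexed by $k=n-1$ and $k=n$, discarding all remaining (nonnegative) terms. The value $I_n(G)=1$ is already recorded just before the theorem. The only observation requiring any thought is the identity $I_{n-1}(G)=n$, which I would establish as follows.

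I claim that for any loopless $G$, every set of the form $V\setminus v$ is an in-dominating set, so there are exactly $n$ such sets. Indeed, take $X=V\setminus v$. Every vertex $w\neq v$ lies in $X$, so the IDS condition is trivially satisfied at $w$; the only vertex left to check is $v$ itself. If $v$ has positive in-degree, then, since $G$ is loopless, $\inn(v)\subseteq V\setminus v = X$ is nonempty, whence $\inn(v)\cap X\neq\emptyset$; and if $v$ has in-degree $0$, the IDS condition does not apply to $v$ at all. Thus $V\setminus v$ is always an IDS, and since these are precisely the $(n-1)$-subsets of $V$, we get $I_{n-1}(G)=n$. This is the step where looplessness is used: it is exactly what makes every singleton deletion in-dominating.

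With these two values in hand, I would bound
\[
	\sum_{k=0}^n (q-1)^k I_k(G) \;\ge\; (q-1)^{n-1} I_{n-1}(G) + (q-1)^n I_n(G) \;=\; n(q-1)^{n-1} + (q-1)^n \;=\; (q-1)^n\left(1+\frac{n}{q-1}\right),
\]
and then take $\log_q$ of both sides. Using $\log_q\!\big[(q-1)^n(1+n/(q-1))\big] = n\log_q(q-1) + \log_q\!\big(1+\tfrac{n}{q-1}\big)$ together with Theorem \ref{th:h_loops} immediately yields the stated inequality for $h(\mathring{G},q)$.

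For the limiting statement I would combine this lower bound with the matching upper bound $h(\mathring{G},q)\le g(\mathring{G},q)=n$, where $h\le g$ is the trivial first inequality of Proposition \ref{prop:g<h<g} and $g(\mathring{G},q)=n$ is noted at the start of Section \ref{sec:fixed_points_fully_reduced}. As $q\to\infty$ we have $n\log_q(q-1)=n\,\log(q-1)/\log q\to n$ and $\log_q\!\big(1+n/(q-1)\big)\to 0$, so the lower bound tends to $n$; squeezed between this and the constant upper bound $n$, the quantity $h(\mathring{G},q)$ converges to $n$. The whole argument is elementary once $I_{n-1}(G)=n$ is in place, so I expect no real obstacle beyond carefully accounting for the two retained terms.
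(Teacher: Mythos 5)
Your proof is correct and follows essentially the same route as the paper: both obtain the bound from Theorem \ref{th:h_loops} by retaining only the $k=n-1$ and $k=n$ terms, using $I_{n-1}(G)=n$ (every set $V\setminus v$ is an IDS, which is where looplessness enters) and $I_n(G)=1$. Your additional details --- the verification that $V\setminus v$ is in-dominating and the squeeze argument for the limit via $h(\mathring{G},q)\le g(\mathring{G},q)=n$ --- are exactly what the paper leaves implicit.
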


\begin{proof}
For any $v \in V$, $V \setminus v$ is an IDS. Therefore, $I_{n-1}(G) = n$ and since $V$ is also an IDS, $I_n(G) = 1$. Therefore, $h(\mathring{G},q) \ge \log_q \left( n(q-1)^{n-1} + (q-1)^n \right)$.
\end{proof}

\begin{example} \label{ex:h_loop}
In general, computing the sum $\sum_k (q-1)^k I_k(G)$ is \#P-Complete. However, we can exhibit five special cases for which the formula is easy to derive; all graphs have vertex set $V = \{1,\dots,n\}$.
\begin{itemize}
	\item For the clique $K_n$ (with arcs $(i,j)$ for all $i \ne j$), 
	$$
		h(\mathring{K}_n, q) = \log_q( q^n - 1 ).
	$$
	
	\item For the transitive tournament $T_n$ (with arcs $(i,j)$ for all $i < j$),
	$$
		h(\mathring{T}_1,q) = 1 \quad \text{and} \quad h(\mathring{T}_n,q) = n-2 + \log_q(q^2-1) \,\, \forall \, n \ge 2.
	$$
	
	\item For the inward directed star $iS_n$ (with arcs $(i,n)$ for all $1 \le i \le n-1$), 
	$$
		h(\mathring{iS}_n, q) = \log_q( q^n -1 ).
	$$
	
	\item For the outward directed star $oS_n$ (with arcs $(n,i)$ for all $1 \le i \le n-1$), 
	$$
		h(\mathring{oS}_n, q) = \log_q( q^n - q^{n-1} + (q-1)^{n-1} ).
	$$
	
	\item For the undirected star (with arcs $(i,n)$ and $(n,i)$ for all $1 \le i \le n-1$), 
	$$
		h(\mathring{S}_n, q) = \log_q( q^n - q^{n-1} + (q-1)^{n-1} ).
	$$
\end{itemize}
\end{example}

\begin{proof}
For $K_n$, we have $I_0(K_n) = 0$ and $I_k(K_n) = \binom{n}{k}$ for all $1 \le k \le n$. Therefore, $\sum_k (q-1)^k I_k(K_n) = q^n - 1$. For $T_n$ ($n \ge 2$), a set of vertices $X$ is an in-dominating set if and only if it contains either the first or the second vertex. Therefore, $I_k(T_n) = \binom{n}{k} - \binom{n-2}{k}$ and $\sum_k (q-1)^k I_k(T_n) = q^n - q^{n-2}$. The proof for the stars is similar: we have $I_0(iS_n) = 0$ and $I_k(iS_n) = \binom{n}{k}$ for all $1 \le  k \le n$; we also have $I_{n-1}(S_n) = I_{n-1}(oS_n) = n$ and $I_k(S_n) = I_k(oS_n) = \binom{n-1}{k-1}$ otherwise.
\end{proof}

\section{Application to linear network coding solvability} \label{sec:NC_LS}

\subsection{Network coding solvability and guessing number} \label{sec:guessing_number}


We now apply the theory of coding function reduction to linear network coding solvability. The network coding solvability problem asks whether a given network coding instance is solvable, i.e. whether all messages can be transmitted to their destinations simultaneously. In particular, if the local functions $f_v$ are linear, then the instance is linearly solvable. For the study of solvability, any network coding instance can be converted into a \textbf{multiple unicast} without any loss of generality \cite{DZ06, Rii07}. A multiple unicast instance consists of  an acyclic network $N$ and a finite alphabet $A$ of cardinality $q$, where
\begin{itemize}
	\item each arc in the network carries an element of $A$;
	
	\item the instance is given in its so-called circuit representation, i.e. the same message flows on every arc coming out of the same vertex;
	
	\item the network has $k$ sources $s_1,\dots,s_k$, $k$ destinations $d_1,\dots,d_k$, and $\alpha$ intermediate nodes $i_{k+1}, \dots, i_{k+\alpha}$;
	
	\item each destination $d_i$ ($1 \le i \le k$) requests an element from $A$ from a corresponding source $s_i$.
\end{itemize}
This network coding instance is {\bf solvable} over $A$ if all the demands of the destinations can be satisfied at the same time.

The solvability of a multiple unicast instance can be decided by determining the guessing number of a related digraph. By merging each source with its corresponding destination node into one vertex, we form the digraph $G_N$ on $n := k + \alpha$ vertices. In general, we have $g(G_N,q) \leq k$ for all $q$ and the original network coding instance is solvable over $A$ if and only if $k(G_N) = k$ (this condition is purely graph-theoretic and does not involve coding functions; as such, we assume it is always satisfied) and $g(G_N,q) = k$, in which case we say that $G_N$ is solvable over $A$ \cite{Rii07} (an analogous result holds for linear solvability). Therefore, while network coding considers how the information flows from sources to destinations, the guessing number captures the intuitive notion of how much information circulates through the digraph. 

We illustrate the conversion of a network coding instance to a guessing number problem for the famous butterfly network in Figure \ref{fig:butterfly} below. It is well-known that the butterfly network is solvable over all alphabets, and conversely the clique $K_3$ has guessing number $2$ over any alphabet. The solutions are shown in Figure \ref{fig:butterfly} and indeed the operations done in the butterfly network correspond to the fixed point equations on the clique.

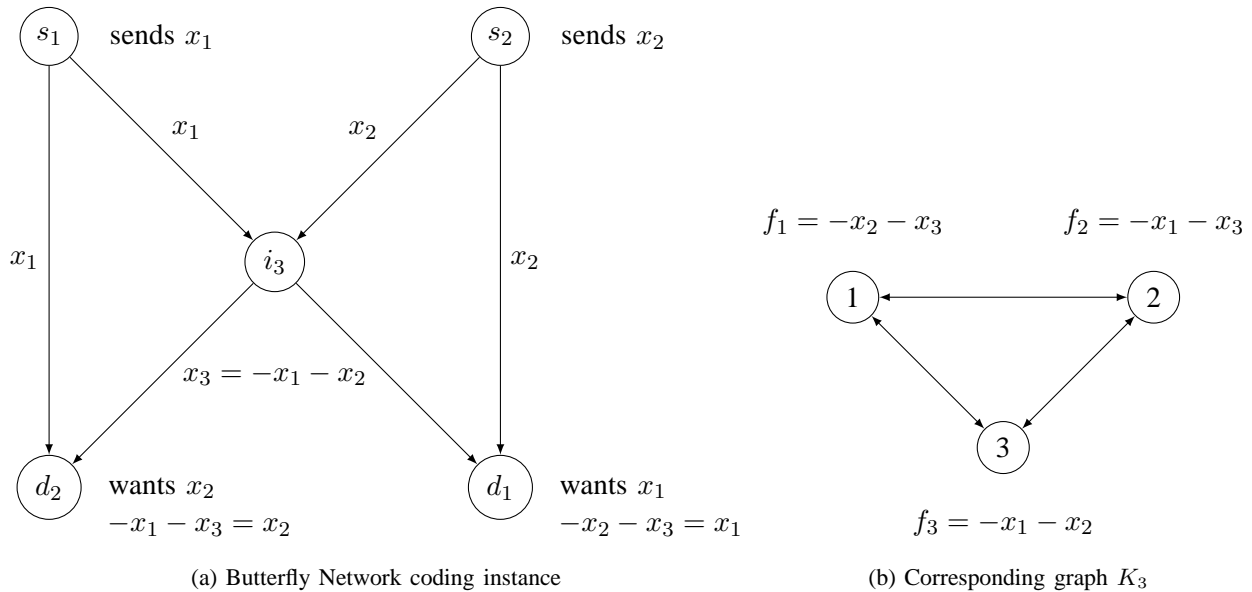
\begin{figure}
\centering
	\subfloat[Butterfly Network coding instance]
	{\begin{tikzpicture}[auto]
	
	\node[draw,shape=circle] (s1) at (0,6) {$s_1$};
	\node (source1) at (1.5,6) {sends $x_1$};
	
	\node[draw,shape=circle] (d1) at (6,0) {$d_1$};
	\node (destination1) at (7.5,0) {wants $x_1$};
	\node (dd1) at (8,-0.5) {$-x_2 - x_3 = x_1$};
	
	\node[draw,shape=circle] (s2) at (6,6) {$s_2$};
	\node (source2) at (7.5,6) {sends $x_2$};
	
	\node[draw,shape=circle] (d2) at (0,0) {$d_2$};
	\node (destination2) at (1.5,0) {wants $x_2$};
	\node (dd1) at (2,-0.5) {$-x_1 - x_3 = x_2$};
	
	\node[draw,shape=circle] (i3) at (3,3) {$i_3$};
	\node (ii3) at (3,1.5) {$x_3 = -x_1 - x_2$};
	
	\draw[-latex] (s1) to node [swap] {$x_1$} (d2);
	\draw[-latex] (s1) to node {$x_1$} (i3);
	\draw[-latex] (s2) to node {$x_2$} (d1);
	\draw[-latex] (s2) to node [swap] {$x_2$} (i3);
	
	\draw[-latex] (i3) -- (d1);
	\draw[-latex] (i3) -- (d2);
	\end{tikzpicture}} \subfloat[Corresponding graph $K_3$]
	{\begin{tikzpicture}	
	\node[draw,shape=circle] (x) at (0,7) {1};
	\node (xx) at (0,8) {$f_1 = -x_2 - x_3$};
	
	\node[draw,shape=circle] (y) at (4,7) {2};
	\node (yy) at (4,8) {$f_2 = -x_1 - x_3$};
	
	\node[draw,shape=circle] (z) at (2,5) {3};
	\node (zz) at (2,4) {$f_3 = -x_1 - x_2$};
	
	\draw[latex-latex] (x) -- (y);
	\draw[latex-latex] (x) -- (z);
	\draw[latex-latex] (y) -- (z);
	\end{tikzpicture}
	}
\caption{The butterfly network.}
\label{fig:butterfly}
\end{figure}

\subsection{Solvability by non-decreasing coding functions} \label{sec:NDS}

We first apply the reduction approach to network coding solvability by non-decreasing coding functions. Here, we consider $A = [q] = \{0, \dots, q-1\}$ with the usual linear order. We then say that a local function $f_v$ is non-decreasing if it is non-decreasing in every variable $x_u$; the coding function is {\bf non-decreasing} if all its local functions are non-decreasing. For instance, the min-net introduced in Section \ref{sec:G_reduction} is a non-decreasing coding function. Non-decreasing coding functions have been widely studied (see \cite{Ara08, Ric13, GRR14}); they are usually represented by an interaction graph with positive signs on all arcs (see \cite{Ric13} and the references therein for a survey of the work on signed interaction graphs). 

More closely related to network coding, {\bf routing} can be viewed as a non-decreasing coding function. Indeed, routing corresponds to local functions of the form $f_v(x_u)$ for some $u \in \inn(v)$. Routing then achieves a guessing number of $c(G)$, where $c$ is the maximum number of {\bf disjoint cycles} in $G$; thus a graph is solvable by routing if and only if $c(G) = k(G)$. It is shown in \cite{GRR14} that general non-decreasing coding functions can significantly outperform routing in terms of guessing number: for instance, on the clique $K_n$, routing achieves a guessing number of $c(K_n) = \lfloor n/2 \rfloor$, while non-decreasing functions achieve $n-3 - \epsilon$ when the alphabet is large enough \cite[Proposition 6]{GRR14}. However, Theorem \ref{th:monotone} below proves that non-decreasing functions do not outperform routing in terms of solvability.


\begin{theorem} \label{th:monotone}
For any digraph $G$, the following are equivalent:
\begin{enumerate}
	\item $G$ is solvable by a non-decreasing coding function over some alphabet.
	
	\item $G$ is solvable by a non-decreasing coding function over any alphabet.
	
	\item $G$ is solvable by routing.
	
	\item $c(G) = k(G)$.
\end{enumerate}
\end{theorem}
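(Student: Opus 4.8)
The plan is to prove the cycle of implications by establishing the three easy directions directly and reducing the whole statement to the single hard implication $(1)\Rightarrow(4)$. First, $(2)\Rightarrow(1)$ is immediate. For $(3)\Rightarrow(2)$ I would observe that routing is itself a non-decreasing coding function and that a routing solution is defined uniformly over every alphabet, so a graph solvable by routing is solvable by a non-decreasing coding function over any $A$. For $(4)\Leftrightarrow(3)$ I would invoke the fact recalled just before the statement, that routing attains guessing number exactly $c(G)$; hence $G$ is solvable by routing precisely when $c(G)=k(G)$. Everything therefore comes down to $(1)\Rightarrow(4)$.

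For $(1)\Rightarrow(4)$, let $f$ be a non-decreasing coding function solving $G$ over $A=[q]$, so $|\Fix(f)|=q^{k(G)}$ with $G(f)\subseteq G$. The first key observation is that reduction preserves non-decreasingness: since $f_i^{-v}(x)=f_i(x_{-v},f_v(x_{-v}))$ is a composition of non-decreasing maps, every $f^{-v}$, and hence every cumulative function $F^I$ and every reduction $f^{-I}$, is again non-decreasing. Because $|\Fix(f)|=q^{k(G)}$ together with $G(f)\subseteq G$ forces $k(G(f))=k(G)$, I would work inside $G(f)$: choose a maximum acyclic set $I$ of $G(f)$ and set $S=V\setminus I$, a minimum feedback vertex set with $|S|=k(G)$. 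By Theorem~\ref{th:reduction_f} the reduction $h:=f^{-I}$ is well defined, by Proposition~\ref{prop:fixed_points_reduction} it has the same $q^{k(G)}$ fixed points, and since $\dim(h)=|S|=k(G)$ it attains the maximum $q^{\dim(h)}$ possible on $A^{S}$. Consequently $h=\mathrm{id}_{A^{S}}$, i.e. $f_s(x_{-I},F^I(x_{-I}))=x_s$ for every $s\in S$ and every $x$.

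From the identity I would extract a cycle through each $s\in S$. Writing $\Phi(x_S)=(x_S,F^I(x_S))$, each $\Phi(x_S)$ is a fixed point of $f$ and, by the count, these are all of them. Starting from the least such fixed point and raising only the coordinate $x_s$, the output coordinate $s$ must increase (since $h_s(x_S)=x_s$) while, crucially, no other coordinate of $S$ changes (since $h_{s'}(x_S)=x_{s'}$ is independent of $x_s$); call this the non-leakage property. The set of coordinates that increase lies in $\{s\}\cup I$, and following a backward chain of active dependencies from $s$—each increased vertex must have an increased in-neighbour—yields a walk that cannot close inside $I$ because $G[I]$ is acyclic; it must therefore return to $s$, producing a cycle $C_s$ through $s$ whose internal vertices lie in $I$ (a loop on $s$, when present, gives the trivial cycle). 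Since distinct vertices of $S$ yield cycles meeting $S$ each only in its own vertex, this already gives $k(G)$ cycles.

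The hard part will be disjointness: the cycles $C_s$ may share internal vertices of $I$, and vertex-disjoint cycle packing is not controlled by feedback vertex set in general, so this step must genuinely exploit monotonicity. The route I would take is induction on $|V|$. Removing one minimum-feedback vertex $s$ and freezing $x_s$ to a constant $a$ produces a non-decreasing function $\tilde f$ on $V\setminus s$ whose fixed points are exactly the restrictions of the $\Phi(x_S)$ with $x_s=a$; there are $q^{k(G)-1}$ of them, and since $s$ lies in a minimum feedback vertex set one has $k(G\setminus s)=k(G)-1$, so $\tilde f$ solves $G\setminus s$ and the induction hypothesis gives $c(G\setminus s)=k(G)-1$. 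It then remains to upgrade a maximum packing of $G\setminus s$ by one cycle through $s$; I expect the non-leakage property to be exactly what guarantees a cycle through $s$ that is disjoint from such a packing, which would give $c(G)\ge k(G)$ and hence, with the always-valid bound $c(G)\le k(G)$, the desired equality $c(G)=k(G)$. Closing this last combinatorial step cleanly is the principal obstacle.
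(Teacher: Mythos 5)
Your three easy implications and your opening move coincide with the paper's proof: reduction preserves non-decreasingness, $h=f^{-I}$ has the same number of fixed points as $f$, and a counting argument forces $h=\mathrm{id}_{A^{V\setminus I}}$. Your backward-chain argument that each $s\in S$ lies on a cycle whose other vertices are in $I$ is also sound (each increased vertex must have an increased in-neighbour, the chain cannot close inside the acyclic set $I$, so it closes through $s$). The genuine gap is exactly the one you flag: pairwise disjointness. The induction you sketch cannot close it, because the inductive hypothesis on $G\setminus s$ only asserts the \emph{existence} of some maximum cycle packing there, with no control whatsoever over which vertices of $I$ it occupies, whereas your cycle through $s$ is produced from the function $f$; the two constructions are unrelated, so nothing forces them to be disjoint, and rerouting a packing to avoid a prescribed cycle is precisely the original difficulty, not a smaller instance of it. Note also that the increase-sets you build from the least fixed point genuinely can overlap for distinct $s,s'\in S$: if some $i\in I$ has a local function responding to both (say $f_i=\max(x_s,x_{s'})$), then $i$ increases both when $x_s$ is raised and when $x_{s'}$ is raised.

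The missing idea in the paper is a different, antisymmetric choice of the candidate sets, which yields all $k(G)$ disjoint cycles simultaneously and needs no induction. For $v\notin I$ let $e_v\in[q]^{V\setminus I}$ be the unit vector at $v$ and $\overline{e_v}=1-e_v$ (ones everywhere except $0$ at $v$), and set $C_v:=\{v\}\cup\{i\in I: F^I_i(e_v)>F^I_i(\overline{e_v})\}$: membership demands that raising $x_v$ alone pushes $F^I_i$ strictly above what raising \emph{all the other} feedback coordinates does. Disjointness then becomes a four-line monotonicity computation: if $i\in C_u\cap C_v$ with $u\ne v$, then since $\overline{e_u}\ge e_v$ and $\overline{e_v}\ge e_u$ coordinatewise, $F^I_i(e_u)>F^I_i(\overline{e_u})\ge F^I_i(e_v)>F^I_i(\overline{e_v})\ge F^I_i(e_u)$, a contradiction. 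Cycle existence inside each $C_v$ is then essentially your argument run on the pair $(e_v,\overline{e_v})$: any $i\in C_v\setminus\{v\}$ with $\inn(i)\cap C_v=\emptyset$ would satisfy $F^I_i(\overline{e_v})\ge F^I_i(e_v)$ by monotonicity, contradicting $i\in C_v$, and for $i=v$ the identity $f^{-I}=\mathrm{id}$ gives $0=f_v^{-I}(\overline{e_v})\ge f_v^{-I}(e_v)=1$. So the repair is not a different global strategy but replacing your least-fixed-point comparison by the paired comparison of $e_v$ against its complement $\overline{e_v}$; that is what makes monotonicity deliver disjointness.
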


\begin{proof}
We only have to prove that the first property implies the fourth one. Let $f: [q]^n \to [q]^n$ be a non-decreasing coding function with $G(f) \subseteq G$ and $q^{k(G)}$ fixed points. Let $I$ be a maximal acyclic set of $G$ such that $|V\setminus I|=k(G)$. Since $f^{-I}$ and $f$ have the same number of fixed points, $f^{-I}$ is the identity on $[q]^{V\setminus I}$. 

For every vertex $v \notin I$, let $e_v \in [q]^{V\setminus I}$ be the $v$-th unit vector defined by $(e_v)_v = 1$ and $(e_v)_u = 0$ for all $u \ne v$; we set $\overline{e_v} = 1 - e_v$. Consider
$$
	C_v := \{v \} \cup \{i \in I: F^I_i(e_v) > F^I_i(\overline{e_v})\}. 
$$

\begin{claim*} \label{claim:disjoint}
For all distinct $u,v \notin I$, $C_u \cap C_v = \emptyset$.
\end{claim*}

\begin{proof}
For any $i \in I$, $F^I_i$ is a non-decreasing function of $x_{-I}$. Since $\overline{e_u} \ge e_v$, we obtain
for any $i \in C_u \cap C_v$:
$$
	F^I_i(e_u) > F^I_i(\overline{e_u}) \ge F^I_i(e_v) > F^I_i(\overline{e_v});
$$
yet $\overline{e_v} \ge e_u$ implies $F^I_i(\overline{e_v}) \ge F^I_i(e_u)$, which is the desired contradiction.
\end{proof}

\begin{claim*} \label{claim:cycle}
For all $v \notin I$, $C_v$ contains a cycle.
\end{claim*}

\begin{proof}
We only need to show that for all $i \in C_v$, $C_v \cap \inn(i) \ne \emptyset$. Firstly, let $i \in C_v \setminus \{v\}$, and suppose that $C_v \cap \inn(i) = \emptyset$, then $F_{\inn(i) \cap I}^I(\overline{e_v}) \ge F_{\inn(i) \cap I}^I(e_v)$ and $(\overline{e_v})_{-v} \ge (e_v)_{-v}$. Hence
$$
	F_i^I(\overline{e_v}) =  f_i \left( (\overline{e_v})_{-v}, F_{\inn(i) \cap I}^I(\overline{e_v}) \right)
	\ge f_i \left( (e_v)_{-v}, F_{\inn(i) \cap I}^I(e_v) \right)
	= F_i^I(e_v),
$$
which contradicts the fact that $i \in C_v$. Secondly, let $i = v$, and again suppose that $C_v \cap \inn(v) = \emptyset$; thus $F^I_{\inn(v) \cap I}(\overline{e_v}) \ge F^I_{\inn(v) \cap I}(e_v)$. Recall that $f^{-I}$ is the identity, hence
$$
	0 = f_v^{-I}(\overline{e_v}) = f_v^{-I} \left( (\overline{e_v})_{-v}, F^I_{\inn(v) \cap I}(\overline{e_v}) \right) \ge f_v^{-I} \left( (e_v)_{-v}, F^I_{\inn(v) \cap I}(e_v) \right) = f_v^{-I}(e_v) = 1.
$$
\end{proof}

By the claims above, we have $n - |I| = k(G)$ disjoint cycles in the graph $G$.
%
\end{proof}

\subsection{Linearly solvable undirected graphs} \label{sec:undirected}

We are now interested in linear coding functions. A {\bf linear coding function} is any coding function $f: R^V \to R^V$, where $R$ is a commutative ring of order $q$ and such that $f_i(x) = \sum_{u \in \inn(i)} a_{i,u} x_u$ for some $a_{i,u}$ invertible in $R$. For any $G$ we denote the set of linear coding functions with interaction graph $G$ over a commutative ring of order $q$ as $L(G,q)$. The set of fixed points of a linear coding function forms a submodule of $R^V$, hence we denote the $q$-{\bf linear guessing number} \cite{Rii07, GR11, CFHL14} and $q$-{\bf strict linear guessing number} of $G$ respectively as 
\begin{align*}
	g_L(G,q) &= \max \{ \mathrm{dim}\, \Fix(f) : f \in L(G',q), G' \subseteq G\},\\
	h_L(G,q) &= \max \{ \mathrm{dim}\, \Fix(f) : f \in L(G,q)\}.
\end{align*}
We say that a digraph $G$ is {\bf linearly solvable} if $g_L(G,q) = g(G,q) = k(G)$ for some $q$. We say it is {\bf strictly linearly solvable} if $h_L(G,q) = h(G,q) = k(G)$. It is easy to prove that $G$ is linearly solvable if and only if $G$ has a strictly linearly solvable spanning subgraph $H$ with $k(G) = k(H)$.

The minimum number of parts in any partition of the vertex set of $G$ into cliques is denoted as $\cp(G)$; if $G$ is undirected, then $\cp(G) = \chi(\bar{G})$, the chromatic number of its complement. We say that a digraph $G$ is {\bf vertex-full} ({\bf edge-full}, respectively) if all its vertices (arcs, respectively) can be covered by $\alpha(G)$ cliques. In other words, $G$ is vertex-full if and only if $\cp(G) = \alpha(G)$. Clearly, if $G$ is edge-full, then it is undirected; a characterisation of edge-full graphs is given in \cite{BD83} and we shall give another one in Proposition \ref{prop:edge-full} below.  

We can easily obtain a classical lower bound on the guessing number. Firstly, the clique $K_n$ is always linearly solvable over all alphabets by the following coding function $f$ (see Figure \ref{fig:butterfly}): 
$$
	f_i(x_{-i}) = - \sum_{j \ne i} x_j \mod q.
$$
Indeed, all states summing to zero mod $q$ are fixed by $f$ and hence $g_L(K_n,q) = g(K_n,q) = n-1$. (For $n=1$, we simply set $f(x) = 0$.) Therefore, if we partition the vertex set of $G$ into $\cp(G)$ cliques and apply the corresponding coding function on each clique, we obtain a linear coding function with $q^{n-\cp(G)}$ fixed points, thus yielding  \cite{CM11}
$$
	g_L(G,q) \ge n-\cp(G).
$$

This lower bound implies that vertex-full graphs are linearly solvable over all alphabets. On the other hand, many classes of linearly solvable digraphs are not vertex-full, e.g. the directed cycle (see \cite{GR11} for more striking examples). Until now, however, the only known linearly solvable undirected graphs are vertex-full. Based on the results in \cite{CDR14}, we can construct the first example of a linearly solvable undirected graph which is not vertex-full. Firstly, for two digraphs $G_1$ and $G_2$ on disjoint vertex sets of sizes $n_1$ and $n_2$ respectively, their {\bf bidirectional union} is $G := G_1 \bar{\cup} G_2$ where $G_1$ and $G_2$ are linked by all possible edges between them. The linear guessing number then satisfies for all $q$ \cite{GR11}
$$
	g_L(G,q) = \min\{n_1 + g_L(G_2,q), n_2 + g_L(G_1,q)\}.
$$

\begin{theorem} \label{th:solvable_not_vf}
There exists an undirected graph which is linearly solvable, and yet it is not vertex-full. 
\end{theorem}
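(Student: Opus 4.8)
The plan is to use the bidirectional union formula stated just before the theorem, together with a building block $P$ drawn from \cite{CDR14} whose \emph{linear} guessing number strictly exceeds the clique-cover lower bound $|P|-\cp(P)$. The guiding principle is that the clique-cover bound $g_L(H,q)\ge |H|-\cp(H)$ holds with equality for vertex-full graphs, so any graph on which scalar linear coding beats this bound is automatically non-vertex-full; the difficulty is only that such a graph is typically \emph{not} solvable on its own, and we will remedy this by joining it to a small vertex-full helper.

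First I would fix, using \cite{CDR14}, an undirected graph $P_0$ and an order $q_0$ with $\delta := g_L(P_0,q_0)-\bigl(|P_0|-\cp(P_0)\bigr)>0$. Since beating the clique-cover bound forces $\cp(P_0)>\alpha(P_0)$, the graph $P_0$ is non-vertex-full. To guarantee a gap of at least $2$ (needed below), I would amplify by taking $t$ disjoint copies $P:=P_0\sqcup\cdots\sqcup P_0$. Because $\alpha$, $\cp$, and the achievable fixed-point dimension are all additive over disjoint unions (run one optimal code per component), we get $\alpha(P)=t\,\alpha(P_0)$, $\cp(P)=t\,\cp(P_0)$, hence $\cp(P)-\alpha(P)\ge t$, and $g_L(P,q_0)\ge |P|-\cp(P)+t\delta$. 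Choosing $t\ge\max\{2,1/\delta\}$ yields a graph $P$ with $c:=\cp(P)\ge\alpha(P)+2$ and $g_L(P,q_0)\ge |P|-c+1$.

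Next comes the main construction. Let $K$ be the edgeless graph on $c-1$ vertices, so that $\alpha(K)=\cp(K)=c-1$, $g_L(K,q_0)=0$, and $k(K)=0$, and set $G:=P\,\bar{\cup}\,K$. Since every vertex of $P$ is joined to every vertex of $K$, any independent set of $G$ lies entirely in one side, whence $\alpha(G)=\max\{\alpha(P),c-1\}=c-1$ (using $c-1>\alpha(P)$); dually, a clique partition of $P$ into $c$ cliques can absorb the $c-1$ pairwise non-adjacent vertices of $K$ into distinct parts, giving $\cp(G)=c$. Thus $\cp(G)=c>c-1=\alpha(G)$, so $G$ is not vertex-full, while $k(G)=\bigl(|P|+c-1\bigr)-(c-1)=|P|$.

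Finally I would invoke the bidirectional union formula: $g_L(G,q_0)=\min\{|P|+g_L(K,q_0),\,(c-1)+g_L(P,q_0)\}=\min\{|P|,\,(c-1)+g_L(P,q_0)\}$, and since $(c-1)+g_L(P,q_0)\ge (c-1)+(|P|-c+1)=|P|$, the minimum equals $|P|=k(G)$. As $g_L\le g\le k$ always holds, we also get $g(G,q_0)=k(G)$, so $G$ is linearly solvable, completing the proof. The main obstacle is precisely the supply of the building block $P_0$: the pentagon $C_5$ famously beats the \emph{fractional} clique-cover bound, but only by $\tfrac12$ and only through vector/nonlinear codes — a short rank computation shows scalar linear coding over a field gives $C_5$ merely the routing value. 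Hence the essential external input is the existence in \cite{CDR14} of a graph on which \emph{scalar} linear codes genuinely beat the clique-cover bound; once that is in hand, everything above is routine bookkeeping with $\alpha$, $\cp$, and the union formula.
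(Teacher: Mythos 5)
Your proposal is correct and follows essentially the same route as the paper: the paper takes the Clebsch graph $\mathfrak{C}$ from \cite{CDR14} (with $\alpha(\mathfrak{C})=5$, $\cp(\mathfrak{C})=8$, $g_L(\mathfrak{C},3)\ge 10$, so your $\delta = 2$) as the building block and forms the bidirectional union $E_6\,\bar{\cup}\,\mathfrak{C}$, then applies the same union formula to conclude $g_L(G,3)=16=k(G)$ while $\cp(G)>\alpha(G)=6$. Your version only adds an amplification step (unnecessary for this instance, since $\cp(\mathfrak{C})-\alpha(\mathfrak{C})=3$ and $\delta=2$ already suffice) and chooses the edgeless part of size $\cp(P)-1$ rather than $6$; the external input you flag as essential is exactly what the paper's citation of \cite{CDR14} supplies.
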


\begin{proof}
Let $G_1 := E_6$ denote the empty graph on $n_1 := 6$ vertices and with linear guessing number $0$ for any $q$. Let $G_2 := \mathfrak{C}$ denote the Clebsch graph: $\mathfrak{C}$ has $n_2 := 16$ vertices, independence number $\alpha(\mathfrak{C}) = 5$, and $g_L(\mathfrak{C},3) \ge 10$ \cite{CDR14}. Since $\mathfrak{C}$ is triangle-free but not vertex-full, it is not linearly solvable as we shall see below. Nonetheless, the graph $G := E_6 \bar{\cup} \mathfrak{C}$ is linearly solvable but not vertex-full, since $n = 22$, $\alpha(G) = 6$ and $g_L(G,3) = 16$.
\end{proof}

\begin{definition} \label{def:compatible}
Let $I$ be a non-empty acyclic set $I$ of a digraph $G$.
\begin{itemize}
	\item $I$ is {\bf strongly compatible} if for all $u,v \notin I$, $(u,v) \in G$ if and only if there is a path from $u$ to $v$ through $I$. 
	\item $I$ is {\bf weakly compatible} if for all $u,v \notin I$, the following holds: if $(u,v)$ is an arc, then there is a path from $u$ to $v$ through $I$; otherwise, there is either no path from $u$ to $v$ through $I$ or there are at least two paths from $u$ to $v$ through $I$.
\end{itemize}
\end{definition}

\begin{theorem} \label{th:SLS_weakly}
If $G$ is strictly linearly solvable over some alphabet, then all maximum acyclic sets of $G$ are weakly compatible.
\end{theorem}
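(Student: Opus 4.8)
The plan is to eliminate a maximum acyclic set by reduction and then read off the compatibility conditions from the fact that the reduced function must be the identity. Suppose $G$ is strictly linearly solvable, so there is a linear $f \in L(G,q)$ over a commutative ring $R$ of order $q$ with exactly $q^{k(G)}$ fixed points (the maximum allowed by the fixed-point bound); write $f_i(x) = \sum_{u \in \inn_G(i)} a_{i,u} x_u$ with every $a_{i,u}$ invertible. Fix a maximum acyclic set $I$, so $S := V \setminus I$ has $|S| = k(G)$. Since the $I$-reduction is obtained by substituting the (linear) cumulative function into each local function, $f^{-I}: R^S \to R^S$ is again linear, and by Proposition~\ref{prop:fixed_points_reduction} it has $|\Fix(f^{-I})| = |\Fix(f)| = q^{k(G)} = |R^S|$ fixed points. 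Hence \emph{every} point of $R^S$ is fixed, i.e. $f^{-I}$ is the identity map.

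The crux is to express the linear coefficients of $f^{-I}$ combinatorially. First I would establish a transfer formula: for $u,v \in S$, the coefficient of $x_u$ in $f^{-I}_v$ equals $\sum_P \prod_{(a,b) \in P} a_{b,a}$, where $P$ ranges over all directed paths from $u$ to $v$ in $G$ whose internal vertices all lie in $I$ (the direct arc $(u,v)$ being the length-one path with no internal vertex). This follows by induction on $|I|$ from the single-vertex rule $f^{-z}_v(x) = f_v(x_{-z}, f_z(x_{-z}))$, which augments the coefficient of $x_u$ by the term $a_{v,z}\,a_{z,u}$, exactly the path-concatenation recursion; Theorem~\ref{th:reduction_f} guarantees the elimination order is immaterial.

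Now I read the identity $f^{-I} = \mathrm{id}$ coefficientwise for distinct $u,v \in S$, so the coefficient of $x_u$ in $f^{-I}_v$ vanishes. Separating the direct arc from the paths with at least one internal vertex in $I$ (which are the ``paths through $I$'' of Definition~\ref{def:compatible}), this reads $a_{v,u}\,\mathbbm{1}\{(u,v) \in G\} + \Sigma = 0$, where $\Sigma$ is the sum of gains over the paths through $I$. If $(u,v) \in G$, then $a_{v,u}$ is invertible, so $\Sigma = -a_{v,u} \neq 0$; a nonzero sum is a nonempty sum, so at least one path from $u$ to $v$ through $I$ exists. If $(u,v) \notin G$, then $\Sigma = 0$; but a single path contributes a gain that is a product of invertible elements, hence nonzero, so there cannot be exactly one path, leaving either none or at least two. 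These are precisely the defining conditions of weak compatibility, so every maximum acyclic set is weakly compatible.

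The main obstacle is the transfer formula together with the careful use of invertibility: one must verify that each individual path gain is a nonzero (indeed invertible) element of $R$, so that a lone path can never produce a vanishing coefficient, while two or more paths are permitted precisely because their gains may cancel—this cancellation is exactly what separates weak from strong compatibility. A secondary point to pin down is the convention that a path through $I$ carries at least one internal vertex, so the direct arc is counted separately; the diagonal coefficients $b_{v,v} = 1$ play no role for distinct $u,v$, so the argument above completes the claim.
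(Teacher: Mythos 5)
Your proposal is correct and follows essentially the same route as the paper: reduce by the maximum acyclic set $I$, use preservation of fixed points to force $f^{-I} = \mathrm{id}$, and analyze the coefficient of $x_u$ in $f^{-I}_v$ as a sum of path gains whose factors are invertible. The only difference is organizational: you prove one unified transfer (path-sum) formula and read off both conditions of weak compatibility from it, whereas the paper treats the two possible violations separately --- via Lemma~\ref{lem:contraction_linear} for an arc $(u,v)$ with no path through $I$, and via the explicit product $\prod_{i} c_{u_i,u_{i-1}} \neq 0$ for a unique path with no arc.
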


The proof of the theorem is based on the following lemma: if we consider the interaction graph $G(f)$ of a linear coding function $f$, we can only \emph{erase} an arc from $G(f)$ if we use a path through $I$.

\begin{lemma} \label{lem:contraction_linear}
Let $f$ be a linear coding function and $I$ be an acyclic set of $G(f)$, and any vertices $u,v$ outside of $I$ such that $(u,v) \in G(f)$ but $(u,v) \notin G(f^{-I})$. Then there exists a path in $G(f)$ from $u$ to $v$ through $I$.
\end{lemma}

\begin{proof}
Suppose $(u,v) \in G(f)$ but $(u,v) \notin G(f^{-I})$ and that there is no path in $G(f)$ from $u$ to $v$ through $I$. Denote $f_v(x) = \sum_{j \in \inn(v)} a_j x_j$. Denote $N = \inn(v) \cap I$, then there is no path in $G(f)$ from $u$ to $N$ through $I$; as such there is no arc from $u$ to $N$ in $G(f^{-(I \setminus N)})$. Thus, we have 
$$
	f_v^{-I}(x) = a_u x_u + \sum_{j \ne u} b_j x_j;
$$
the only occurrence of the variable $x_u$ is due to the original $f_v(x)$.
\end{proof}

\begin{proof}[Proof of Theorem \ref{th:SLS_weakly}]
Suppose that $I$ is a maximum acyclic set of $G$ which is not weakly compatible. There are two ways weak compatibility can be violated.
\begin{enumerate}
	\item Let $u,v \notin I$ such that $(u,v) \in G$ and yet there is no path from $u$ to $v$ through $I$. Then by Lemma \ref{lem:contraction_linear} for any linear coding function $f \in L(G,q)$, $(u,v)$ is an arc in $G(f^{-I})$, thus by Theorem \ref{th:h_loops} $f$ has fewer than $q^{k(G)}$ fixed points.
	
	\item Let $u,v \notin I$ such that $(u,v) \notin G$ and yet there is a unique path from $u$ to $v$ through $I$. If $f_a(x) = \sum_b c_{a,b} x_b$ for all $a$ and $b \in \inn(a)$ and if the path is $u_0 = u, u_1, \dots, u_k = v$, it is easy to check that the $x_u$ term in $f^{-I}_v$ is $\prod_{i=1}^k c_{u_i, u_{i-1}} \ne 0$. Again $f^{-I}$ is not the identity and $f$ has fewer than $q^{k(G)}$ fixed points.
\end{enumerate}
\end{proof}

Not all undirected graphs $G$ where all the maximum independent sets are weakly compatible are vertex-full. For instance, the bidirectional union $G = E_3 \bar{\cup} \bar{\mathfrak{G}}$ of an independent set of size three $E_3$ with the complement of the Gr\"otzsch graph $\bar{\mathfrak{G}}$ is a counter-example. The Gr\"otzsch graph is illustrated in Figure \ref{fig:non-VF}; it is triangle-free and has chromatic number $4$. Therefore, its complement is not vertex-full: $\alpha(\bar{\mathfrak{G}}) = 2$ while $\cp(\bar{\mathfrak{G}}) = 4$. In $G$, $E_3$ then forms a maximum independent set, which is clearly weakly compatible; however $\alpha(G) = 3$ while $\cp(G) = 4$, thus $G$ is not vertex-full.

Nonetheless, we can classify linearly solvable triangle-free undirected graphs. A {\bf matching} in a digraph is a union of disjoint undirected edges in the digraph. We denote the number of edges in a maximum matching in the digraph $G$ as $\mu(G)$. If $G$ is undirected, then it is easily seen that $c(G) = \mu(G)$; hence $G$ is solvable by routing if and only if $\mu(G) = k(G)$. Moreover, if $G$ is triangle-free, then these properties are in turn equivalent to $G$ being vertex-full. Theorem \ref{th:triangle_free} then proves that if an undirected triangle-free graph is solvable by linear network coding, then it is solvable by routing.

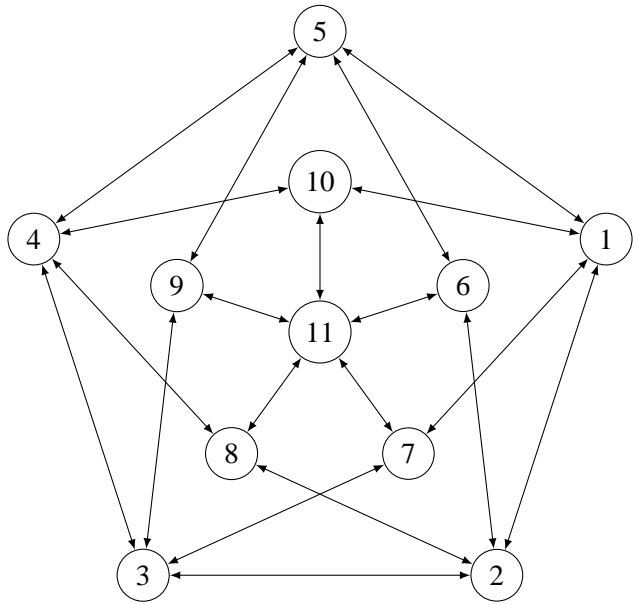
\begin{figure}[Ht]
\begin{center}
\begin{tikzpicture}
 \tikzstyle{every node}=[draw,shape=circle];

	\node (11)  at (0,0) {11};

\foreach \x in {6,...,10}{
	\node (\x) at (90-5*72-72*\x:2) {\x};
	\draw[latex-latex] (\x) -- (11);
}

\foreach \x in {1,...,5}{
	\node (\x) at (90-72*\x:4) {\x};
}

    \draw[latex-latex] (1) -- (2);
    \draw[latex-latex] (2) -- (3);
    \draw[latex-latex] (3) -- (4);
    \draw[latex-latex] (4) -- (5);
    \draw[latex-latex] (5) -- (1);

	\draw[latex-latex] (1) -- (7);
	\draw[latex-latex] (1) -- (10);

	\draw[latex-latex] (2) -- (8);
	\draw[latex-latex] (2) -- (6);

	\draw[latex-latex] (3) -- (9);
	\draw[latex-latex] (3) -- (7);

	\draw[latex-latex] (4) -- (10);
	\draw[latex-latex] (4) -- (8);

	\draw[latex-latex] (5) -- (6);
	\draw[latex-latex] (5) -- (9);
\end{tikzpicture}
\end{center}
\caption{The Gr\"otzsch graph $\mathfrak{G}$.} \label{fig:non-VF}
\end{figure}

\begin{theorem} \label{th:triangle_free}
Let $G$ be an undirected triangle-free graph. The following are equivalent:
\begin{enumerate}
	\item $G$ is linearly solvable over some alphabet.
	
	\item $G$ is linearly solvable over all alphabets.
	
	\item $G$ is solvable by routing.
	
	\item $G$ is vertex-full.
	
	\item $\mu(G) = k(G)$.
\end{enumerate}
\end{theorem}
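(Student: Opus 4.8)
The plan is to prove the equivalences as a cycle in which only (1)$\Rightarrow$(4) carries real content. The equivalences (3)$\Leftrightarrow$(4)$\Leftrightarrow$(5) were recorded in the discussion preceding the theorem: since $G$ is undirected we have $c(G)=\mu(G)$, so (5) reads $c(G)=k(G)$, which is solvability by routing by Theorem \ref{th:monotone}, giving (3)$\Leftrightarrow$(5); and for triangle-free $G$ a size-$\alpha(G)$ clique cover consists only of edges and singletons, hence is a matching of size $k(G)$ plus $\alpha(G)-k(G)$ singletons, giving (4)$\Leftrightarrow$(5). Vertex-fullness yields linear solvability over every alphabet through the clique-cover bound $g_L(G,q)\ge n-\cp(G)$, which is (4)$\Rightarrow$(2), and (2)$\Rightarrow$(1) is trivial. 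Thus everything reduces to: if $G$ is linearly solvable then $\mu(G)=k(G)$. As $\mu(G)=c(G)\le k(G)$ always holds, I only need to exhibit a matching of size $k(G)$ in $G$.

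First I would pass to a strictly linearly solvable spanning subgraph $H$ with $k(H)=k(G)$, which exists because $G$ is linearly solvable. Then $\alpha(H)=\alpha(G)$, and $H$ is triangle-free as a subgraph of $G$. Fix a maximum independent set $I$ of $G$ and set $S=V\setminus I$, so $|S|=k(G)$. Since $I$ is independent in $G$, the induced subgraph $H[I]$ is arcless, so $I$ is a maximum acyclic set of $H$ and Theorem \ref{th:SLS_weakly} applies: $I$ is weakly compatible in $H$. The crucial structural step is that weak compatibility together with triangle-freeness forces $H[S]$ to be arcless. Indeed, if $(u,v)$ were an arc of $H$ with $u,v\in S$, weak compatibility would provide a path from $u$ to $v$ through $I$; because $I$ is independent this path has exactly one internal vertex $i\in I$, whence $uv$, $ui$, $iv$ are all edges of $G$, a triangle, contradiction. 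Consequently every arc of $H$ runs between $I$ and $S$; in particular $\inn_H(v)\subseteq I$ for $v\in S$ and $\inn_H(i)\subseteq S$ for $i\in I$.

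Now I would exploit a strict linear solution $f\in L(H,q)$ with $\dim\Fix(f)=k(H)$. Reducing the maximum acyclic set $I$ preserves the number of fixed points (Proposition \ref{prop:fixed_points_reduction}) and produces a \emph{linear} map $f^{-I}$ on $R^{S}$ of dimension $|S|=k(G)$; having $q^{k(G)}=|R^{S}|$ fixed points, $f^{-I}$ must be the identity. Because $H[I]$ is arcless the cumulative functions are simply $F^{I}_i=f_i$, so substituting $f_i(x_S)=\sum_{u}a_{i,u}x_u$ into $f_v(x)=\sum_{i}a_{v,i}x_i$ gives $f^{-I}=CB$ as a matrix over $R$, where the $S\times I$ matrix $C$ records the arcs $I\to S$ and the $I\times S$ matrix $B$ the arcs $S\to I$, every nonzero entry being a unit of $R$. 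The identity condition reads $CB=\mathrm{Id}_{S}$. By the Cauchy--Binet formula $\sum_{I'\subseteq I,\,|I'|=|S|}\det C[S,I']\,\det B[I',S]=\det(CB)=1$, so some maximal minor $\det C[S,I']$ is nonzero; expanding it, some permutation term $\prod_{v\in S}C_{v,\sigma(v)}$ is nonzero, and as each factor is either $0$ or a unit, all factors are units. Hence $(\sigma(v),v)\in H$ for every $v\in S$ with $\sigma$ injective, and the edges $\{v,\sigma(v)\}$ form a matching of size $|S|=k(G)$ in $G$. Therefore $\mu(G)=k(G)$, completing (1)$\Rightarrow$(4).

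The main obstacle is the middle step: turning weak compatibility into usable combinatorics. Triangle-freeness is exactly what collapses ``paths through $I$'' to single common neighbours and thereby annihilates every arc inside $S$, rendering $H$ bipartite and its reduction a transparent matrix product; without it the reduction would mix longer paths and the identity condition would not linearise so cleanly. A secondary subtlety is working over a general commutative ring instead of a field: here Cauchy--Binet, combined with the fact that each nonzero entry of $C$ is a unit (so that a nonzero product of entries forces every factor to be a unit), is precisely what upgrades ``$CB$ invertible'' to an honest system of distinct representatives, i.e.\ a genuine matching of $G$.
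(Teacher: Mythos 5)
Your proposal is correct, and its backbone coincides with the paper's: reduce everything to ``linearly solvable $\Rightarrow \mu(G)=k(G)$'', pass to a strictly linearly solvable spanning subgraph $H$ with $k(H)=k(G)$, apply Theorem \ref{th:SLS_weakly} to a maximum independent set $I$ of $G$ (which is a maximum acyclic set of $H$), and use triangle-freeness to collapse any path through $I$ to a single internal vertex, forcing $H$ to be bipartite between $I$ and $S=V\setminus I$. Where you genuinely diverge is the last step. The paper stays combinatorial: it passes to the undirected graph $\bar H$, notes $k(\bar H)=k(G)$, and invokes the K\"onig--Egerv\'ary theorem (together with the implicit fact that for undirected graphs feedback vertex sets are exactly vertex covers) to get $\mu(\bar H)=k(\bar H)=k(G)$. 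You instead return to the algebra of the strict solution itself: with $H$ bipartite and $H[I]$ arcless, the reduction $f^{-I}$ is literally the matrix product $CB$, the full-fixed-point condition forces $CB=\mathrm{Id}_S$, and Cauchy--Binet plus expansion of a nonzero maximal minor yields an injection $\sigma:S\to I$ along arcs of $H$, i.e.\ a matching of size $k(G)$. Your route is self-contained (no appeal to K\"onig--Egerv\'ary, whose applicability to $k(\bar H)$ the paper leaves implicit) and is in the spirit of the Edmonds-matrix proofs of matching theorems: a nonsingular bipartite adjacency-type matrix certifies a system of distinct representatives, and this works over any commutative ring since a nonzero permutation term has all factors nonzero. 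The paper's route is shorter and needs nothing about the solution beyond bipartiteness of $H$; yours costs a bit more bookkeeping but extracts the matching constructively from the linear solution and avoids citing an external matching theorem. Both are valid; all the peripheral equivalences ((3)$\Leftrightarrow$(4)$\Leftrightarrow$(5), (4)$\Rightarrow$(2)$\Rightarrow$(1)) are handled the same way in both arguments.
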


\begin{proof}
We first remark that a triangle-free graph $G$ is vertex-full if and only if it has a matching of size $\mu(G) = k(G)$, in which case $G$ is linearly solvable (by routing) over all alphabets. Now, suppose $G$ is triangle-free and linearly solvable over some alphabet. Then there exists a subgraph $H$ of $G$ such that $H$ is strictly linearly solvable and $k(H) = k(G)$. $H$ is not necessarily undirected, hence we denote the undirected graph obtained from $H$ by adding any arc $(u,v)$ if $(v,u) \in H$ as $\bar{H}$; thus $\bar{H}$ is a spanning subgraph of $G$ with $k(\bar{H}) = k(G)$. Let $I$ be a maximum independent set of $G$, then $I$ is also a maximum acyclic set of $H$; by Theorem \ref{th:SLS_weakly}, it is weakly compatible in $H$. Then if $(u,v)$ is an arc in $H$ outside of $I$, there exists $i \in I$ such that $u$, $i$, and $v$ form a triangle in $\bar{H}$, which is impossible. Thus $\bar{H}$ is bipartite and by the K\"{o}nig-Egerv\'ary theorem \cite{BM08}, $\mu(\bar{H}) = k(\bar{H}) = k(G)$. Since $\mu(G) \ge \mu(\bar{H})$, we obtain that $\mu(G) = k(G)$. 
\end{proof}

%

We now prove that strictly linearly solvable complements of triangle-free graphs are vertex-full as well.

\begin{theorem} \label{th:alpha2}
Let $G$ be an undirected graph with $\alpha(G) = 2$. If $G$ is strictly linearly solvable over some alphabet, then $G$ is vertex-full (or equivalently, $G$ is the complement of a bipartite graph).
\end{theorem}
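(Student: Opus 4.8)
The plan is to pass to the complement $\bar G$, which is triangle-free precisely because $\alpha(G)=2$, and to show that the weak compatibility guaranteed by Theorem~\ref{th:SLS_weakly} forces $\bar G$ to be bipartite. Since $\bar G$ then has chromatic number $2$, we obtain $\cp(G)=\chi(\bar G)=2=\alpha(G)$, i.e. $G$ is vertex-full. Because $G$ is undirected, a maximum acyclic set is exactly a maximum independent set of $G$, and such a set is an edge $\{a,b\}$ of $\bar G$. Thus Theorem~\ref{th:SLS_weakly} tells us that \emph{every} edge $\{a,b\}$ of $\bar G$, viewed as a set $I=\{a,b\}$ in $G$, is weakly compatible.

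Fix such an edge and partition $V\setminus\{a,b\}$ according to adjacency in $\bar G$: let $A=N_{\bar G}(a)\setminus\{b\}$, let $B=N_{\bar G}(b)\setminus\{a\}$, and let $C$ be the rest. Triangle-freeness gives $A\cap B=\emptyset$, and since $a$ and $b$ are non-adjacent in $G$, the only paths from $u$ to $v$ through $I$ in $G$ are $u\,a\,v$ (present iff $u,v\in B\cup C$) and $u\,b\,v$ (present iff $u,v\in A\cup C$). First I would tabulate, for $u,v\notin\{a,b\}$, the number of such paths as a function of which of $A,B,C$ each vertex lies in, and then read off the two weak-compatibility requirements. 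A non-edge of $\bar G$ (an arc of $G$) must admit a path; this fails exactly for $A$--$B$ non-edges, so $A$ and $B$ must be completely joined in $\bar G$. An edge of $\bar G$ (a non-arc of $G$) must admit zero or at least two paths; this rules out $A$--$C$ and $B$--$C$ edges (each would give exactly one path), while $A$--$A$ and $B$--$B$ edges are already forbidden by triangle-freeness.

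Assembling these facts, every $u\in A$ satisfies $N_{\bar G}(u)=\{a\}\cup B=N_{\bar G}(b)$, and symmetrically every vertex of $B$ has neighbourhood $\{b\}\cup A=N_{\bar G}(a)$. Writing $P=A\cup\{b\}$ and $Q=B\cup\{a\}$, this says every vertex of $P$ has neighbourhood exactly $Q$ and every vertex of $Q$ has neighbourhood exactly $P$; hence $P\cup Q$ is a connected component of $\bar G$ inducing the complete bipartite graph $K_{|P|,|Q|}$, with $C$ completely disconnected from it. Thus the connected component of $\bar G$ containing the (arbitrary) edge $\{a,b\}$ is complete bipartite.

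The main obstacle is that a single application of weak compatibility does not control the part $C$ of vertices non-adjacent to both $a$ and $b$: a priori $\bar G[C]$ is only triangle-free, not bipartite, so one cannot conclude bipartiteness from one edge alone. The point I would emphasize is that we need the full strength of Theorem~\ref{th:SLS_weakly}, namely that \emph{all} maximum acyclic sets are weakly compatible. Since $C$ turns out to be a union of whole connected components, any edge living inside $C$ is itself a maximum independent set of $G$, and re-running the argument on that edge shows its component is complete bipartite as well. This ``every edge sits in a complete-bipartite component'' conclusion means that every connected component of $\bar G$ is either an isolated vertex or a complete bipartite graph, so $\bar G$ is bipartite and $G$ is vertex-full.
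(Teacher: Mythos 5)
Your proof is correct, and it takes a genuinely different route from the paper's. Both arguments start from Theorem~\ref{th:SLS_weakly} plus the observation that, since $\alpha(G)=2$, every pair of non-adjacent vertices of $G$ is a maximum acyclic set and hence weakly compatible; but the paper stays inside $G$ itself: it fixes one non-edge $ab$, takes maximal cliques $C_a\ni a$ and $C_b\ni b$, and through three claims about forced and forbidden adjacencies in five-vertex configurations shows that the vertices outside $C_a\cup C_b$ form a clique $C_c$, finally exhibiting an explicit partition of $V$ into two cliques ($C_c$ together with the part of $C_a$ joined to all of $C_c$, and $C_b$ together with the rest of $C_a$). You instead pass to the triangle-free complement $\bar G$ and turn weak compatibility of an edge $I=\{a,b\}$ of $\bar G$ into a path count: since $ab\notin G$, the only candidate paths through $I$ are $uav$ and $ubv$, so your tabulation over $A$, $B$, $C$ and the constraints you read off (all $A$--$B$ pairs are edges of $\bar G$; no $A$--$C$, $B$--$C$, $A$--$A$, $B$--$B$ edges) are exactly right, and they force $N_{\bar G}(u)=Q$ for every $u\in P$ and $N_{\bar G}(v)=P$ for every $v\in Q$. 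Your iteration over components is also sound, because any edge of $\bar G$ whatsoever is a maximum independent set of $G$, so weak compatibility applies to it directly. What your route buys: one uniform case analysis in place of ad hoc configurations, and a conclusion strictly stronger than the theorem, namely that $\bar G$ must be a disjoint union of complete bipartite graphs and isolated vertices, not merely bipartite. For example, $P_4$ has $\alpha=2$ and is vertex-full, yet its complement (again $P_4$) is bipartite without being a disjoint union of complete bipartite graphs, so your argument shows $P_4$ is not strictly linearly solvable --- a fact the paper's theorem cannot see, though it does follow from a direct application of Theorem~\ref{th:SLS_weakly}. What the paper's route buys: it never leaves $G$ and outputs the two covering cliques explicitly, i.e., it constructs the witness of vertex-fullness rather than obtaining it through complementation.
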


\begin{proof}
If $G$ is strictly linear solvable over some alphabet, then by Theorem \ref{th:SLS_weakly}, every non-edge is weakly compatible, and we prove that this implies that $G$ is vertex-full, i.e. that the vertex set of $G$ can be partitioned into two cliques. Let $ab$ be a non-edge in $G$, let $C_a$ be a maximal clique containing $a$, and let $C_b$ be a maximal clique containing $b$. If $C_a$ and $C_b$ cover all vertices, we are done. Otherwise, there exists $c$ which does not belong to either clique.

\begin{claim} \label{claim:1}
There exist $d \in C_a$, $e \in C_b$, disjoint from $a$ and $b$, such that $a,b,c,d,e$ induce a graph with exactly 7 edges and the following 3 non-edges: $ab$, $cd$ and $ce$.
\end{claim}

\begin{proof}
Since $a,b,c$ cannot form an independent set, without loss $ac$ is an edge. By maximality of $C_a$, there exists $d \in C_a$ such that $ad$ is an edge and $cd$ is a non-edge. Then $ab$ is weakly compatible, $cd$ is a non-edge, and they have a common neighbour (namely, $a$) in $ab$: $cd$ must have another common neighbour, namely $b$, which means that $bc$ and $bd$ are edges. In turn, there exists $e \in C_b$ such that $be$ is an edge and $ce$ is not an edge; as above, $ae$ is also an edge. Finally, since $c,d,e$ cannot form an independent set, $de$ is an edge. \end{proof}

\begin{claim} \label{claim:2}
If $c$ and $f$ do not belong to $C_a$ or $C_b$, then $cf$ is an edge.
\end{claim}

\begin{proof}
The vertices corresponding to $c$ are $a$ to $e$ as in Claim \ref{claim:1}; let $f$ not in $C_a$ or $C_b$ either and suppose that $cf$ is not an edge. Since $c,d,f$ cannot form an independent set, $fd$ is an edge. Thus there exists $g\in C_a$ with $q\neq d$ such that $fg$ is not an edge, and similarly there exists $h\in C_b$ with $h\neq e$ such that $fh$ is not an edge. Now, $cd$ is weakly compatible, $fg$ is not an edge and they only have $d$ as common neighbour in $cd$, which is a contradiction. 
\end{proof}

Therefore, the vertices outside of $C_a$ or $C_b$ form a clique, which we shall refer to as $C_c$.

\begin{claim} \label{claim:3}
Let $f \in C_c$ and $g \in C_a$ such that $fg$ is not an edge. Then for any $i \in C_b$, $gi$ is an edge.
\end{claim}

\begin{proof}
Suppose that $gi$ is not an edge. Since $f,g,i$ cannot form an independent set, $fi$ is an edge. Then using the notation above, $gi$ is weakly compatible, $fh$ is not an edge and they only have $i$ as a common neighbour in $gi$, which is a contradiction.
\end{proof}

By the above, the following two sets of vertices induce disjoint cliques and cover all vertices:
\begin{enumerate}
	\item $C_c$ and all the vertices in $C_a$ connected to all the vertices in $C_c$;
	
	\item $C_b$ and all the remaining vertices of $C_a$.
\end{enumerate}
%
\end{proof}

\subsection{Non linearly solvable digraphs} \label{sec:non-LS}

Theorem \ref{th:SLS_weakly} yields and easy way to construct digraphs that are not strictly linearly solvable. Indeed, let $I = \{i_1,\dots,i_m\}$ in topological order be a maximum acyclic set of $G$ and let $(u,v)$ be an arc outside of $I$ such that the out-neighbourhood of $u$ in $I$ is after (in topological order) than the in-neighbourhood of $v$. Then there is no path from $u$ to $v$ through $I$ and $I$ is not weakly compatible.

More interestingly, based on Theorem \ref{th:SLS_weakly}, we can construct digraphs which are not linearly solvable. The strategy to construct such a digraph $G$ uses two main ideas. Firstly, we force any possible linear solution to use an arc $(u,v)$ in a minimum feedback vertex set $J$. This can be done by ensuring that the graph obtained by removing $(u,v)$ has a smaller feedback vertex set than $G$. Secondly, we make sure that there is no path from $u$ to $v$ through the corresponding maximum acyclic induced subgraph $I = V \setminus J$. Thus, $I$ is not weakly compatible and by Theorem \ref{th:SLS_weakly}, the graph is not linearly solvable.

Let $G_k = (I \cup J, E)$ be any digraph such that
\begin{itemize}
	\item $I = \{i_1,\dots, i_{k-1}\}$ and $J = \{j_1,\dots, j_k\}$ are disjoint;
	
	\item $I$ is acyclic;
	
	\item $J \setminus \{j_k\}$ is acyclic; 
	
	\item $J$ contains a path from $j_1$ to $j_k$;
	
	\item $j_k$ only has one out-neighbour in $J$, namely $j_1$;
	
	\item $I$ and $J$ are connected using undirected edges as follows: $i_1 j_1$, $i_a j_b$ for all $1 \le a \le k-1$ and $2 \le b \le k-1$, and $i_c j_k$ for all $2 \le c \le k-1$.
\end{itemize}
A graph $G_3$ is illustrated in Figure \ref{fig:non-LS}, where we have chosen the graph which included all possible arcs.

\begin{figure}[Ht]
\begin{center}
\begin{tikzpicture}
 \tikzstyle{every node}=[draw,shape=circle];

	\node (i1) at (0,2) {$i_1$};
	\node (i2) at (0,0) {$i_2$};
	\node (j1) at (3,4) {$j_1$};
	\node (j2) at (2,2) {$j_2$};
	\node (j3) at (3,0) {$j_3$};

	\draw[-latex] (i1) -- (i2);
	
	\draw[-latex] (j1) -- (j2);
	\draw[-latex] (j1) -- (j3);
	\draw[-latex] (j2) -- (j3);
	\draw[-latex] (j3) -- (j1);
	
	\draw[latex-latex] (i1) -- (j1);
	
	\draw[latex-latex] (i1) -- (j2);
	\draw[latex-latex] (i2) -- (j2);

	\draw[latex-latex] (i2) -- (j3);
\end{tikzpicture}
\end{center}
\caption{Example of a non linearly solvable digraph: $G_3$.} \label{fig:non-LS}
\end{figure}
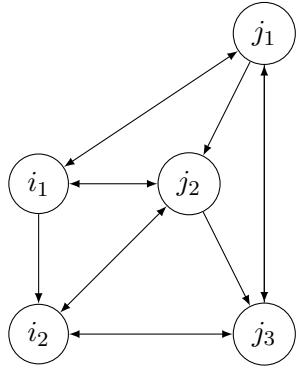

\begin{theorem} \label{th:non_solvable_digraph}
For any $k \ge 2$, $k(G_k) = k$ and $g_L(G_k,q) = k-1$ for all $q$.
\end{theorem}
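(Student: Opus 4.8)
The plan is to prove the two assertions in turn: the graph parameter identity $k(G_k)=k$, and then the fact that linear network coding cannot reach $q^{k}$ fixed points on any subgraph of $G_k$, although it does reach $q^{k-1}$. The second part follows the template described just before the statement: I force the arc $(j_k,j_1)$ into the interaction graph of any hypothetical solution, and then show this arc cannot be erased by reducing along the maximum acyclic set $I=\{i_1,\dots,i_{k-1}\}$, because $G_k$ has no path from $j_k$ to $j_1$ through $I$.

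First I would compute $k(G_k)$. Since $G_k\setminus J=G_k[I]$ is acyclic, $J$ is a feedback vertex set, so $k(G_k)\le|J|=k$; as $n=|I|+|J|=2k-1$, it then suffices to prove $\alpha(G_k)\le k-1$. Every undirected edge between $I$ and $J$ is a $2$-cycle, so any acyclic set is an independent set of the bipartite graph $B$ built from these edges. Here $i_1$ is $B$-adjacent to $\{j_1,\dots,j_{k-1}\}$ and each $i_a$ with $a\ge 2$ is $B$-adjacent to $\{j_2,\dots,j_k\}$, so $B$ has the matching $\{i_aj_a:1\le a\le k-1\}$ of size $k-1=|I|$. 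A short case analysis on $S=A\cap I$ then shows that, for $k\ge 3$, the only independent set of $B$ of size $k$ is $J$ itself; but $J$ contains the cycle $j_1\to\cdots\to j_k\to j_1$ and is therefore not acyclic. Hence $\alpha(G_k)=k-1$ (witnessed by $I$) and $k(G_k)=k$ (the degenerate value $k=2$ being checked by hand).

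Next I would bound $g_L$ from above. Suppose some linear $f$ with $G(f)\subseteq G_k$ has $q^{k}$ fixed points. As $I$ is acyclic in $G(f)$, Theorem \ref{th:reduction_f} (well-definedness of the reduction) and Proposition \ref{prop:fixed_points_reduction} give that $f^{-I}:A^{J}\to A^{J}$ has $q^{k}$ fixed points on a domain of size $q^{|J|}=q^{k}$; thus $f^{-I}$ is the identity and $G(f^{-I})$ has no arcs. I claim $(j_k,j_1)\in G(f)$: otherwise $G(f)\subseteq G_k\setminus(j_k,j_1)=:G'$, and since deleting $(j_k,j_1)$ makes $j_k$ a sink of $G'[J]$ (its only out-neighbour in $J$ was $j_1$), $G'[J]$ is acyclic, so $J$ is acyclic in $G'$, whence $\alpha(G')\ge k$, $k(G')\le k-1$, and $|\Fix(f)|\le q^{g(G',q)}\le q^{k(G')}\le q^{k-1}$, a contradiction. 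Thus $(j_k,j_1)\in G(f)$ while $(j_k,j_1)\notin G(f^{-I})$, so Lemma \ref{lem:contraction_linear} produces a path from $j_k$ to $j_1$ through $I$ in $G(f)\subseteq G_k$.

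It remains to rule out such a path and to supply the matching lower bound. The only out-neighbours of $j_k$ inside $I$ are $i_2,\dots,i_{k-1}$ (from the edges $i_cj_k$, $2\le c\le k-1$), and the only in-neighbour of $j_1$ inside $I$ is $i_1$; so a path from $j_k$ to $j_1$ with internal vertices in $I$ would force a subpath from some $i_c$ with $c\ge 2$ to $i_1$ inside $I$, which is impossible because $I$ is acyclic with topological order $i_1,\dots,i_{k-1}$ along which arcs increase the index. This contradiction yields $g_L(G_k,q)<k$, hence $g_L(G_k,q)\le k-1$. For the reverse inequality, partition $V$ into the $k-1$ cliques $\{i_a,j_a\}$ ($1\le a\le k-1$) and the singleton $\{j_k\}$, giving $\cp(G_k)\le k$ and thus $g_L(G_k,q)\ge n-\cp(G_k)\ge k-1$. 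Combining, $g_L(G_k,q)=k-1$ for all $q$. The main obstacle is the independence-number computation $\alpha(G_k)=k-1$: one must verify that the dense bidirected connection between $I$ and $J$ leaves $J$ as the unique maximum independent set of $B$, which is precisely the set blocked by a directed cycle.
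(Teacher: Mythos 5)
Your proof is correct, and its skeleton matches the paper's: both establish $\alpha(G_k)=k-1$, force the arc $(j_k,j_1)$ into any hypothetical linear solution by observing that deleting it drops the feedback number to $k-1$, and then derive a contradiction from the absence of a path from $j_k$ to $j_1$ through $I$. The genuine difference is how the contradiction is reached. The paper stays at the level of graphs: it invokes the (stated but unproved) equivalence that a digraph is linearly solvable if and only if it has a strictly linearly solvable spanning subgraph $H$ with the same feedback number, and then applies Theorem \ref{th:SLS_weakly} to conclude that $I$ would have to be weakly compatible in $H$. You instead work directly with a linear $f$ satisfying $G(f)\subseteq G_k$ and $|\Fix(f)|=q^k$: Theorem \ref{th:reduction_f} and Proposition \ref{prop:fixed_points_reduction} force $f^{-I}$ to be the identity, and Lemma \ref{lem:contraction_linear} then manufactures the forbidden path inside $G(f)\subseteq G_k$. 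In effect you inline exactly the half of Theorem \ref{th:SLS_weakly} that is needed; what this buys is self-containedness---your argument never rests on the ``easy to prove'' spanning-subgraph characterisation of linear solvability, which the paper asserts without proof---at the cost of redoing a reduction argument that Theorem \ref{th:SLS_weakly} already packages. The two remaining ingredients are equivalent repackagings: your bipartite-independent-set computation of $\alpha(G_k)$ is the paper's case analysis in different clothes, and your clique partition $\{i_a,j_a\}$, $\{j_k\}$ is precisely the paper's matching $\{i_aj_a : 1\le a\le k-1\}$ viewed through the bound $g_L(G_k,q)\ge n-\cp(G_k)$.

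Two caveats, both shared with (indeed inherited from) the paper. First, your appeal to ``$I$ is acyclic with topological order $i_1,\dots,i_{k-1}$'' is a convention on the labelling that the definition of $G_k$ does not state explicitly; the paper's ``by construction, $H$ has no path from $j_k$ to $j_1$ through $I$'' relies on the same convention, so making it explicit is if anything an improvement. Second, the case $k=2$ that you defer to a hand check actually fails: for $k=2$ the ranges $2\le b\le k-1$ and $2\le c\le k-1$ are empty, so the only connecting edge is $i_1j_1$, the set $\{i_1,j_2\}$ is acyclic, and $k(G_2)=1\neq 2$. The paper's own case analysis has the same defect (it invokes the edge $i_{k-1}j_k$, which does not exist when $k=2$), so the theorem as stated really begins at $k\ge 3$; your proof, like the paper's, is complete exactly in that range.
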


\begin{proof}
We first verify that $I$ is a maximum acyclic set, i.e. that no set of $k$ vertices is acyclic. Let $S$ be a set of $k$ vertices. If $S = J$, $S$ is not acyclic. Now suppose $S$ contains a vertex $i \in I$. Firstly, suppose $i = i_1$, then if $S$ contains $j_b$ for $1 \le b \le k-1$, $S$ contains the cycle $i_1j_b$, otherwise the only case left is $S = I \cup \{j_k\}$, which again has a cycle $i_{k-1} j_k$. Secondly, suppose $i = i_a$ for $2 \le a \le k-1$, then if $S$ contains $j_b$ for $2 \le b \le k$, $S$ contains the cycle $i_aj_b$, otherwise the only case left is $S = I \cup \{j_1\}$, which has a cycle $i_{1} j_1$.


Now suppose that $g_L(G_k,q)=k$, that is, $G$ is linearly solvable for some $q$. Then it has a strictly linearly solvable subgraph $H$ such that $k(H)=k$. Then we force $(j_k,j_1) \in H$ because if $(j_k,j_1) \notin H$, then $I$ becomes a feedback vertex set of $H$ of size $k-1$. Now, by construction, $H$ has no path from $j_k$ to $j_1$ through $I$; thus $I$ is a maximum acyclic set of $H$ which is not weakly compatible, thus by Theorem \ref{th:SLS_weakly} $H$ is not strictly linearly solvable, a contradiction. Thus $g_L(G_k,q) \le k-1$. Conversely, $G$ contains a matching of size $k-1$, namely $\{i_aj_a : 1 \le a \le k-1\}$, thus $g_L(G_k,q) = k-1$ for all $q$.
\end{proof}

\subsection{Strictly linearly solvable graphs} \label{sec:SLS}

The reduction of coding functions also allows to construct strictly linearly solvable digraphs. Theorem~\ref{th:SLS_weakly} and its applications to Theorems \ref{th:triangle_free} and \ref{th:non_solvable_digraph} already illustrated how to use strictly linearly solvable digraphs as a means to study linearly solvable graphs. Nonetheless, we would like to motivate the study of strictly (linearly) solvable digraphs. Firstly, in the context of (Boolean) coding functions used as models of gene networks, an arc $(u,v)$ in the interaction graph illustrates the fact that the gene $u$ directly influences the gene $v$: such an influence may not be ignored. Secondly, studying strictly solvable digraphs indicates which arcs must be ignored in order to correctly transmit information by network coding. Indeed, suppose $G$ is solvable but not strictly solvable, then there exists an arc in $G$ which must not be used in any solution of $G$. Therefore, that arc is not only useless, but it is actually detrimental to network coding. An example is given in Figure \ref{fig:non-strictly}; the graph is clearly solvable, yet the thick arc makes it non-strictly solvable. Thirdly, by focusing on strictly linearly solvable digraphs, we show in Theorem \ref{th:SLS_strongly} that a large class of digraphs are linearly solvable.

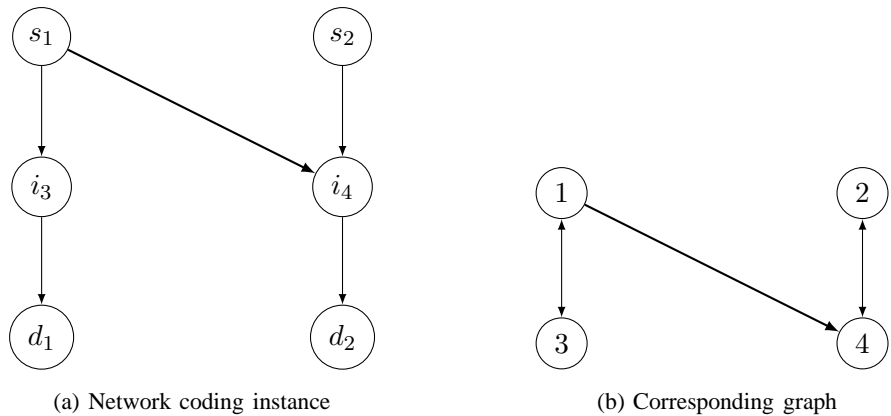
\begin{figure}
\centering
	\subfloat[Network coding instance]
	{\begin{tikzpicture}[auto]

	\node[draw,shape=circle] (s1) at (0,4) {$s_1$};
	\node[draw,shape=circle] (d1) at (0,0) {$d_1$};
	
	\node[draw,shape=circle] (s2) at (4,4) {$s_2$};
	\node[draw,shape=circle] (d2) at (4,0) {$d_2$};
	
	\node[draw,shape=circle] (i3) at (0,2) {$i_3$};
	\node[draw,shape=circle] (i4) at (4,2) {$i_4$};

	\draw[-latex] (s1) -- (i3);
	\draw[-latex] (i3) -- (d1);
	
	\draw[-latex, thick] (s1) -- (i4);
	
	\draw[-latex] (s2) -- (i4);
	\draw[-latex] (i4) -- (d2);
	
	\end{tikzpicture}} \hspace{2cm} \subfloat[Corresponding graph]
	{\begin{tikzpicture}	
	
	\node[draw,shape=circle] (1) at (0,4) {$1$};
	
	\node[draw,shape=circle] (2) at (4,4) {$2$};
	
	\node[draw,shape=circle] (3) at (0,2) {$3$};
	\node[draw,shape=circle] (4) at (4,2) {$4$};

	\draw[latex-latex] (1) -- (3);
	
	\draw[-latex, thick] (1) -- (4);
	
	\draw[latex-latex] (2) -- (4);
	\end{tikzpicture}
	}
\caption{A non-strictly solvable network coding instance.}
\label{fig:non-strictly}
\end{figure}

\begin{theorem} \label{th:SLS_strongly}
If $G$ has a strongly compatible maximum acyclic set and no loop, then $G$ is strictly linearly solvable.
\end{theorem}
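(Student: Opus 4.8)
The plan is to exhibit an explicit linear coding function $f$ with interaction graph exactly $G$ whose $I$-reduction is the identity, so that $f$ has $q^{k(G)}$ fixed points. Let $I=\{i_1,\dots,i_m\}$ be the strongly compatible maximum acyclic set, sorted in topological order, and let $J=V\setminus I$ be the corresponding minimum feedback vertex set, so $|J|=k(G)=:k$. Working over a field $\GF(q)$, I write $f(x)=Mx$ and block $M$ along $V=I\sqcup J$:
$$
	M=\begin{pmatrix} N & M_{IJ}\\ M_{JI} & M_{JJ}\end{pmatrix},
$$
where $N=M_{II}$ records the arcs inside $I$ and the other blocks the remaining arcs. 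Since $G$ is loopless the diagonal of $M$ is zero, and since $I$ is acyclic in topological order $N$ is strictly lower triangular, hence nilpotent, so $(\mathrm{Id}-N)^{-1}=\sum_{t\ge0}N^{t}$ exists. A short computation of the cumulative coding function gives $F^{I}=(\mathrm{Id}-N)^{-1}M_{IJ}\,x_J$, and therefore the $I$-reduction $f^{-I}$ is the linear map on $A^{J}$ with matrix
$$
	\widetilde M = M_{JJ}+M_{JI}(\mathrm{Id}-N)^{-1}M_{IJ}.
$$
Because reduction preserves the number of fixed points (Proposition~\ref{prop:fixed_points_reduction}), it suffices to arrange $\widetilde M=\mathrm{Id}_k$: then $f^{-I}$ is the identity with $q^{k}$ fixed points, and combined with the bound $h(G,q)\le k(G)$ established earlier this yields $h_L(G,q)=h(G,q)=k(G)$, as required.

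The entry $[M_{JI}(\mathrm{Id}-N)^{-1}M_{IJ}]_{v,u}$ is exactly the weighted sum, over all directed paths from $u$ to $v$ whose internal vertices lie in $I$, of the products of the arc coefficients along the path. This is where strong compatibility does the work. For distinct $u,v\in J$ with $(u,v)\notin G$, strong compatibility forbids any path through $I$, so this entry is \emph{structurally} zero and $(M_{JJ})_{v,u}=0$, whence $\widetilde M_{v,u}=0$ with no cancellation needed; this is precisely the configuration that mere weak compatibility would allow and that would obstruct strictness. For an arc $(u,v)\in G$ a path through $I$ does exist, so I can set $(M_{JJ})_{v,u}$ to cancel the corresponding path-sum, provided that sum is nonzero. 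For the diagonal I need $\widetilde M_{v,v}=[M_{JI}(\mathrm{Id}-N)^{-1}M_{IJ}]_{v,v}=1$; maximality of $I$ guarantees a cycle through $v$ and $I$, hence at least one such path.

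To choose coefficients, I would first treat each arc of $N$, $M_{IJ}$, $M_{JI}$ as an indeterminate. Each diagonal path-sum $D_v$ and each off-diagonal path-sum $P_{v,u}$ (for $(u,v)\in G$) is then a sum of distinct squarefree monomials, since distinct simple paths use distinct arc sets, and is nonzero as a polynomial by the existence statements above. Taking $q$ large, I can evaluate the indeterminates so that all $D_v$ and all $P_{v,u}$ are nonzero simultaneously, as they lie outside the zero set of a single nonzero polynomial. I would then rescale the $v$-th row of $M_{JI}$ by $D_v^{-1}$ for each $v\in J$: this normalizes $\widetilde M_{v,v}$ to $1$ and multiplies each $P_{v,u}$ by the nonzero $D_v^{-1}$, and since it touches only the $v$-th rows the different normalizations do not interfere. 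Finally I set $(M_{JJ})_{v,u}=-P_{v,u}$ on arcs and $0$ off arcs; all coefficients on arcs of $G$ are then nonzero (invertible), so $G(f)=G$, and $\widetilde M=\mathrm{Id}_k$. The main obstacle is exactly this simultaneous non-vanishing of all path-sums together with the exact normalization of the diagonal to $1$; I expect to resolve it by the genericity-plus-row-rescaling argument, whose validity rests on strong compatibility (only the "arc with a path" and "non-arc with no path" cases occur) and on the maximality of $I$ (each diagonal path-sum is a nonzero polynomial).
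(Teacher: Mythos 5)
Your proof is correct, and at its heart it is the same construction as the paper's; the difference is in execution, and the comparison is instructive. The paper chooses the coefficients explicitly: weight $1$ on every arc entering $I$, weight $1/N_I(v,v)$ on arcs from $I$ into $v \notin I$, and weight $-N_I(j,v)/N_I(v,v)$ on arcs $(j,v)$ inside $V\setminus I$, where $N_I(u,v)$ counts the paths from $u$ to $v$ through $I$; it takes $q$ prime and larger than every path count so that these coefficients are invertible, and then verifies that $f^{-I}$ is the identity by an induction along the reduction sequence, carrying explicit formulas for every intermediate reduction $f^{-I_b}$. You replace that induction with the Schur-complement closed form $\widetilde M = M_{JJ}+M_{JI}(\mathrm{Id}-N)^{-1}M_{IJ}$ (valid since $N$ is strictly triangular; note also that walks through $I$ are automatically simple paths because $G[I]$ is acyclic, which is what makes your path-sum interpretation of the matrix entries exact), and you replace the explicit coefficients with genericity plus row rescaling. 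In fact the paper's function is precisely your scheme specialized to all weights on arcs into $I$ equal to $1$: then $D_v=N_I(v,v)$ and $P_{v,u}=N_I(u,v)$, and your rescaling and cancellation reproduce exactly the coefficients above. The paper's version buys full explicitness of both the solution and the alphabet (any prime $q>\max_{u,v} N_I(u,v)$ works); yours buys a shorter verification with no inductive bookkeeping, at the cost of a nonconstructive coefficient choice and an unquantified ``sufficiently large'' $q$. One point to tighten: the single nonzero polynomial whose nonvanishing you invoke should also include the product of all arc indeterminates of $N$, $M_{IJ}$, $M_{JI}$ (not only the $D_v$ and $P_{v,u}$), since you need every arc coefficient to evaluate to a nonzero element for $G(f)=G$ to hold exactly, which strictness requires; this is an immediate fix.
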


\begin{proof}
The reader is reminded of the ${\mathbbm 1}\{\mathcal{P}\}$ notation used in the proof of Theorem \ref{th:h_loops}. Let $I$ be a strongly compatible maximum acyclic set, say $I = \{i_1,\dots,i_m\}$ in topological order. For all vertices $u,v\in V\setminus I$, the number of path from $u$ to $v$ through $I$ is denoted $N_I(u,v)$. Let $q$ be a prime number greater than $\max_{u,v\in V\setminus I} N_I(u,v)$, and $f \in L(G,q)$ as follows:
\begin{align*}
	f_i(x) &= \sum_{u \in \inn(v)} x_u, \quad \forall i \in I\\
	f_v(x) &= \sum_{i \in \inn(v) \cap I} \frac{1}{N_I(v,v)} x_i - \sum_{j \in \inn(v) \setminus I} \frac{N_I(j,v)}{N_I(v,v)} x_j \quad \forall v \notin I.
\end{align*}
Remark that $N_I(v,v) \ne 0$ since $V \setminus I$ is a minimal feedback vertex set and that $N_I(j,v) \ne 0$ since $I$ is strongly compatible. The inverse of $N_I(v,v)$ then exists since $q$ is a prime; since $q$ is larger than any $N_I(j,v)$, we have $N_I(j,v) \ne 0 \mod q$ either. Therefore, $G(f) = G$. 

We shall prove that $f^{-I}$ is the identity. For that purpose, we prove the following by induction on $0 \le b \le m$. Let $I_0 = \emptyset$ and $I_b = \{i_1,\dots,i_b\}$, then
\begin{align*}
	f_i^{-I_b}(x) &= \sum_{u \notin I_b} (N_{I_b}(u,i) + {\mathbbm 1}\{(u,i) \in G\}) x_u, \quad \forall i \in I \setminus I_b\\
	f_v^{-I_b}(x) &= \sum_{u \notin I_b} \frac{N_{I_b}(u,v)}{N_I(v,v)} x_u + \sum_{i \in \inn(v) \cap (I \setminus I_b)} \frac{1}{N_I(v,v)} x_i - \sum_{j \in \inn(v) \setminus I} \frac{N_I(j,v)}{N_I(v,v)} x_j \quad \forall v \notin I.	
\end{align*}

This clearly holds for $b=0$. We have $f^{-I_{b+1}} = f^{-I_b -i_{b+1}}$. Let $i \in I \setminus I_{b+1}$; by induction hypothesis, the $x_u$ term in $f_i^{-I_b}$ is
\begin{equation} \label{eq:Ib1}
	f_i^{-I_b}(x_u) = N_{I_b}(u,i) + {\mathbbm 1}\{(u,i) \in G\};
\end{equation}
the $x_{i_{b+1}}$ term in $f_i^{-I_b}$ is
$$
	f_i^{-I_b}(x_{i_{b+1}}) = N_{I_b}(i_{b+1},i) + {\mathbbm 1}\{(i_{b+1},i) \in G\} = {\mathbbm 1}\{(i_{b+1},i) \in G\};
$$
and the $x_u$ term in $f_{i_{b+1}}^{-I_b}$ is
$$
	f_{i_{b+1}}^{-I_b}(x_u) = N_{I_b}(u,i_{b+1}) + {\mathbbm 1}\{(u,i_{b+1}) \in G\}.
$$
By applying the reduction, we obtain that the $x_u$ term in $f_i^{-I_{b+1}}$ is
\begin{align*}
	f_i^{-I_{b+1}}(x_u) &= {\mathbbm 1}\{(u,i) \in G\} + N_{I_b}(u,i) + {\mathbbm 1}\{(i_{b+1},i) \in G\} \left( N_{I_b}(u,i_{b+1}) + {\mathbbm 1}\{(u,i_{b+1}) \in G\}\right)\\
	&= {\mathbbm 1}\{(u,i) \in G\} + N_{I_{b+1}}(u,i).
\end{align*}

Now let $v \notin I$: we have two cases to consider for $f_v^{-I_{b+1}}$. First, let $i \in I\setminus I_b $; by induction hypothesis, the $x_i$ term in $f_v^{-I_b}$ is
\begin{equation} \label{eq:Ib2}
	f_v^{-I_b}(x_i) = \frac{1}{N_I(v,v)} \left[ N_{I_b}(i,v) + {\mathbbm 1}\{(i,v) \in G\} \right] = \frac{{\mathbbm 1}\{(i,v) \in G\}}{N_I(v,v)},
\end{equation}
since there is no path from $i$ to $v$ through $I_b$; and the $x_i$ term in $f_{i_{b+1}}^{-I_b}$ is
$$
	f_{i_{b+1}}^{-I_b}(x_i) = N_{I_b}(i,i_{b+1}) + {\mathbbm 1}\{(i,i_{b+1}) \in G\} = 0,
$$
for similar reasons. By applying the reduction, we obtain that the $x_i$ term in $f_v^{-I_{b+1}}$ is
$$
	f_v^{-I_{b+1}}(x_i) = \frac{ {\mathbbm 1}\{(i,v) \in G\}}{N_I(v,v)}.
$$
Secondly, let $u \notin I$;  by induction hypothesis, the $x_u$ term in $f_v^{-I_b}$ is
$$
	f_v^{-I_b}(x_u) = \frac{1}{N_I(v,v)} \left[ N_{I_b}(u,v) - N_I(u,v){\mathbbm 1}\{(u,v) \in G\} \right];
$$
the $x_{i_{b+1}}$ term in $f_v^{-I_b}$ is
$$
	f_v^{-I_b}(x_{i_{b+1}}) = \frac{{\mathbbm 1}\{(i_{b+1},v) \in G\}}{N_I(v,v)} 
$$
(similarly to \eqref{eq:Ib2}) and the $x_u$ term in $f_{i_{b+1}}^{-I_b}$ is
$$
	f_{i_{b+1}}^{-I_b}(x_u) = N_{I_b}(u,i_{b+1}) + {\mathbbm 1}\{(u,i_{b+1}) \in G\}
$$
(similar to \eqref{eq:Ib1}). By applying the reduction, we obtain that the $x_u$ term in $f_v^{-I_{b+1}}$ is
\begin{align*}
	f_v^{-I_{b+1}}(x_u) &= \frac{1}{N_I(v,v)} \left[ N_{I_b}(u,v) + {\mathbbm 1}\{(i_{b+1},v) \in G\} \left( N_{I_b}(u,i_{b+1}) + {\mathbbm 1}\{(u,i_{b+1}) \in G\} \right) - N_I(u,v){\mathbbm 1}\{(u,v) \in G\} \right]\\
	&= \frac{N_{I_{b+1}}(u,v) - N_I(u,v) {\mathbbm 1}\{(u,v) \in G\}}{N_I(v,v)}.
\end{align*}

Having proved the claim, we can use it for $b = m$: this yields
$$
	f_v^{-I}(x) = \sum_{u \notin I} \frac{N_I(u,v) - {\mathbbm 1}\{(u,v) \in G\}N_I(u,v)}{N_I(v,v)} x_u. 
$$
The $x_v$ term in $f_v^{-I}(x)$ is then $1$ (since $G$ has no loop on $v$); if $u \in \inn(v)$, the term is $(N_I(u,v) - N_I(u,v))/N_I(v,v)=0$; if $u \notin \inn(v)$, we have $N_I(u,v) = 0$ since $I$ is strongly compatible and hence the term in $x_u$ also vanishes. Thus, $f^{-I}_v(x) = x_v$ .
\end{proof}

\begin{corollary} \label{cor:induced}
For any loopless digraph $D$, there exists a strictly linearly solvable graph $G$ such that $D$ is an induced subgraph of $G$.
\end{corollary}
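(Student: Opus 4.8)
The plan is to deduce the corollary directly from Theorem~\ref{th:SLS_strongly}: it suffices to realise $D$ as an induced subgraph $G[J]$ of a loopless digraph $G$ that possesses a \emph{strongly compatible maximum acyclic set}. I would take $J$ to be the vertex set of $D$, set $G[J]=D$ (so $D$ is induced and, since $D$ is loopless, contributes no loops), and adjoin an independent set $I$ of fresh vertices engineered so that (i) the paths through $I$ joining distinct vertices of $J$ reproduce exactly the arcs of $D$, and (ii) $I$ is a maximum acyclic set. The theorem then immediately gives that $G$ is strictly linearly solvable.

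For property (i) I would place a parallel subdivision on every arc: for each arc $(u,v)$ of $D$ add a new vertex $e_{uv}$ together with the arcs $u\to e_{uv}$ and $e_{uv}\to v$, while \emph{retaining} the direct arc $(u,v)$ so that $G[J]=D$ is preserved. Because $I$ will be independent, no path can use two consecutive internal vertices of $I$, so every path through $I$ has a single internal vertex; hence between distinct $u,v\in J$ a path through $I$ exists precisely when $(u,v)$ is an arc of $D$. As $(u,v)\in G$ holds in exactly the same cases, $I$ is strongly compatible. For property (ii) I would attach to each $v\in J$ a private partner $a_v$ forming the $2$-cycle $v\to a_v\to v$. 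These $|J|$ two-cycles are vertex-disjoint, so every feedback vertex set must contain at least $|J|$ vertices; since $J$ is itself a feedback vertex set (deleting it leaves the edgeless digraph on the new vertices), we obtain $k(G)=|J|$ and therefore $\alpha(G)=|V(G)|-|J|=|I|$. As $I$ is independent, hence acyclic, of size $|I|$, it is a maximum acyclic set.

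To finish I would verify the hypotheses of Theorem~\ref{th:SLS_strongly}: $G$ is loopless, $I$ is a maximum acyclic set by (ii), and $I$ is strongly compatible by (i). The conclusion is that $G$ is strictly linearly solvable while $D=G[J]$ sits inside it as an induced subgraph.

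The delicate point, which I expect to be the crux, is the maximality of $I$. The bare parallel-subdivision construction alone fails it: any single vertex of $J$, and in fact any acyclic subset of $D$, can be adjoined to $\{e_{uv}\}$ without creating a cycle, so $\{e_{uv}\}$ is far from a maximum acyclic set and $J$ is far from a minimum feedback vertex set. The disjoint $2$-cycle gadgets are precisely what pin each vertex of $J$ into the feedback vertex set and repair this, via the clean disjointness lower bound above. I would also check that the gadgets are harmless for strong compatibility and acyclicity: each $a_v$ joins $v$ only to itself, so it creates no new path through $I$ between distinct vertices of $J$ (it merely supplies the single closed path $v\to a_v\to v$, i.e.\ $N_I(v,v)=1$, which the proof of Theorem~\ref{th:SLS_strongly} needs), and no arcs run among the new vertices, so $I$ stays independent. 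Both checks are immediate.
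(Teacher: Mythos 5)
Your proof is correct, and at top level it follows the same route as the paper's own proof: realise $D$ as $G[J]$, arrange that the fresh vertices form a strongly compatible maximum acyclic set $I$, and invoke Theorem~\ref{th:SLS_strongly}. But the constructions differ in one essential respect, and the difference is to your credit. The paper's proof adjoins only parallel subdivision vertices ($|J|+1$ of them per arc of $D$, in the style of Proposition~\ref{prop:D_H}) and then asserts that these vertices form a maximum acyclic set. Exactly as you identify in your last paragraph, that assertion fails for every non-empty $D$: with subdivision vertices alone, $I\cup T$ stays acyclic for every acyclic $T\subseteq J$ (e.g.\ any single vertex of $J$), so $k(G)=k(D)=|J|-\alpha(D)<|J|$, and $I$ is not maximum---not even maximal. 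Equivalently, $N_I(v,v)=0$ for every $v\in J$, which is fatal to the construction in the proof of Theorem~\ref{th:SLS_strongly}, where $N_I(v,v)$ appears as a denominator. Your private $2$-cycle partners $a_v$ are precisely the missing ingredient: the $|J|$ pairwise disjoint $2$-cycles $\{v,a_v\}$ force $k(G)\ge|J|$, while $J$ remains a feedback vertex set, so $I$ is a maximum acyclic set and $N_I(v,v)=1$; and since $a_v$ is joined only to $v$, no path through $I$ between distinct vertices of $J$ is created, so strong compatibility (which indeed concerns distinct pairs, as the paper's other uses of the notion confirm) is untouched. Your further simplification---one subdivision vertex per arc instead of $|J|+1$---is also sound, since strong compatibility only demands at least one path through $I$ for arcs and none for non-arcs. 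In short, your argument is not just a correct variant: it repairs a genuine gap in the paper's own proof of Corollary~\ref{cor:induced}.
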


\begin{proof}
We shall use a construction similar to that in the proof of Proposition \ref{prop:D_H}. Let $J$ be the vertex set of $D$, then let $G$ be the graph with $G[J] = D$ and such that, for any arc $(u,v)$ of $D$, $G$ contains $|J| + 1$ vertices $(u,v,1), \dots, (u,v,|J|+1)$ and the arcs $(u,(u,v,i))$ and $((u,v,i), v)$ for all $1 \le i \le |J| + 1$. Then the vertices outside of $J$ form a strongly compatible maximum acyclic set and $G$ is strictly linearly solvable.
\end{proof}

Corollary \ref{cor:induced} indicates that non-solvability is not a local property. One cannot isolate an induced subgraph of a graph and decide that this graph is not solvable.

Note that the converse of Theorem \ref{th:SLS_strongly} is not true: the complete bipartite graph $K_{2,2}$, illustrated in Figure \ref{fig:K22} is strictly linearly solvable but does not have any strongly compatible maximum independent set. The strict solution for $K_{2,2}$ is given, for any odd field, by
\begin{align*}
	f_1(x_3,x_4) &= \frac{x_3 + x_4}{2},\\
	f_2(x_3,x_4) &= \frac{x_3 - x_4}{2},\\
	f_3(x_1,x_2) &= x_1 + x_2,\\
	f_4(x_1,x_2) &= x_1 - x_2.	
\end{align*}
Notably, the graph $G$ on Figure \ref{fig:non-strictly} is a subgraph of $K_{2,2}$. Therefore, the thick arc, which is detrimental in $G$, becomes useful in $K_{2,2}$.

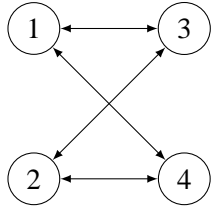
\begin{figure}[Ht]
\begin{center}
\begin{tikzpicture}
 \tikzstyle{every node}=[draw,shape=circle];

	\node (1) at (0,2) {1};
	\node (2) at (0,0) {2};
	\node (3) at (2,2) {3};
	\node (4) at (2,0) {4};

	\draw[latex-latex] (1) -- (3);
	\draw[latex-latex] (1) -- (4);
	\draw[latex-latex] (2) -- (3);
	\draw[latex-latex] (2) -- (4);
\end{tikzpicture}
\end{center}
\caption{$K_{2,2}$: a strictly linear solvable graph which is not edge-full.} \label{fig:K22}
\end{figure}

We generalise this observation to all balanced complete bipartite graphs.

\begin{proposition} \label{prop:K_kk}
The complete bipartite graph $K_{k,k}$ with $k \ge 1$ is strictly linearly solvable over all sufficiently large finite fields.
\end{proposition}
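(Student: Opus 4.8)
The plan is to reduce the statement to a purely linear-algebraic existence question and then settle it with an explicit matrix construction. Write the two parts of $K_{k,k}$ as $L$ and $R$, each of size $k$; since every edge joins $L$ to $R$, the independent set $L$ is a maximum acyclic set and $k(K_{k,k}) = k$. Any $f \in L(K_{k,k}, q)$ has the block form $f_L(x) = M x_R$ and $f_R(x) = N x_L$ for some matrices $M, N \in \GF(q)^{k \times k}$, and $G(f) = K_{k,k}$ holds precisely when every entry of $M$ and every entry of $N$ is nonzero (equivalently, invertible in $\GF(q)$). A point $(x_L, x_R)$ is fixed if and only if $x_L = M x_R$ and $x_R = N x_L$, so the fixed-point space is carried isomorphically onto $\ker(MN - I)$ by the map $x_L \mapsto (x_L, N x_L)$; hence $\mathrm{dim}\,\Fix(f) = k$ if and only if $MN = I$, i.e. $N = M^{-1}$. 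Thus the whole proposition reduces to exhibiting, over every sufficiently large field, an invertible $M$ such that both $M$ and $M^{-1}$ have all entries nonzero.

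For the construction I would take $M$ to be a Cauchy matrix. Fix $2k$ pairwise distinct elements $x_1,\dots,x_k,y_1,\dots,y_k \in \GF(q)$, which is possible as soon as $q \ge 2k$, and set $M_{ij} = (x_i - y_j)^{-1}$. Its determinant is the Cauchy determinant, a product of differences of the chosen (distinct, disjoint) elements, hence nonzero, so $M$ is invertible; and all entries of $M$ are manifestly nonzero. The decisive point is that $M^{-1}$ is also entrywise nonzero: the classical closed form for the inverse of a Cauchy matrix expresses each entry of $M^{-1}$ as a ratio of products, every factor of which is a difference $x_i - y_j$, or $x_i - x_{i'}$ with $i \neq i'$, or $y_j - y_{j'}$ with $j \neq j'$. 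By the distinctness of the $x$'s, the distinctness of the $y$'s, and the disjointness of the two families, each such factor is a nonzero element of $\GF(q)$, so each entry of $M^{-1}$ is nonzero.

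Setting $N := M^{-1}$ then yields $f \in L(K_{k,k}, q)$ with $G(f) = K_{k,k}$ and $MN = I$, so $\mathrm{dim}\,\Fix(f) = k$ and $h_L(K_{k,k}, q) \ge k$. Since $h_L(K_{k,k}, q) \le h(K_{k,k}, q) \le k(K_{k,k}) = k$, all three quantities equal $k$, which is exactly strict linear solvability. The main obstacle is precisely the verification that $M^{-1}$ is entrywise nonzero; if one prefers to avoid invoking the Cauchy inverse formula, an alternative is a genericity argument: the conditions $\det M = 0$, $M_{ij} = 0$, and $(M^{-1})_{ij} = 0$ (the last being the vanishing of a $(k-1) \times (k-1)$ minor of $M$) are each cut out by a nonzero polynomial in the entries of $M$, so their product is a nonzero polynomial and, by Schwartz--Zippel, admits a non-root in $\GF(q)^{k \times k}$ once $q$ exceeds its degree, furnishing a suitable $M$ for all sufficiently large $q$.
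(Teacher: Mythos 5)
Your proposal is correct, and it reduces the problem exactly as the paper does: both arguments boil down to exhibiting an invertible $M$ such that $M$ and $M^{-1}$ are entrywise nonzero, and then taking $f_R(x_L) = x_L M$, $f_L(x_R) = x_R M^{-1}$ (your verification that $\mathrm{dim}\,\Fix(f) = k$ forces $N = M^{-1}$ is in fact spelled out more carefully than in the paper, which only checks the lower bound). Where you genuinely diverge is in how that matrix is produced. The paper uses a non-constructive counting argument: for $q \ge 3k^2$ it bounds $|Z(k,q)| \ge \tfrac{2}{3}q^{k^2}$ and $|\GL(k,q)| \ge \tfrac{9}{10}q^{k^2}$ (via Euler's pentagonal number theorem), concludes that more than half of $\GL(k,q)$ is entrywise nonzero, and then exploits the fact that inversion is an involution on $\GL(k,q)$, so it cannot inject the majority set $\GL(k,q) \cap Z(k,q)$ into its complement. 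Your Cauchy-matrix construction is explicit and buys a much better field-size threshold, $q \ge 2k$ rather than $q \ge 3k^2$, at the cost of invoking the closed form for the inverse of a Cauchy matrix (whose factors are indeed all differences of distinct evaluation points, hence nonzero, so that step is sound); your Schwartz--Zippel fallback is also valid, though its degree bound (roughly $k^3$) is weaker than the Cauchy bound and comparable in spirit to the paper's estimate. Either route completes the proof; yours has the added merit of producing the solution concretely.
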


\begin{proof}
The case $k=1$ being clear, we assume $k \ge 2$ henceforth. We first prove that for all prime power $q \ge 3k^2$, there exists a $k \times k$ nonsingular matrix $M \in \GL(k,q)$ such that $M$ and $M^{-1}$ have no zero entry. Denote the set of $k \times k$ matrices over $\GF(q)$ with no zero entry as $Z(k,q)$; we then have
$$
	|Z(k,q)| = (q-1)^{k^2} \ge q^{k^2} \left( 1- \frac{k^2}{q} \right) \ge \frac{2}{3}q^{k^2},
$$
while the number of nonsingular matrices is famously lower bounded by 
$$
	|\GL(k,q)| \ge q^{k^2} \prod_{j = 1}^\infty (1 - q^{-j}) = q^{k^2} \sum_{l \in \mathbb{Z}} (-1)^l q^{-l(3l-1)/2} \ge q^{k^2} (1 - q^{-1} - q^{-2}) \ge \frac{9}{10} q^{k^2},
$$
using Euler's pentagonal number theorem. We obtain
$$
	|\GL(k,q) \cap Z(k,q) | \ge q^{k^2} \left( \frac{9}{10} + \frac{2}{3} - 1 \right) > \frac{1}{2}q^{k^2} > \frac{1}{2}|\GL(k,q)|.
$$
Hence $|\GL(k,q) \cap Z(k,q)| > |\GL(k,q) \setminus Z(k,q)|$. Thus, inversion cannot be an injection from $\GL(k,q) \cap Z(k,q)$ to $\GL(k,q) \setminus Z(k,q)$ and such a matrix $M$ exists.

Now let $q \ge 3k^2$ and $M$ such that $M, M^{-1} \in Z(k,q)$. Let the vertex set of $K_{k,k}$ be $L \cup R$ and consider the following linear coding function on $K_{k,k}$:
$$
	f_R(x_L) = x_L M, \quad f_L(x_R) = x_R M^{-1}.
$$
Then clearly every vector of the form $(x_L, x_R = x_L M)$ is fixed by $f$.
\end{proof}

We make a note on edge-full undirected graphs. An {\bf intersection model} for an undirected graph $G$ is an ordered pair $(S,X)$, where  $S$ is a set and $X = (X_1, \dots, X_n)$ is a collection of $n$ subsets of $S$ such that for all vertices $u, v$ of $G$, $uv$ is an edge if and only if $X_u \cap X_v \ne \emptyset$. The size of the intersection model is simply the size of $S$; the minimum size of an intersection model for $G$ is denoted as $\epsilon(G)$. Then $\epsilon(G) \ge \alpha(G)- i(G)$, where $i(G)$ is the number of isolated vertices of $G$. Indeed, any non-isolated vertex in a maximum independent set needs at least a singleton in the model; all these are disjoint, hence any intersection model must have at least that many elements.

\begin{proposition} \label{prop:edge-full}
Let $G$ be an undirected graph. Then the following are equivalent.
\begin{enumerate}
	\item \label{it:un_compatible} An independent set of $G$ is strongly compatible.

	\item \label{it:un_some} A maximum independent set of $G$ is strongly compatible.
	
	\item \label{it:un_all} All maximum independent sets of $G$ are strongly compatible. 

	\item \label{it:un_cover_edges} $G$ is edge-full.
	
	\item \label{it:un_model} $\epsilon(G) = \alpha(G) - i(G)$.
\end{enumerate}
\end{proposition}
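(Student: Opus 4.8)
The plan is to run the cycle of implications $(1)\Rightarrow(4)\Rightarrow(3)\Rightarrow(2)\Rightarrow(1)$ and to prove $(4)\Leftrightarrow(5)$ on its own. Two preliminary observations organise everything. First, since an independent set $I$ has no internal edge, any path from $u$ to $v$ through $I$ has exactly one internal vertex; thus ``there is a path from $u$ to $v$ through $I$'' means simply ``$u$ and $v$ have a common neighbour in $I$,'' and strong compatibility of $I$ reads: for all $u,v\notin I$, $uv$ is an edge if and only if $u$ and $v$ share a neighbour in $I$. Second, I read an edge-full family of $\alpha(G)$ cliques as one that covers every vertex as well as every edge (each isolated vertex forming its own clique); equivalently, all edges can be covered by $\alpha(G)-i(G)$ cliques.

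For $(1)\Rightarrow(4)$ I would start from a strongly compatible independent set $I$ and consider the closed neighbourhoods $C_i:=\{i\}\cup N(i)$ for $i\in I$, where $N(i)$ is the neighbourhood of $i$. The ``if'' half of strong compatibility makes each $C_i$ a clique, because any two neighbours of $i$ share the common neighbour $i\in I$ and are therefore adjacent; the ``only if'' half forces every non-isolated vertex outside $I$ to have a neighbour in $I$, so the sets $C_i$ together with one singleton for each isolated vertex not in $I$ cover all vertices and all edges. This family has at most $|I\cup\mathrm{Iso}|\le\alpha(G)$ members, where $\mathrm{Iso}$ is the set of isolated vertices; since any clique cover of the vertex set uses at least $\cp(G)\ge\alpha(G)$ cliques, the count is forced to equal $\alpha(G)$, which is edge-fullness (and incidentally shows $I\cup\mathrm{Iso}$ is maximum).

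The heart of the argument is $(4)\Rightarrow(3)$. Fix an edge-full cover by cliques $K_1,\dots,K_{\alpha(G)}$ and a maximum independent set $I$. Because these cliques cover all vertices, each clique meets $I$ in at most one vertex, and $|I|=\alpha(G)$, the incidence count between $I$ and the cliques is pinched to give a bijection: every clique contains exactly one vertex of $I$ and, crucially, every vertex of $I$ lies in exactly one clique. Both directions of strong compatibility then follow for $u,v\notin I$. If $uv$ is an edge, it lies in some $K_j$, and the unique $I$-vertex of $K_j$ is a common neighbour of $u$ and $v$. Conversely, if $u$ and $v$ have a common neighbour $w\in I$, then the edges $wu$ and $wv$ must both be covered by the single clique containing $w$, so $u$ and $v$ lie in one clique and $uv$ is an edge. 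The remaining implications $(3)\Rightarrow(2)\Rightarrow(1)$ are immediate, since a maximum independent set exists and is in particular an independent set.

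Finally, $(4)\Leftrightarrow(5)$ I would settle directly with intersection models, using only the stated bound $\epsilon(G)\ge\alpha(G)-i(G)$. From an edge-full cover, deleting the $i(G)$ forced isolated-vertex singletons leaves $\alpha(G)-i(G)$ cliques covering all edges; labelling each vertex by the set of these cliques that contain it gives an intersection model of size $\alpha(G)-i(G)$, whence $\epsilon(G)\le\alpha(G)-i(G)$ and equality holds. Conversely, from an optimal model of size $\alpha(G)-i(G)$ the ``owner sets'' of the ground-set elements are $\alpha(G)-i(G)$ cliques covering all edges and all non-isolated vertices, and adjoining the $i(G)$ isolated singletons recovers an edge-full cover. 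I expect the main obstacle to be the bijection step in $(4)\Rightarrow(3)$, and specifically the extraction that each vertex of $I$ lies in a \emph{unique} clique: this single-clique property is exactly what powers the ``common neighbour $\Rightarrow$ edge'' direction, and getting it requires the cover to span all vertices, which is why the correct bookkeeping of isolated vertices (matching $\alpha(G)$ rather than $\alpha(G)-i(G)$) must be in place first.
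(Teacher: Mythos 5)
Your proof is correct, and its core is the same as the paper's: you run the identical cycle $(1)\Rightarrow(4)\Rightarrow(3)\Rightarrow(2)\Rightarrow(1)$, proving $(1)\Rightarrow(4)$ by taking the closed neighbourhoods of the strongly compatible set as the clique family, and $(4)\Rightarrow(3)$ by pinching the incidence count so that the cliques and the vertices of a maximum independent set are in bijection, each $I$-vertex lying in a unique clique. Two differences are worth noting. First, you attach property (5) to (4) directly, translating between edge clique covers and intersection models; the paper instead attaches (5) to (2)/(3), building a model from a strongly compatible maximum independent set (taking $S = I\setminus U$ and $X_v = (\{v\}\cup N(v))\cap S$) and, conversely, reading strong compatibility off an optimal model. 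The two routes are equally routine. Second, and more substantively, your explicit reading of ``edge-full'' as a family of $\alpha(G)$ cliques covering every vertex as well as every edge (equivalently, all edges covered by $\alpha(G)-i(G)$ cliques) is not a cosmetic choice: under the paper's literal definition (arcs only), the proposition is actually false. For example, $C_5$ plus three isolated vertices has $\alpha(G)=5$, so its five edges are covered by five cliques and it is ``edge-full'' in the literal sense, yet no independent set of it is strongly compatible and $\epsilon(G)=5\neq\alpha(G)-i(G)=2$. The paper's own proof glosses over exactly this point: its assertion that covering all edges takes at least $\alpha(G)$ cliques, and its conclusion that every strongly compatible independent set is maximum, both fail when isolated vertices are present (for the latter, a single edge $ab$ plus an isolated vertex $c$ has $\{a\}$ strongly compatible but not maximum). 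Your bookkeeping --- the forced isolated-vertex singletons in the cover, and the corrected conclusion that $I\cup\mathrm{Iso}$ is maximum --- is precisely what repairs this, and your closing remark correctly identifies why the vertex-covering reading is what powers the unique-clique step in $(4)\Rightarrow(3)$.
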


\begin{proof}
Clearly, Property \ref{it:un_all} implies \ref{it:un_some}, which in turn implies \ref{it:un_compatible}. Moreover, if $I = \{i_1,\dots,i_m\}$ is a strongly compatible independent set (non necessarily maximum), then the neighbourhood of each $i_l$ is a clique. We claim that these $m$ cliques cover all edges in $G$. Indeed, there are no edges in $I$; any edge with one vertex in $I$ is clearly covered by these cliques; finally, for any edge $uv$ outside of $I$, then there is $i \in I$ such that $uv$ is in the clique corresponding to $i$. Conversely, it clearly takes at least $\alpha(G)$ cliques to cover all the vertices of $G$, and hence at least $\alpha(G)$ cliques to cover all edges of $G$. This shows that any strongly compatible independent set is maximum and that Property \ref{it:un_compatible} implies \ref{it:un_cover_edges}.

We now show that Property \ref{it:un_cover_edges} implies \ref{it:un_all}. Suppose all edges of $G$ are covered by $\alpha(G)$ cliques $c_1,\dots,c_\alpha$, then any maximum independent set $I$ contains one vertex $i_1, \dots, i_\alpha$ per clique; clearly, each $i_l$ belongs to exactly one clique $c_l$. Suppose $u$ and $v$ are vertices outside of $I$. If $uv$ is an edge, then it belongs to some clique $c_\beta$ and hence $u i_\beta, i_\beta v$ are edges in $G$. Conversely, if $uv$ is not an edge, then $u$ and $v$ cannot belong to a common clique and hence there is no vertex $i \in I$ such that $ui, iv$ are edges. Thus, $I$ is strongly compatible.

Clearly, Property \ref{it:un_some} implies \ref{it:un_model}: if $I$ is a strongly compatible maximum independent set, then let $S = I\setminus U$, with $U$ the set of isolated vertices of $G$, and $X_v = (v \cup \inn(v)) \cap S$. Conversely, if $G$ has a model $(S = \{s_1,\dots, s_\epsilon\}, X)$ of size $\epsilon = \alpha(G) - i(G)$, then if $I$ is a maximum independent set, we must have an enumeration $\{i_1,\dots,i_\epsilon\}$ of $I \setminus U$ such that $X_{i_1} = s_1, \dots, X_{i_\alpha} = s_\epsilon$. Thus, for any $u,v \notin I$, $uv$ is an edge if and only if $s_\beta \in X_u \cap X_v$ for some $\beta$, which is equivalent to $u i_\beta$ and $i_\beta v$ being edges, and $I$ is strongly compatible.
\end{proof}

We give an example of a digraph which is not edge-full and yet is strictly linearly solvable in Figure \ref{fig:example}. The set $\{1,2\}$ is a strongly compatible maximum acyclic set, hence by Theorem \ref{th:SLS_strongly} the graph is strictly linearly solvable. Since the graph is not undirected, it is not edge-full; moreover, we remark that $\{1,5\}$ is a maximum acyclic set which is not strongly compatible.

\begin{figure}[Ht]
\begin{center}
\begin{tikzpicture}
 \tikzstyle{every node}=[draw,shape=circle];

	\node (1) at (0,2) {1};
	\node (2) at (0,0) {2};
	\node (3) at (3,4) {3};
	\node (4) at (2,2) {4};
	\node (5) at (3,0) {5};

	\draw[-latex] (1) -- (2);
	\draw[latex-latex] (1) -- (3);
	\draw[latex-latex] (1) -- (4);
	
	\draw[-latex] (5) -- (1);
	\draw[-latex] (2) -- (5);

	\draw[latex-latex] (3) -- (4);
	\draw[latex-latex] (3) -- (5);
	\draw[latex-latex] (4) -- (5);
\end{tikzpicture}
\end{center}
\caption{Example of a non edge-full digraph which is strictly linearly solvable.} \label{fig:example}
\end{figure}
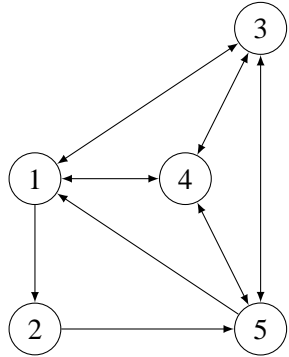

\section{Acknowledgment}

The authors would like to thank George Mertzios for interesting discussions leading to Proposition \ref{prop:edge-full}.

\bibliographystyle{IEEEtran}
\bibliography{g}

\end{document}